\DeclareMathOperator{\Fin}{\mathit{Fin}}
\newcommand{\circtop}[1]{\overset{\text{\tiny$\circ$}}{#1}}
\newcommand{\mon}[2]{\mathrm{mon}_{#1}\left(#2\right)}
\newcommand*{\FN}{\mathit{FN}}
\newcommand*{\FQ}{\mathit{FQ}}
\newcommand*{\BQ}{\mathit{BQ}}
\newcommand{\choice}[2]{{\mathrm{Ch}_{{#1}}^{{#2}}}}
\let\uhr\upharpoonright
\renewcommand*{\upharpoonright}{\hspace{-.07cm}\uhr\hspace{-.07cm}}
\newtheorem{thm}{Theorem}
\newtheorem{prop}[thm]{Proposition}
\newtheorem{lemma}[thm]{Lemma}
\theoremstyle{remark}
\theoremstyle{definition}
\newtheorem{defn}[thm]{Definition}
\newtheorem{ex}{Example}
\newtheoremstyle{axiom1}
{3pt}
{3pt}
{\itshape}
{}
{\bfseries\itshape}
{}
{.5em}
{}
\newtheoremstyle{axiom2}
{3pt}
{3pt}
{\itshape}
{}
{\bfseries\itshape}
{\\}
{.5em}
{}
\theoremstyle{axiom1}
\title{Extending Games beyond the Finite Horizon\thanks{The authors thank Akihiko Matsui for his advices and criticims. This paper would not exist at all if he had not suggested that the authors' framework could be applied to the Centipede games.}}
\author{Kiri Sakahara\thanks{Yokohama National University and Kanagawa University, Kanagawa, Japan.} \and Takashi Sato\thanks{Toyo University, Tokyo, Japan.}}
\date{}
\begin{document}

\maketitle

\begin{abstract}
This paper argues that the finite horizon paradox, where game theory contradicts intuition, stems from the limitations of standard number systems in modelling the cognitive perception of infinity.
To address this issue, we propose a new framework based on Alternative Set Theory (AST).
This framework represents different cognitive perspectives on a long history of events using distinct topologies.
These topologies define an indiscernibility equivalence that formally treats huge, indistinguishable quantities as equivalent.
This offers criterion-dependent resolutions to long-standing paradoxes, such as Selten's chain store paradox and Rosenthal's centipede game.
Our framework reveals new intuitive subgame perfect equilibria, the characteristics of which depend on the chosen temporal perspective and payoff evaluation.
Ultimately, by grounding its mathematical foundation in different modes of human cognition, our work expands the explanatory power of game theory for long-horizon scenarios.
\end{abstract}

\section{Introduction}
Among the problems regarding the discrepancy between the outcomes predicted by the game theory and those consistent with our intuition, Rubinstein \cite{br} draws attention to the importance of the ``finite horizon paradox.''
The most notable examples of the paradox, as Rubinstein mentions, are ``the finitely repeated Prisoner's Dilemma, Rosenthal's centipede game, and Selten's chain store paradox.''\footnote{l.2-4 p.165 of Rubinstein \cite{br}.}
While progress in the field of repeated games has contributed significantly to organising the first problem from a particular point of view, insufficient progress has been made on the last two.
The present paper aims to address the problem inherent in all of these three by focusing on the question of how numbers are constructed in the traditional set theory.

The standard number system, based on the Zermelo-Fraenkel set theory (ZF for short), has been adopted without much scrutiny of its suitability for dealing with numerical aspects of the phenomenal world. 
However, when it comes to deal with subjective reality, the system has some weaknesses, especially when it comes to phenomena involving infinity.

Suppose, for example, there is a collection of a billion one-dollar bills in bulk.
It would be obvious to almost anyone that it is indistinguishable from the other, which consists of a billion plus one.
The two may appear to be infinitely many, and therefore indiscernible.
The same is true of any pair of extremely small numbers, which are also indiscernible to us.
The authors believe that exactly the same mechanism that makes two numbers seem indiscernible may lie behind the finite horizon paradox.

To deal with these phenomena, therefore, it is not appropriate to adopt ZF as the basis of a number system, since it has no mechanism to consider the two as indiscernible and thus cannot adequately deal with these problems. In order to grasp the mechanism behind them, it is necessary to introduce a number system that can adequately represent such phenomena.

The present paper adopts Alternative Set Theory (AST for short) as a new basis for a slightly different number system.
It allows to construct a system in which two extremely huge or small numbers are regarded as indiscernible from each other.
A general framework is also introduced which allows all three games mentioned above to be seen as special cases.
Within this framework, the paradoxes inherent in the games mentioned at the beginning are resolved in their proper context.

\section{A System of Numbers}

Let us start with the construction of a number system according to  Vop\v{e}nka \cite{ast}.
The class of \textit{natural numbers} $N$ is defined as follows:
\[
N\ =\ \left\{x\ ;\ 
\begin{matrix}
\left(\forall y\in x\right) \left(y\subseteq x\right)\\
\wedge\left(\forall y,z\in x \right) \left(y\in z \vee y=z \vee z\in y\right)
\end{matrix}
\right\},
\]
while the class of \textit{finite natural numbers} $\FN$ consists of the numbers represented by finite sets
\[
\FN \ =\ \left\{x\in N\ ;\ 
\Fin(x)\right\}
\]
where $\Fin(x)$ means that every subclass of $x$ is a set.
Note that $N\setminus\FN\ne\emptyset$, or $\FN$ is a proper subclass of $N$, since there are huge natural numbers.
All these huge natural numbers include $\FN$ as a subclass, and so by definition they cannot be finite natural numbers.

The class of all integers $Z$ and that of all rational numbers are defined respectively as:
\[
Z\ =\ N\cup \left\{ -a;\,
 a\in N
\right\}
\qquad\text{and}\qquad
Q\ =\ \left\{
\frac{x}{y}\ ;\ x,\,y\in Z \wedge y\ne 0
\right\}.
\]
$\BQ\subseteq Q$ denotes the class of \textit{bounded rational numbers} and $\FQ\subseteq BQ$ the class of \textit{finite rational numbers}, i.e. 
\[
\BQ=\{x\in Q\ ;\ \left(\exists i\in\FN\right)
  \left(|x|\leq i\right)
\}
\quad\text{and}\quad
\FQ\ =\ \left\{
\frac{x}{y}\ ;\ 
x,y\in\FN \wedge y\ne 0
\right\}.
\]

Real numbers are defined in AST as an equivalence class of bounded rational numbers.
The reason for this construction is in the human inability to distinguish between two rational numbers that are close to each other.
This idea is captured by the \textit{indiscernibility equivalence}, $\doteq$, on the class $Q$ of all rational numbers.
One of the definitions of the indiscernibility equivalence $\doteq$ is given as: 
\[
p\doteq q\quad \equiv\quad
 \begin{pmatrix}
  \left( \exists k \right)
  \left( \forall i>0\right)
  \left(|p|<k \wedge |p-q|<\frac{1}{i}\right)\\[.5em]
  \vee 
  \left(\forall k\right)
  \left(
  \left( p>k \wedge q>k\right)  \vee 
  \left(p<-k \wedge q<-k\right)
  \right)
 \end{pmatrix}
\]
where the letters $i,j,k$ denote finite natural numbers, i.e. $i,j,k\in\FN$ for notational convenience.
For any $q\in Q$ the notation $\mon{}{q}=\{s\in Q\,;\, s\doteq q\}$, is said to be a \textit{monad} of $q$, representing the class of all rational numbers that are indiscernible from $q$.
The real number $a$ is denoted as a monad $\mon{}{q}$ of some rational number $q$.
Two limit cases are denoted as:
\[
\infty\ =\ 
\{q\in Q\,;\, 
 \left(\forall i\right)
 \left(q>i\right)
\}\quad\text{ and }\quad
-\infty\ =\ 
\{q\in Q\,;\,
 \left(\forall i\right)
 \left(q<-i\right)
\}.
\]
The class of all real numbers $R$ and that of plus and minus infinity $R^+$ are defined as:
\[
R\ \equiv\ \left\{\mon{}{x}\ ;\ 
x\in \BQ
\right\}\ =\ \BQ/{\doteq}
\quad\text{ and }\quad
R^+\ \equiv\ R\cup\{-\infty,\infty\}.
\]
A \textit{real continuum} is denoted by $\mathscr{R}=\langle Q,\doteq\rangle$, where a \textit{continuum} is a pair of classes $\mathscr{C}=\left\langle C,\doteq_C\right\rangle$, where a set-theoretically definable class $C$ is called as a support of $\mathscr{C}$.

\section{Extensive Games with Perfect Information}

Let us review the basic concepts of extensive games. 
In order to summarise these concepts, Osborne and Rubinstein \cite{OR} is exclusively consulted.
Almost all definitions are based on the book, with a few exceptions necessary to conform to the notation in our framework.

\begin{defn}[Definition 89.1 of \cite{OR}]\label{def_ext}
  An \textit{extensive game with perfect information} has the following components
  \begin{itemize}
  \item A class $I$ of \textit{players}.
  \item A class $H$ of sequences of \textit{actions}, denoted by $(a_k)_{k=1}^\kappa=\{\langle k,a_k\rangle\,;\,k\in\{1,\ldots,\kappa\}\}$,\footnote{The notation of sets here follows the style of Kunen \cite{Kunen}.} (where $H$ is finite) that satisfies the following.\footnote{In Osborne and Rubinstein \cite{OR}, the number of periods in which each action occurs is indicated by superscripts.  
  This distinguishes them from the numbers representing each player, which are described by subscripts.
  In this article, however, both are described by subscripts in order to distinguish them from the numbers representing the  periods of the whole histories.
  The definition of whole histories will be provided later.
  }.
    \begin{itemize}
    \item The empty sequence $\emptyset$, so called \textit{initial history}, is a member of $H$.
    \item If $(a_k)_{k=1}^\kappa\in H$ (where $\kappa\in \FN$) and $\lambda\in \kappa$ then  $(a_k)_{k=1}^\lambda\in H$.
    \end{itemize}
    Each member of $H$ is a \textit{history}; each component of a history is an ordered pair of a period $k$ and an \textit{action} taken at $k$.
    A history $(a_k)_{k=1}^\kappa \in H$ is \textit{terminal} if there is no $a_{\kappa+1}$ such that $(a_k)_{k=1}^{\kappa+1}\in H$.
    The set of terminal histories is denoted $Z$.
  \item A function $P$ that assigns to each nonterminal history (each member of $H\setminus Z$) a member of $I$
    ($P$ is the \textit{player function}, $P(h)$ being the player who takes an action after the history $h$).
  \item For each player $i\in I$ a preference relation $\succsim_i$ on $Z\cup\{\emptyset\}$\footnote{A history preferred to $\emptyset$ can be interpreted as being better than nothing.} (the \textit{preference relation} of player $i$).
  \end{itemize}
\end{defn}


The class of actions from which the player $P(h)$ chooses after a history $h\in H$ of extended games is denoted by
\[
A(h)\ =\ \{a\,;\, h^\frown (a)\in H\}.
\]
where $h^\frown(a)$ concatenates the action $a$ to the sequence of actions $h=(a_1,\ldots, a_{k})$ where $k<\kappa$ is satisfied, that is, $(a_1,\ldots,a_{k},a)$.
It is also used to describe the history $h$ followed by $h'=(b_1,\ldots,b_{k'})$ where $k+k'\leq \kappa$ is satisfied, as $h^\frown h'$ abbreviated $(a_1,\ldots,a_k,b_1,\ldots,b_{k'})$.

The quadruple $\Gamma=\langle I,H,P,(\succsim_i)\rangle$ is said to be an \textit{extensive game form with perfect information}.

The strategy of player $i$ is given as follows. 

\begin{defn}[Definition 92.1 of \cite{OR}]
  A \textit{strategy $s_i$ of player} $i\in I$ in an extensive game with perfect information $\Gamma=\langle I,H,P,(\succsim_i)\rangle$ is a function that assigns an action in $A(h)$ to each nonterminal history $h\in H\setminus Z$ for which $P(h)=i$.
\end{defn}

For each strategy profile $s=(s_i)_{i\in I}$ in the extensive game $\langle I,H,P,(\succsim_i)\rangle$, the \textit{outcome} $O(s)$ of $s$ is defined as the terminal history that results when each player $i\in I$ follows the strategy $s_i$.
Briefly, $O(s)$ denotes the terminal history $(a_1,\ldots,a_\kappa)\in Z$ that satisfies $s_{P(a_1,\ldots,a_k)}((a_1,\ldots,a_k))=a_{k+1}$  for each $k\in \{1,\ldots,\kappa-1\}$.

Two fundamental equilibrium concepts are described by this function.
The first one is Nash equilibrium.

\begin{defn}[Definition 93.1 of \cite{OR}]\label{Nash}
  A \textit{Nash equilibrium of an extensive game with perfect information} $\langle I, H, P, (\succsim_i)\rangle$ is a strategy profile $s^*$ such that for each player $i\in I$ the following condition is satisfied
  \[
  O(s^*_{-i}, s^*_i) \succsim_i   O(s^*_{-i}, s_i) \text{ for each strategy } s_i \text{ of player }i.
  \]
\end{defn}

To introduce the second one, subgame perfect equilibrium, it is necessary to introduce \textit{subgames} in advance.

\begin{defn}[Definition 97.1 of \cite{OR}]
  The \textit{subgame of the extensive game with perfect information} $\Gamma=\langle I,H,P,(\succsim_i)\rangle$ \textit{that follows the history $h$} is the extensive game $\Gamma(h)=\langle I,H|_h,P|_h,({\succsim_i}|_h)\rangle$, where $H|_h$ is the set of sequences $h'$ of actions for which $h^\frown h'\in H$, $P|_h$ is defined by $P|_h(h')=P(h^\frown h')$ for each $h'\in Z|_h$, and ${\succsim_i}|_h$ is defined by $h'{\succsim_i}|_h h''$ if and only if $h^\frown h'\succsim_i h^\frown h''$.
\end{defn}

Given a strategy $s_i$ of player $i\in I$ and a nonterminal history $h\in H\setminus Z$ in the extensive game $\Gamma$, $s_i|_h$ denotes the strategy $s_i|_h(h')=s_i(h^\frown h')$ for each $h'\in H|_h$.
The outcome function of $\Gamma(h)$ is denoted as $O_h$.

Finally, the concept of subgame perfect equilibrium is given as follows.

\begin{defn}[Definition 97.2 of \cite{OR}]\label{SPNE}
  A \textit{subgame perfect equilibrium of an extensive game with perfect information} $\Gamma=\langle I,H,P,(\succsim_i)\rangle$ is a strategy profile $s^*$ such that for every player $i\in I$ and every nonterminal history $h\in H\setminus Z$ for which $P(h)=i$ the following condition is satisfied
  \[
  O_h(s^*_{-i}|_h, s^*_i|_h)\ {\succsim_i}|_h \ O_h(s^*_{-i}|_h, s_i)
  \]
  for every strategy $s_i$ of player $i$ in the subgame $\Gamma(h)$.
\end{defn}

\section{Generalised Repeated Games}

Given a series of classes of connected terminal histories $h_1,\ldots,h_\tau\in C\subseteq Z$ that are connected to the next period,
a sequence of histories $\mathbf{h}=(h_1,\ldots,h_\tau)$ denotes a \textit{$\tau$-whole history} and ${C}^\tau=\underset{\tau \text{ times}}{\underbrace{C\times\cdots\times C}}$, where ${C}^0=\emptyset$ denotes a class of all $\tau$-whole histories.
The $\tau$-whole history that repeats a terminal history $h$ $\tau$ times is also denoted by $h^\tau$, where $h^0=\emptyset$.
As in the extensive game setting, $\mathbf{h}^\frown\mathbf{j}$ also abbreviates the whole history
$(h_1\ldots,h_{t}^\frown j_0,\ldots,j_{t'})$ if $h_t$ is a non-terminal history and $j_0$ is a connected terminal history of the subgame $\Gamma(h_t)$, i.e. $j_0\in C|_{h_t}$, where $\mathbf{h}=(h_1,\ldots,h_t)$ and $\mathbf{j}=(j_0,\ldots,j_{t'})$ satisfying $t+t'\leq \tau$.
It also abbreviates $(h_1,\ldots,h_{t},j_1,\ldots,j_{t'})$ if $h_t$ is a connected terminal history, i.e. $h_t\in C$, where $\mathbf{h}=(h_1,\ldots,h_t)$ and $\mathbf{j}=(j_1,\ldots,j_{t'})$ satisfying $t+t'\leq \tau$.

\begin{defn}
  Let $\Gamma=\langle I,H,P,(\succsim_i)\rangle$ be an extensive game, called a \textit{constituent game},
  and assume that $I$ is divided into two disjoint subclasses: the class of \textit{core players} $I_{\text{cor}}$ and the class of homogeneous \textit{outside players} $I_{\text{out}}$.
  Then, a \textit{$\tau$-repeated game of $\Gamma$} is an extensive game with perfect information $\mathbf{\Gamma}^\tau=\langle \mathbf{I}^\tau, \mathbf{H}^\tau, \mathbf{P}^\tau,(\succsim^\tau_i)\rangle$ where
  \begin{itemize}
  \item $\mathbf{I}^\tau= I_{\text{cor}}\cup \bigsqcup_{t\in\{1,\ldots,\tau\}} I_{{\text{out}}}$  where $I_{\text{cor}}\cup I_{\text{out}}=I$, $I_{\text{cor}}\cap I_{\text{out}}=\emptyset$ and $I_{\text{cor}}\ne\emptyset$
  \item $\mathbf{H}^\tau=\bigcup_{t\in \{1,\ldots,\tau-1\}}\left(C^t\times H\right)$ where $C\subseteq Z$ is a class of all \textit{connected terminal histories} of $\Gamma$ that are connected to a next period 
  \item $\mathbf{P}^\tau(\mathbf{h})=P({h}_k)$ for $\mathbf{h}=(h_1,\ldots,h_k)\in \mathbf{H}^\tau$
  \item $\succsim_i^\tau$ is a preference relation on $\mathbf{Z}^\tau=\bigcup_{t\in \{1,\ldots,\tau\}}Z^t$, denoting a class of all \textit{terminal histories} ($\mathbf{C}^\tau=\bigcup_{t\in \{1,\ldots,\tau-1\}}C^t$ denotes all connected terminals, which is a subclass of $\mathbf{Z}^\tau$).  \end{itemize}
\end{defn}

\begin{defn}
  A preference relation $\succsim_i^\tau$ satisfies \textit{weak separability} if the relation $\mathbf{h}'^\frown ({j})^\frown \mathbf{h}''' \succsim_i^\tau \mathbf{h}'^\frown ({j}')^\frown \mathbf{h}'''$ holds for any pair of whole histories $\mathbf{h}'\in{C}^{k}$ and $\mathbf{h}'''\in{C}^{\tau-k-1}$ where $k\in\{1,\ldots,\tau-1\}$, and a pair of histories ${j}, {j}'\in {C}$ satisfying $\mathbf{h}'^\frown({j})^\frown\mathbf{h}''',\ \mathbf{h}'^\frown({j}')^\frown\mathbf{h}'''\in \mathbf{Z}^\tau$ and ${j}\succsim_i {j}'$.
\end{defn}
  
  It may seem that strengthening the weak separability to strict separability is harmless.
But it is not, especially if a preference relation $\succsim_i$ is \textit{compact}, which means that for any infinite set $u$ there are $x,y\in u$ satisfying $x\ne y$ and $x\sim_i y$.
The next proposition confirms this problem.

\begin{prop}\label{ss}
  If a preference relation $\succsim_i^\tau$ on $\mathbf{Z}^\tau$ is compact, 
  it cannot satisfy strict separability, which replaces $\succsim_i$ with strict preferences $\succ_i$.\end{prop}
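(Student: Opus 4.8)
The plan is to argue by contradiction: assuming $\succsim_i^\tau$ is compact and, against the claim, strictly separable, I would exhibit an infinite subset of $\mathbf{Z}^\tau$ on which $\succ_i^\tau$ is a strict linear order, which no compact relation can carry. The claim is only of interest when the constituent game $\Gamma$ actually has a strict preference among connected terminals (otherwise strict separability is vacuous) and when the horizon is huge (if $\tau\in\FN$ then, $H$ being finite, $Z$ and hence $\mathbf{Z}^\tau$ are finite and compactness is vacuous), so I would first fix $j,j'\in C$ with $j\succ_i j'$ and take $\tau\in N\setminus\FN$.

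Next I would manufacture a descending chain. For $m\in\{0,1,\dots,\tau-1\}$ set
\[
\mathbf{h}^{(m)}\ =\ \bigl(j,\ \underbrace{j',\dots,j'}_{m},\ \underbrace{j,\dots,j}_{\tau-1-m}\bigr),
\]
the $\tau$-whole history whose first coordinate is $j$, whose next $m$ coordinates are $j'$, and whose last $\tau-1-m$ coordinates are $j$. Since $C\subseteq Z$, each $\mathbf{h}^{(m)}$ is a legitimate member of $\mathbf{Z}^\tau$: its first $\tau-1$ coordinates lie in $C$ and its last lies in $Z$. For $m\in\{0,\dots,\tau-2\}$ I would write $\mathbf{h}^{(m)}=\mathbf{h}'^\frown(j)^\frown\mathbf{h}'''$ and $\mathbf{h}^{(m+1)}=\mathbf{h}'^\frown(j')^\frown\mathbf{h}'''$ with $\mathbf{h}'=(j)^\frown(j')^{m}\in C^{m+1}$ and $\mathbf{h}'''=(j)^{\tau-m-2}\in C^{\tau-(m+1)-1}$; the pivot then sits at coordinate $m+2$, so the index $k=m+1$ ranges over $\{1,\dots,\tau-1\}$ exactly as the schema demands, and strict separability (with $j\succ_i j'$) delivers $\mathbf{h}^{(m)}\succ_i^\tau\mathbf{h}^{(m+1)}$. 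As $\succ_i^\tau$ is the strict part of the preference relation $\succsim_i^\tau$, it is transitive, whence $\mathbf{h}^{(m)}\succ_i^\tau\mathbf{h}^{(m')}$ whenever $m<m'$; in particular the $\mathbf{h}^{(m)}$ are pairwise distinct and no two of them stand in $\sim_i^\tau$.

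Finally I would invoke compactness. The assignment $m\mapsto\mathbf{h}^{(m)}$ is a set-theoretically definable injection whose domain is the set $\tau=\{0,1,\dots,\tau-1\}$, so its range $U=\{\mathbf{h}^{(m)}\;;\;m<\tau\}$ is a set with $\tau$ elements. Since $\tau\notin\FN$, $\tau$ contains $\FN$ — which is not a set — as a subclass, so $\tau$, and therefore $U$, is infinite. Applying compactness of $\succsim_i^\tau$ to $U$ produces distinct $x,y\in U$ with $x\sim_i^\tau y$, contradicting the previous paragraph. Hence no compact $\succsim_i^\tau$ can be strictly separable.

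The only genuine difficulty is the combinatorial bookkeeping that keeps the construction inside the scope of the separability schema: the swapped coordinate must never be the first one (the schema licenses pivots only at positions $2,\dots,\tau$, i.e.\ $k\in\{1,\dots,\tau-1\}$), each $\mathbf{h}^{(m)}$ and each intermediate concatenation must be verified to be a bona fide whole history in $\mathbf{Z}^\tau$, and one must confirm in the $\Fin$ sense of AST that the index set $\tau$ is genuinely infinite so that compactness has a set to act on. The order-theoretic ingredients — transitivity of $\succ_i^\tau$ and the incompatibility of $\succ_i^\tau$ with $\sim_i^\tau$ — are immediate from $\succsim_i^\tau$ being a preference relation.
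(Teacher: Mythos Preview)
Your proof is correct and follows essentially the same approach as the paper's: build a chain of huge length in $\mathbf{Z}^\tau$ by single-coordinate swaps between two strictly ranked connected terminals, invoke strict separability for adjacent strict preferences, and contradict compactness. Your construction is actually slightly more careful than the paper's in keeping the first coordinate fixed so that every swap sits at a position $k+1$ with $k\in\{1,\dots,\tau-1\}$, matching the literal scope of the separability schema; the paper's chain $h^\tau\succ_i^\tau\cdots\succ_i^\tau j^\tau$ swaps position~1 as well, but the argument is otherwise identical.
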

\begin{proof}
  Suppose that $h\succ_i j$ for some $h,j\in C$ and ${h}^\tau\succ_i^\tau {{h}^{\tau-1}}^\frown( {j})\succ_i^\tau\cdots\succ_i^\tau ({h})^\frown {j}^{\tau-1}\succ_i^\tau {j}^{\tau}$ holds for $\tau\in N\setminus\FN$.
  Since $\sim_i$ is compact, for any set $\mathbf{X}\subseteq \mathbf{Z}^\tau$ consisting of a huge number of whole histories and for any whole history $\mathbf{h}'\in \mathbf{X}$, there exists $\mathbf{h}''\ne \mathbf{h}'$ which is indifferent from $\mathbf{h}'$. 
  This implies that there are $\beta,\gamma\in \tau\setminus\FN$ satisfying ${h^{\tau-\beta}}^\frown j^{\beta}\sim_i^\tau {h^{\tau-\gamma}}^\frown j^\gamma$.
  This is a contradiction.
\end{proof}

In AST, it is assumed that given an infinite number of alternatives, some of them will always be considered the same.
Compactness represents this property.
It also plays a very important role in representing payoffs.

The strategy of player $i$ in a $\tau$-repeated game of ${\Gamma}$ is given by as follows.

\begin{defn}
  A \textit{strategy $\mathbf{s}_i$ of player} $i\in I$ in a $\tau$-repeated game of $\Gamma$ with perfect information $\mathbf{\Gamma}^\tau$ is a function that assigns an action in ${A}({h}_k)$ to each nonterminal whole history $\mathbf{h}=(h_1,\ldots, h_k)\in\mathbf{H}^\tau\setminus\mathbf{Z}^\tau$ which satisfies ${P}({h}_k)=i$ and an action in ${A}(\emptyset)$ to each connected terminal history $\mathbf{h}\in\mathbf{C}^\tau$ which satisfies ${P}(\emptyset)=i$.
\end{defn}

Given that $s$ is a strategy of the game $\Gamma$, ${s}^\tau$ denotes the strategy that decides actions according to $s$ for every $t$-th game of $t\in \{1,\ldots,\tau\}$.

The outcome $\mathbf{O}(\mathbf{s})$ of $\mathbf{\Gamma}^\tau$ is given by the unconnected terminal whole history
$(h_1,\ldots,h_{\tau})\in\mathbf{Z}^\tau\setminus\mathbf{C}^\tau$ which satisfies $\mathbf{s}_{\mathbf{P}^t((h_1,\ldots,h_{t}))}(h_1,\ldots,h_t)=h_{t+1}(1)$ for each $t\in\{1,\ldots,\tau-1\}$ and $\mathbf{s}_{\mathbf{P}^\tau((h_1,\ldots,h_t\, \upharpoonright\,  \ell))}((h_1,\ldots,h_t\upharpoonright\ell))=h_t(\ell)$\footnote{This is the usual notation in set theory, where a sequence $h_t=(a_1,\ldots,a_\kappa)$ for each $t\in\{1,\dots,\tau\}$ is denoted by a function from $\{1,\ldots,\kappa\}$ to a class of actions and the $\ell$-th element $a_\ell$ of $h_t$ is given by $h_t(\ell)$.
It is also common to represent the function with its domain restricted to $\{1,\ldots,\ell-1\}$ as $h_t\upharpoonright\ell$ and the cardinality of $h_t$ as $|h_t|$. See Kunen \cite{Kunen} for details.} for each $\ell\leq|h_t|$ and $t\in\{1,\ldots,\tau\}$.

  Nash equilibrium can be defined in the same way as in Definition \ref{Nash} by replacing $O(s)$ by $\mathbf{O(s)}$.
  The subgame of $\tau$-repeated games is given as follows.

\begin{defn}
  The \textit{subgame of the $\tau$-repeated game with perfect information} $\mathbf{\Gamma}^\tau=\langle \mathbf{I}^\tau,\mathbf{H}^\tau,\mathbf{P}^\tau,(\succsim_i^\tau)\rangle$ \textit{following the whole history $\mathbf{h}$} is the $\tau-|\mathbf{h}|$-repeated (or $\tau-|\mathbf{h}|+1$-repeated) 
  game $\mathbf{\Gamma}^\tau\mathbf{(h)}=\langle \mathbf{I}^\tau,\mathbf{H}^\tau|_{\mathbf{h}},\mathbf{P}^\tau|_\mathbf{h},({\succsim_i^\tau}|_\mathbf{h})\rangle$, where $\mathbf{H}^\tau|_\mathbf{h}$ is the set of whole histories $\mathbf{h}'$ that satisfy $\mathbf{h}^\frown \mathbf{h}'\in \mathbf{H}^\tau$, $\mathbf{P}^\tau|_\mathbf{h}$ is defined by $\mathbf{P}^\tau|_\mathbf{h}(\mathbf{h}')=\mathbf{P}^\tau(\mathbf{h}^\frown \mathbf{h}')$ for each $\mathbf{h}'\in \mathbf{H}^\tau|_\mathbf{h}$, and ${\succsim_i^\tau}|_\mathbf{h}$ is defined by $\mathbf{h}'{\succsim_i^\tau}|_\mathbf{h} \mathbf{h}''$ if and only if $\mathbf{h}^\frown \mathbf{h}'\succsim_i^\tau \mathbf{h}^\frown \mathbf{h}''$.
\end{defn}

Given a strategy $\mathbf{s}_i$ of player $i\in I$ and a nonterminal whole history $\mathbf{h}\in \mathbf{H}^\tau\setminus \mathbf{Z}^\tau\cup \mathbf{C}^\tau$, $\mathbf{s}_i|_{\mathbf{h}}$ denotes the strategy $\mathbf{s}_i|_{\mathbf{h}}(\mathbf{h}')=\mathbf{s}_i(\mathbf{h}^\frown \mathbf{h}')$ for each $\mathbf{h}'\in \mathbf{H}^\tau|_\mathbf{h}$.
The outcome function  of $\mathbf{\Gamma}(\mathbf{h})$ is also defined as $\mathbf{O}_{\mathbf{h}}$.

The subgame perfect equilibrium is also defined as in Definition \ref{SPNE} by replacing $O(s)$ by $\mathbf{O(s)}$.
However, to guarantee that the collection $s^{*\tau}$ of strategies that repeat a subgame perfect strategy profile $s^*$ of $\Gamma$ $\tau$ times is also that of the $\tau$-repeated game too, it is necessary to introduce the following condition.

\begin{defn}
  A preference relation $\succsim_i^\tau$ satisfies \textit{huge transitivity} if $\kappa$ is huge and $(\mathbf{h}_k)_{k=1}^\kappa$ is a chain of $\succsim_i^\tau$, i.e.  $\mathbf{h}_{k+1}\succsim_i^\tau \mathbf{h}_{k}$ for all $k\in\{1,\ldots, \kappa-1\}$, then for any $\ell,m\in \{1,\ldots,\kappa\}$ and $\ell\geq m$, the relation $\mathbf{h}_\ell\succsim_i^\tau \mathbf{h}_m$ holds.
\end{defn}

It may seem that huge transitivity is trivially satisfied.
But it is not if the numbers are given according to the AST setting.
For example, for any huge $\tau\in N\setminus \FN$, $\frac{t}{\tau}\doteq \frac{t+1}{\tau}$ holds for all $t\in\{1,\ldots,\tau-1\}$.
However, $\frac{0}{\tau}$ and $\frac{\tau}{\tau}$ are discernible, since $\frac{0}{\tau}=0\not\doteq 1=\frac{\tau}{\tau}$.
The huge transitivity condition prevents such a situation. 

It is also worth noting that the effectiveness of backward induction cannot be guaranteed without huge transitivity.
In fact, it may happen that $j^\tau$ is strictly preferable to $h^\tau$ even if the history $h$ is preferable to $j$ in its constituent game.
Therefore, it is necessary to assume the huge transitivity condition when using the backward induction procedure.
\\

Let us now look at some examples to see how these concepts and conditions work.

\begin{ex}[Chain Store Game]\label{cs}
A chain store game, originating from Selten \cite{chainstore}, consists of $T+1$ players: a chain store (player CS) and $T$ local stores (player $t$ for each $t\in \{1,\ldots,T\}$) in different cities.
The chain store also has its branches in all $T$ cities.
Each local store plans whether to open a second store in its own city, one by one from 1 to $T$, and the chain store is forced to choose whether to react `cooperatively' (C) or `aggressively' (A) if the local store decides to open a second store.

This situation can be captured by a constituent game with only two players: the chain store and a local store.
The core player is CS, while player 1 is the outside player.
The game tree is shown in Figure \ref{con_cs}.

\begin{figure}[htbp]
  \[
  \centerline{
  \xymatrix@=10pt@M=-.10pt{
    & &&
    \raise0.2ex\hbox{\textcircled{\scriptsize{1}}}
    \ar@{-}[ld]_{\text{in}}
    \ar@{-}[rrrdd]^(.3){\text{out}}
    \\
    & & \raise0.2ex\hbox{\textcircled{\tiny{CS}}}
    \ar@{-}[ld]_(.7){\text{C}}
    \ar@{-}[rd]^(.7){\text{A}}
    &&&&&
    \\
    & && && &&
    \\
  }}
  \]
\caption{A tree of the constituent game of a chain store game.\label{con_cs}}
\end{figure}
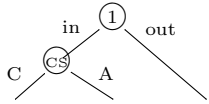

The chain store is set to prefer (out) to (in, C), while the local store is set to  prefer (in, C) to (out).
Both players prefer (in, A) the least.
\[
    (\text{out})\succsim_{\text{CS}}(\text{in},\text{C})\succsim_{\text{CS}}(\text{in},\text{A})\hspace{1cm}
    (\text{in},\text{C})\succsim_{\text{1}}(\text{out})\succsim_{\text{1}}(\text{in},\text{A}).
\]

The subgame perfect equilibrium of the game is uniquely given as $s_1^*(\emptyset)=\text{in}$ and $s_{\text{C}}^*(\text{in})=\text{C}$, since the chain store prefers $(\text{in},\text{C})$ to $(\text{in},\text{A})$, and the local store prefers $(\text{in},\text{C})$ to $(\text{out})$. 

Now let us extend this game to the 2-repeated game with weakly separable preferences.
The tree of the game is shown in Figure \ref{2_cs}.
\begin{figure}[htbp]
  \[
  \centerline{
    \xymatrix@=8.5pt@M=-.10pt{
      & && && &&& &&
      \raise0.2ex\hbox{\textcircled{\scriptsize{1}}}
      \ar@{-}[llld]_{\text{in}}
      \ar@{-}[rrrrrrrrdd]^(.28){\text{out}}
      \\
      & && && &&
      \raise0.2ex\hbox{\textcircled{\tiny{CS}}}
      \ar@{-}[lllld]_{\text{C}}
      \ar@{-}[rrrd]^{\text{A}}
      \\
      & &&
      \raise0.2ex\hbox{\textcircled{\scriptsize{2}}}
      \ar@{-}[ld]_{\text{in}}
      \ar@{-}[rrrdd]^(.28){\text{out}}
      && && && &
      \raise0.2ex\hbox{\textcircled{\scriptsize{2}}}
      \ar@{-}[ld]_{\text{in}}
      \ar@{-}[rrrdd]^(.28){\text{out}}
      && && && &&
      \raise0.2ex\hbox{\textcircled{\scriptsize{2}}}
      \ar@{-}[ld]_{\text{in}}
      \ar@{-}[rrrdd]^(.28){\text{out}}
      \\
      & &
      \raise0.2ex\hbox{\textcircled{\tiny{CS}}} \ar@{-}[ld]_(.7){\text{C}} \ar@{-}[rd]^(.7){\text{A}}
      & && && &&
      \raise0.2ex\hbox{\textcircled{\tiny{CS}}} \ar@{-}[ld]_(.7){\text{C}} \ar@{-}[rd]^(.7){\text{A}}
      && && && &&
      \raise0.2ex\hbox{\textcircled{\tiny{CS}}} \ar@{-}[ld]_(.7){\text{C}} \ar@{-}[rd]^(.7){\text{A}}
      \\
      &  && && && && && && && && && &&
      \\\\
      \save[]+<4.5cm,0cm>*\txt<18pc>{\scriptsize A tree of the 2-repeated chain store game}\restore
  }}
  \]

  \caption{A tree of the 2-repeated chain store game.\label{2_cs}}
\end{figure}

As can be seen from Figure \ref{2_cs}, the set of all connected terminal histories is given by $C=\{(\textrm{out}),(\textrm{in},\textrm{C}),(\textrm{in},\textrm{A})\}$, which coincides with the set $Z$ of all terminals.
The set of all terminal whole histories is given by
\[
\mathbf{Z}^2 \ = \
\left\{
\begin{matrix}
  ((\mathrm{in, C}),(\mathrm{in, C})),\
  ((\mathrm{in, C}),(\mathrm{in, A})),\
  ((\mathrm{in, C}),(\mathrm{out})),\ \\
  ((\mathrm{in, A}),(\mathrm{in, C})),\
  ((\mathrm{in, A}),(\mathrm{in, A})),\
  ((\mathrm{in, A}),(\mathrm{out})),\ \\
  \hspace{.7cm}((\mathrm{out}),(\mathrm{in, C})),\
  ((\mathrm{out}),(\mathrm{in, A})),\
  ((\mathrm{out}),(\mathrm{out}))
\end{matrix}
\right\}.
\]

The preference relations induced by the constituent game are represented by the Hasse diagrams below:

\begin{figure}[htbp]
  \[
  \centerline{
    \xymatrix@R=14pt@C=-22pt@M=2pt{
      && \text{\tiny$((\text{out}),(\text{out}))$}
      \ar@{-}[ld]
      \ar@{-}[rd]
      \\
      & \text{\tiny$((\text{out}),(\text{in},\text{C}))$}
      \ar@{-}[ld]
      \ar@{-}[rd]
      && \text{\tiny$((\text{in},\text{C}),(\text{out}))$}
      \ar@{-}[ld]
      \ar@{-}[rd]
      \\
      \text{\tiny$((\text{out}),(\text{in},\text{A}))$}
      \ar@{-}[rd]
      && \text{\tiny$((\text{in},\text{C}),(\text{in},\text{C}))$}
      \ar@{-}[ld]
      \ar@{-}[rd]
      && \text{\tiny$((\text{in},\text{A}),(\text{out}))$}
      \ar@{-}[ld]
      \\
      & \text{\tiny$((\text{in},\text{C}),(\text{in},\text{A}))$}
      \ar@{-}[rd]
      && \text{\tiny$((\text{in},\text{A}),(\text{in},\text{C}))$}
      \ar@{-}[ld]
      \\
      && \text{\tiny$((\text{in},\text{A}),(\text{in},\text{A}))$}
      \\
      \save[]+<2cm,0cm>*\txt<14pc>{\scriptsize (a) Preference relations of the chain store}\restore
    }
    \hspace{.4cm}
    \xymatrix@R=14pt@C=-22pt@M=2pt{
      && \text{\tiny$((\text{in},\text{C}),(\text{in},\text{C}))$}
      \ar@{-}[ld]
      \ar@{-}[rd]
      \\
      & \text{\tiny$((\text{in},\text{C}),(\text{out}))$}
      \ar@{-}[ld]
      \ar@{-}[rd]
      && \text{\tiny$((\text{out}),(\text{in},\text{C}))$}
      \ar@{-}[ld]
      \ar@{-}[rd]
      \\
      \text{\tiny$((\text{in},\text{C}),(\text{in},\text{A}))$}
      \ar@{-}[rd]
      && \text{\tiny$((\text{out}),(\text{out}))$}
      \ar@{-}[ld]
      \ar@{-}[rd]
      && \text{\tiny$((\text{in},\text{A}),(\text{in},\text{C}))$}
      \ar@{-}[ld]
      \\
      & \text{\tiny$((\text{out}),(\text{in},\text{A}))$}
      \ar@{-}[rd]
      && \text{\tiny$((\text{in},\text{A}),(\text{out}))$}
      \ar@{-}[ld]
      \\
      && \text{\tiny$((\text{in},\text{A}),(\text{in},\text{A}))$}
      \\
      \save[]+<2cm,0cm>*\txt<12pc>{\scriptsize (b) Preference relations of the local store}\restore
  }  }  
\]

  \caption{Hasse diagrams of the preference relations of (a) the chain store and (b) the local store.\label{2_cs_pref}}
\end{figure}

Each line segment represents a preference relation between its two ends, i.e. the upper node is preferred to the lower node.
For example, the line segment between $((\text{out}),(\text{out}))$ and $((\text{in},\text{C}),(\text{out}))$ represents a relation $((\text{out}),(\text{out})) \succsim_{\text{CS}}^2 ((\text{in},\text{C}),(\text{out}))$ which is the result of weak separability with $(\text{out}) \succsim_{\text{CS}} (\text{in},\text{C})$.


It is important to remember that the diagrams do not form a chain.
The relationship between e.g. $((\text{out}),(\text{in},\text{A}))$ and $((\text{in},\text{A}),(\text{out}))$ is not determined.
This indicates that there are not enough constraints to determine whether one is preferred to the other terminal node or not. 
This problem of missing line segments will remain until payoff functions are specified.

Nevertheless, these line segments are sufficient to determine the subgame perfect equilibrium.
Since all players are assumed to know all previously chosen actions, the subgame perfect equilibrium of the game is uniquely given as $\mathbf{s}^*_t(\mathbf{h})=\textrm{in}$ if $\mathbf{P}^2(\mathbf{h})=t$ for both local stores, and $\mathbf{s}_{\text{CS}}^*(\mathbf{h})=\textrm{C}$ if $\mathbf{P}^2(\mathbf{h})=\textrm{CS}$, regardless of the whole history $\mathbf{h}$. 
\end{ex}

As shown in Example \ref{cs}, repeating the equilibrium strategy $s^*$ of the constituent game $\Gamma$ twice in the 2-repeated chain store game was confirmed to be the unique subgame perfect equilibrium of the overall game.
It may seem that this result is always guaranteed.
However, this is not always the case, particularly when the class $C$ of connected terminal histories is strictly smaller than the class $Z$ of all terminal histories.
Consider the centipede games, for example. 
Suppose an equal amount of bonuses is added for both players equally among the terminal nodes in the second stage.
This prompts them to change their behavior and continue to the second stage instead of quitting at the first stage. 
In this case, the unique subgame perfect strategy is no longer preserved.

To avoid this problem, additional conditions must be met.
These conditions are implicitly satisfied in normal contipede games.
The next proposition specifies this.

\begin{prop}\label{ext}
  Suppose $s^*$ is a subgame perfect equilibrium of $\Gamma$ and $\succsim_i^\tau$ is weakly separable and hugely transitive.
  Then ${s}^{*\tau}$ is also a subgame perfect equilibrium of $\mathbf{\Gamma}^\tau$ if the indifference $(h_c)^\frown \left(O\left({{s}^{*}}\right)\right) \sim_i^\tau (h_c)$ for all $i\in I$ and all $h_c\in C$, called \textit{dynamic consistency}, is satisfied. 
\end{prop}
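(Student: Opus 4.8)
The plan is to verify the inequality of Definition \ref{SPNE} directly at an arbitrary nonterminal whole history, reducing the whole-game comparison to finitely many constituent-game comparisons supplied by the subgame perfection of $s^*$, then transporting these to $\succsim_i^\tau$ by weak separability, reconciling lengths by dynamic consistency, and finally closing a possibly huge chain by huge transitivity. Fix $i\in I$, a nonterminal $\mathbf{h}=(h_1,\ldots,h_{k-1},h_k)\in\mathbf{H}^\tau\setminus\mathbf{Z}^\tau$ with $h_1,\ldots,h_{k-1}\in C$, $h_k\in H\setminus Z$ and $\mathbf{P}^\tau(\mathbf{h})=P(h_k)=i$, and an arbitrary strategy $\mathbf{s}_i$ in $\mathbf{\Gamma}^\tau(\mathbf{h})$. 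Write $\mathbf{z}^*=\mathbf{O}_{\mathbf{h}}(s^{*\tau}_{-i}|_{\mathbf{h}},s^{*\tau}_i|_{\mathbf{h}})$, $\mathbf{z}'=\mathbf{O}_{\mathbf{h}}(s^{*\tau}_{-i}|_{\mathbf{h}},\mathbf{s}_i)$ and $\mathbf{p}=(h_1,\ldots,h_{k-1})\in C^{k-1}$; the goal is $\mathbf{h}^\frown\mathbf{z}^*\succsim_i^\tau\mathbf{h}^\frown\mathbf{z}'$. Both whole histories start with the common prefix $\mathbf{p}$, and their remaining coordinates are the $\Gamma$-terminals realised stage by stage: under $s^{*\tau}$ these are $\hat{z}^*_k$ (the terminal reached from $h_k$ under $s^*$) followed by copies of $O(s^*)$, and under the deviation they are $\hat{z}'_k$ followed by $z'_{k+1},z'_{k+2},\dots$, each list running exactly until the first stage whose realised $\Gamma$-terminal lies outside $C$, or until stage $\tau$.

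The first step is a stagewise domination claim: for each relevant stage the $\Gamma$-terminal realised under $(s^{*\tau}_{-i},\mathbf{s}_i)$ is weakly $\succsim_i$-dominated by the one realised under $s^{*\tau}$, i.e.\ $\hat{z}^*_k\succsim_i\hat{z}'_k$ for the current partial stage and $O(s^*)\succsim_i z'_t$ for every later stage $t>k$. Within any such stage every other core player and every outside player of that stage plays $s^*$, so $i$'s behaviour there is just some strategy of $\Gamma$ (of $\Gamma(h_k)$ for stage $k$) against $s^*_{-i}$; tracing the $s^*$-path down to $i$'s first move and invoking the defining inequality of the subgame perfect equilibrium $s^*$ at that node --- or observing that $i$ never moves, so the realised terminal is $O(s^*)$ itself --- gives the claim.

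The second step assembles a chain of whole histories from $\mathbf{h}^\frown\mathbf{z}'$ up to $\mathbf{h}^\frown\mathbf{z}^*$, built from three kinds of elementary move, each a single $\succsim_i^\tau$-comparison: (i) in a stage where both paths realise a connected $\Gamma$-terminal, replace the deviation's terminal by the one prescribed by $s^*$, which is a $\succsim_i^\tau$-step by weak separability fed from the stagewise domination; (ii) when one path has entered stages in which the other has already stopped, append or delete trailing copies of $O(s^*)$ after a connected prefix, which is an indifference by dynamic consistency; and (iii) finally compare the two (possibly unconnected) last-stage $\Gamma$-terminals by a last-coordinate application of weak separability, again fed from the stagewise domination. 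The total number of moves is at most linear in $\tau$, hence possibly huge, so to pass from this chain to $\mathbf{h}^\frown\mathbf{z}^*\succsim_i^\tau\mathbf{h}^\frown\mathbf{z}'$ one invokes huge transitivity; ordinary transitivity would not suffice, which is precisely why that hypothesis is imposed. Equivalently the whole argument may be organised as a backward induction over the stage index, where after collapsing the $s^*$-continuation by dynamic consistency the player at each stage faces exactly the constituent-game choice already solved by $s^*$; this also shows, along the way, that $s^{*\tau}$ is a genuine strategy profile of $\mathbf{\Gamma}^\tau$.

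The main obstacle is steps (ii)--(iii): reconciling the lengths of $\mathbf{z}^*$ and $\mathbf{z}'$ across the two stopping regimes --- $O(s^*)\in C$, as in the chain store game, where every play runs the full $\tau$ stages, versus $O(s^*)\notin C$, as in the centipede game, where $s^{*\tau}$ stops at the current stage while a deviation that cooperatively reaches a connected terminal can prolong the play --- and checking that each insertion or deletion of an $O(s^*)$-stage is a legitimate instance of dynamic consistency, i.e.\ that $(h_c)^\frown(O(s^*))\sim_i^\tau(h_c)$ may be applied after an arbitrary connected prefix. This is exactly the content that breaks in the ``bonus'' variant mentioned before the proposition, where those extra stages are strictly, not indifferently, valued, so it is the dynamic-consistency hypothesis that rescues the argument; one must also be careful that the assembled chain is genuinely a $\succsim_i^\tau$-chain, so that it is huge transitivity, and not mere finite transitivity, that closes it.
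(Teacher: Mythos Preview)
Your proposal is correct and follows essentially the same approach as the paper: fix an arbitrary subgame, extract stagewise domination from the subgame perfection of $s^*$, lift each stage comparison to $\succsim_i^\tau$ via weak separability, pad or truncate by copies of $O(s^*)$ via dynamic consistency to reconcile the differing stopping times, and close the resulting (possibly huge) chain by huge transitivity. The paper organises the argument by an explicit case split --- $C=Z$; $C\neq Z$ with $h_t{}^\frown j_{h_t}^*\in C$; and $C\neq Z$ with $h_t{}^\frown j_{h_t}^*\notin C$, the last further split on whether $O(s^*)\in C$ --- and writes out the chain in each case, whereas you package the same work into your moves (i)--(iii) and the two stopping regimes; the content is the same. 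Two small points: you should also allow $\mathbf{h}\in\mathbf{C}^\tau$ (a connected terminal whole history, where the next stage starts at $\emptyset$), not only $h_k\in H\setminus Z$; and the issue you flag about applying $(h_c)^\frown(O(s^*))\sim_i^\tau(h_c)$ after an arbitrary connected prefix is exactly what the paper handles by invoking dynamic consistency together with weak separability in the subgame preference $\succsim_i^\tau|_{\mathbf{h}}$.
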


\begin{proof}
  Let $\mathbf{h}=(h_1,\ldots,h_t)$ be an arbitrary non-terminal or connected terminal whole history, i.e. $\mathbf{h}\notin \mathbf{Z}^\tau\setminus \mathbf{C}^\tau$, and $i=\mathbf{P}^\tau(\mathbf{h})$.
  Let $j_{h}^*$ denote the equilibrium outcome $O_{{h}}(s^*|_{{h}})$ of the subgame $\Gamma({h})$.
  Let $\mathbf{s}_{i}$ be an arbitrarily chosen strategy and $(j_{h_t},j_{|\mathbf{h}|+1},\ldots,j_{\tau})=\mathbf{O}_{\mathbf{h}}(\mathbf{s}_{i}|_{\mathbf{h}},{s}^{*\tau}_{-i}|_{\mathbf{h}})$.
  If a whole history $(h_1,\ldots,h_{t-1})^\frown(h_t^\frown j_{h_t})^\frown(j_{|\mathbf{h}|+1},\ldots,j_{\lambda})$ is terminal for some $\lambda\geq|\mathbf{h}|+1$, then the history $j_\ell$ is set to be empty for all $\ell\in\{\lambda+1,\ldots,\tau\}$. 
  
  Suppose $C=Z$.
  Then, $j^*_{{h_t}}\, {\succsim_{i}}|_{{h_t}}\, j_{{h_t}}$ holds for every strategy $\mathbf{s}_{i}$, since $s^*$ is a subgame perfect equilibrium of $\Gamma$.
  This implies that, by weak separability, the following relation holds
  \[
  (j_{{h_t}}^*)^{\frown} (j_{|\mathbf{h}|+1},\ldots,j_\tau)
  \  {\succsim_{i}^{\tau}}|_{\mathbf{h}}\ 
  (j_{{h_t}})^{\frown} (j_{|\mathbf{h}|+1},\ldots,j_\tau).
  \]
  The following also holds by weak separability for any $\ell\in \{0,\ldots,\tau-|\mathbf{h}|-1\}$:
  \[
  (j_{{h_t}}^*)^{\frown} j_\emptyset^{*\ell+1\frown} (j_{|\mathbf{h}|+\ell+2},\ldots,j_\tau)
  \  {\succsim_{i}^{\tau}}|_{\mathbf{h}}\ 
  (j_{{h_t}}^*)^{\frown} j_\emptyset^{*\ell\frown} (j_{|\mathbf{h}|+\ell+1},\ldots,j_\tau).
  \]
  By huge transitivity, $\mathbf{O}_{\mathbf{h}}(s^{*\tau}|_{\mathbf{h}})\, {\succsim_{i}^{\tau}}|_{\mathbf{h}}\, \mathbf{O}_{\mathbf{h}}(\mathbf{s}_{i}|_{\mathbf{h}},{s}^{*\tau}_{-i}|_{\mathbf{h}})$ holds.

  Suppose, on the other hand, that $Z\ne C$.
  The case where  $h_t^\frown j_{h_t}^*\in {C}$ is satisfied can be proved in exactly the same way as in the case $C=Z$.
  Concerning the opposite case $h_t^\frown j_{h_t}^*\notin {C}$, the proof can be divided into two parts, depending on whether $O(s^*)$ is a connected terminal or not.
  Let us start with the case where it is, i.e. $O(s^*)=j^*_\emptyset\in C$.
  Then, the relation $(j^*_{h_t}) \succsim^{\tau}_{i}|_{\mathbf{h}} (j_{h_t})^\frown j_{\emptyset}^{*\tau-|\mathbf{h}|}$ holds if $h_t^\frown j_{h_t}\in C$, since the indifference $(j_{h_t})^\frown j_\emptyset^{*\tau-|\mathbf{h}|} {{\sim_{i}^{\tau}}|_{\mathbf{h}}} (j_{h_t})$ is satisfied by dynamic consistency $(h_c)^\frown{O(s^*)}\sim_i^\tau(h_c)$ with the weak separability and huge transitivity.
  Since $j_\emptyset^*\ {{\succsim_{i}}}\ j_\ell$ holds for all $\ell\in\{|\mathbf{h}|+1,\ldots,\tau\}$, the relation $(j_{h_t})^\frown j_\emptyset^{*\tau-|\mathbf{h}|}\, {{\succsim_{i}^{\tau}}|_{\mathbf{h}}}\, (j_{h_t})^\frown (j_{|\mathbf{h}|+1},...,j_\tau)$ also holds by weak separability, and the following relation holds
  \[
  \mathbf{O}_{\mathbf{h}}(s^{*\tau}|_{\mathbf{h}})
  \, =\,
  (j_{h_t}^*)
  \ {{\succsim_{i}^\tau}|_{\mathbf{h}}}\
  (j_{h_t})^\frown  (j_{|\mathbf{h}|+1},\ldots,j_\tau)
  \, =\,
  \mathbf{O}_{\mathbf{h}}(\mathbf{s}_{i}|_{\mathbf{h}},s^{*\tau}_{-i}|_{\mathbf{h}}).
  \]

  Second, let us confirm the case where $O(s^*)= j^*_\emptyset\in Z\setminus C$ is satisfied.
  Since $j^*_\emptyset\succsim_{\emptyset} j_\ell$ holds for all $\ell\in\{|\mathbf{h}|+1,...,\tau\}$, the following relation also holds for all $\ell\in\{2,\ldots,\tau-|\mathbf{h}|\}$ by weak separability
  \[
  (j_{h_t})^\frown(j_{|\mathbf{h}|+1},\ldots,j_{|\mathbf{h}|+\ell-1}){}^\frown (j^*_\emptyset)
  \succsim_{i}^\tau\hspace{-1mm}|_{\mathbf{h}}\ 
  (j_{h_t})^\frown(j_{|\mathbf{h}|+1},\ldots,j_{|\mathbf{h}|+\ell}).
  \]
  Since $(j_{h_t})^\frown(j_{|\mathbf{h}|+1},\ldots,j_{|\mathbf{h}|+\ell-1})^\frown (j^*_\emptyset)\sim_{i}^\tau\hspace{-1mm}|_{\mathbf{h}}\, (j_{h_t})^\frown(j_{|\mathbf{h}|+1},\ldots,j_{|\mathbf{h}|+\ell-1}){}$ holds for all $\ell\in\{1,\ldots,\tau-|\mathbf{h}|-1\}$ by dynamic consistency and weak separability, the following relation also holds for all $\ell\in\{2,\ldots,\tau-|\mathbf{h}|-1\}$
  \[
  (j_{h_t})^\frown(j_{|\mathbf{h}|+1},\ldots,j_{|\mathbf{h}|+\ell-1}){}^\frown (j^*_\emptyset)
  \succsim_{i}^\tau\hspace{-1mm}|_{\mathbf{h}}\ 
  (j_{h_t})^\frown(j_{|\mathbf{h}|+1},\ldots,j_{|\mathbf{h}|+\ell}){}^\frown (j^*_\emptyset).
  \]
  Since $(j_{h_t})^\frown(j^*_\emptyset)\sim_{i}^\tau\hspace{-1mm}|_{h_t}(j_{h_t})$ and $(j^*_{h_t})\succsim_{i}^\tau\hspace{-1mm}|_{\mathbf{h}} (j_{h_t})$ by assumption, 
  the following holds by huge transitivity
  \[
  \mathbf{O}_{\mathbf{h}}(s^{*\tau}|_{\mathbf{h}})
  \, =\,
  (j_{h_t}^*)
  \ {{\succsim_{i}^\tau}|_{\mathbf{h}}}\
  (j_{h_t})^\frown  (j_{|\mathbf{h}|+1},\ldots,j_\tau)
  \, =\,
  \mathbf{O}_{\mathbf{h}}(\mathbf{s}_{i}|_{\mathbf{h}},s^{*\tau}_{-i}|_{\mathbf{h}}).
  \]
  
  Now it is confirmed that $s^{*\tau}$ is a subgame perfect equilibrium.
\end{proof}


Depending on the structure of $\Gamma$, the dynamic consistency condition given in Proposition \ref{ext} can be relaxed.
For example, if $\Gamma$ has a subgame perfect equilibrium $s^*$ leading to $O(s^*)\in C$, 
then the condition $(h_c)^\frown \left(O({s}^*)\right)\sim^\tau_i (h_c)$ can be relaxed to $(h_c)^\frown \left(O({s}^*)\right)\succsim_i^\tau (h_c)$.
Conversely, if $O(s^*)\notin C$, it can be replaced by $(h_c)^\frown \left(O({s}^*)\right)\precsim_i^\tau (h_c)$. 
The next example illustrates the circumstances in which this condition is required or can be relaxed.

\begin{ex}[Centipede Games]\label{dc}
The centipede game was proposed by Rosenthal \cite{centipede} to highlight the problem inherent in the concept of subgame perfect equilibria more clearly than the chain store paradox \cite{chainstore}.

The game is played between two players.
Player 1 first chooses between R to continue or D to quit the game.
After 1's move, 2 decides whether to take r or d if the game does not end.
The game continues until one of the players chooses D or d to terminates the game or they reach the last node.

This game can also be built from the 2-player constituent game.
Unlike the chain store games, both players are core in this case.
The set of connected histories is given as $C=\{\mathrm{Rr}\}$ (where Rr stands for $(\text{R},\text{r})$), which is a proper subset of $Z=\{\mathrm{D}, \mathrm{Rd},\mathrm{Rr}\}$.
Figure \ref{con_cp} shows an example of the game's trees.
%

\begin{figure}[htbp]
  \[
  \centerline{
    \xymatrix@=25pt@M=-.10pt{
      \raise0.2ex\hbox{\textcircled{\scriptsize{1}}}\ar@{-}[r]^{\mathrm{R}}\ar@{-}[d]^{\mathrm{D}}
      &
      \raise0.2ex\hbox{\textcircled{\scriptsize{2}}}\ar@{-}[r]^{\mathrm{r}}\ar@{-}[d]^{\mathrm{d}}
      &
      \\ 
      & & 
  }}
  \]
  \caption{A tree of the constituent game of a centipede game.\label{con_cp}}
\end{figure}
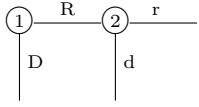

Player 1 prefers Rr to D and D to Rd. Player 2, on the other hand, prefers Rd to Rr and Rr to D. 
\[
\text{Rr}\succsim_1\text{D}\succsim_1\text{Rd}\hspace{1cm}
\text{Rd}\succsim_2\text{Rr}\succsim_2\text{D}.
\]
Then the profile of strategies that constitutes the unique subgame perfect equilibrium is given by $s^*_1(\emptyset)=\mathrm{D}$ and $s^*_2(\mathrm{R})=\mathrm{d}$.

Extending the game to multiple periods with weakly separable preferences, one has to face the problem that arises when $C$ is strictly smaller than $Z$.
In this situation, one has to determine preference relations between the terminal nodes of different lengths.
This is where the dynamic consistency condition comes in.
The tree of the game repeated twice is shown in Figure \ref{2_cp}.

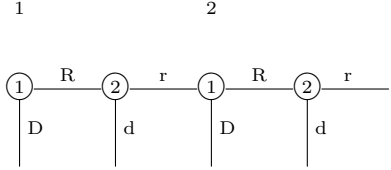
\begin{figure}[htbp]
  \[
  \centerline{
    \xymatrix@=25pt@M=-.10pt{
      \text{\scriptsize{1}}
      &&
      \text{\scriptsize{2}}
      &&
      \\
      \raise0.2ex\hbox{\textcircled{\scriptsize{1}}}\ar@{-}[r]^{\mathrm{R}}\ar@{-}[d]^{\mathrm{D}}
      & 
      \raise0.2ex\hbox{\textcircled{\scriptsize{2}}}\ar@{-}[r]^{\mathrm{r}}\ar@{-}[d]^{\mathrm{d}}
      & 
      \raise0.2ex\hbox{\textcircled{\scriptsize{1}}}\ar@{-}[r]^{\mathrm{R}}\ar@{-}[d]^{\mathrm{D}}
      & 
      \raise0.2ex\hbox{\textcircled{\scriptsize{2}}}\ar@{-}[r]^{\mathrm{r}}\ar@{-}[d]^{\mathrm{d}}
      &
      \\
      & & & & 
  }}
  \]
  \caption{A tree of the 2-repeated centipede game.\label{2_cp}}
\end{figure}

As already confirmed, the path of the unique subgame perfect equilibrium of the constituent game is given by $O(s^*)=\text{D}$.
This implies that dynamic consistency requires preference relations to satisfy $(\text{Rr})\sim_i^2 (\text{Rr},\text{D})$ for all $i\in I$.
Taking into account of the restriction, preference relations at all terminal nodes of 2-repeated games $\mathbf{\Gamma}^2$ can be represented as the Hasse diagrams shown below.

\begin{figure}[htbp]
  \[
  \centerline{
    \xymatrix@R=16pt@C=-15pt@M=2pt{
      &&
      \text{\scriptsize{(Rr, Rr)}}
      \ar@{-}[ld]
      \\
      &
      \text{\scriptsize{(Rr), (Rr, D)}}
      \ar@{-}[ld]
      \ar@{-}[rd]
      \\
      \text{\hspace{.4cm}\scriptsize{(D)}\hspace{.4cm}}
      \ar@{-}[rd]
      &&
      \text{\hspace{.4cm}\scriptsize{(Rr, Rd)}}
      \ar@{.>}@<1ex>[lu]^{\text{\tiny 2nd BI}}
      \\
      &
      \text{\scriptsize{(Rd)}}
      \ar@{.>}@<1ex>[lu]^{\text{\tiny last BI}}
      \\
      \save[]+<.8cm,0cm>*\txt<14pc>{\scriptsize (a) Preference relations of Player 1}\restore
    }
    \hspace{0cm}
    \xymatrix@R=16pt@C=-15pt@M=2pt{
      &
      \text{\scriptsize{(Rr, Rd)}\hspace{.4cm}}
      \ar@{-}[rd]
      \\
      \text{\hspace{.2cm}\scriptsize{(Rd)}\hspace{.2cm}}      
      \ar@{-}[rd]
      &&
      \text{\scriptsize{(Rr, Rr)}}
      \ar@{-}[ld]
      \ar@{.>}@<-1ex>[lu]_{\text{\tiny 1st BI}}
      \\
      &
      \text{\scriptsize{(Rr), (Rr,D)}}
      \ar@{-}[ld]
      \ar@{.>}@<-1ex>[lu]_{\text{\tiny 3rd BI}}
      \\
      \text{\hspace{.3cm}\scriptsize{(D)}\hspace{.3cm}}
      \\
      \save[]+<.8cm,0cm>*\txt<14pc>{\scriptsize (b) Preference relations of Player 2}\restore
  }  }
  \]
  \caption{Hasse diagrams of the preference relations of (a) the player 1 and (b) the player 2 (``BI'' is an abbreviation for backward induction).\label{2_cp_pref}}
\end{figure}
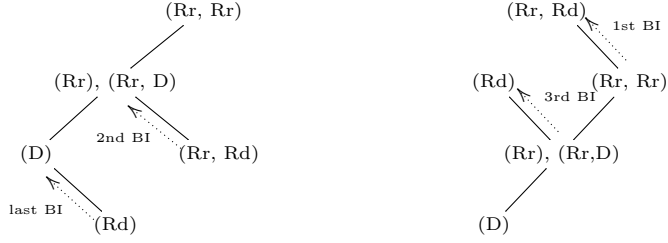

As can be seen from these diagrams, the unique subgame perfect equilibrium is given by ${s}^{*2}_1(\mathbf{h})=\mathrm{D}$ for all $\mathbf{P}^2(\mathbf{h})=1$ and ${s}^{*2}_2(\mathbf{h})=\mathrm{d}$ for all $\mathbf{P}^2(\mathbf{h})=2$.
It is also clear that the result remains valid even if some of the the conditions are relaxed, as long as the condition $(h)^\frown(O(s^*))\precsim^2_i(h)$ is satisfied. 
In this example, this corresponds to the case where $(\text{Rr, D})\precsim_i^2\text{(Rr)}$ is satisfied.

If dynamic consistency is not satisfied, $s^{*2}$ is no longer guaranteed to be a subgame perfect equilibrium.
For example, if the condition $(\text{Rr},\text{D})\succ_2^2(\text{Rr})$ is imposed, instead,
it may happen that $(\text{Rr},\text{D})\succ_2^2(\text{Rd})$ holds, so that the equilibrium strategy of the player 2 
changes to $\mathbf{s}^*_2(\mathbf{\text{(R)}})=\text{r}$, 
and $s^{*2}$ is no longer a subgame perfect equilibrium.
To maintain the equilibrium, the condition $\left(\text{Rr},\text{D}\right)\ \precsim_2^2\ \left(\text{Rr}\right)$ must be met at least.
\end{ex}

\section{Strategic Games}

By slightly modifying Definition \ref{def_ext}, a \textit{strategic game} can be described as follows.

\begin{defn}\label{def_st}[Definition 11.1 of \cite{OR}]
  A \textit{strategic game } has the following components
  \begin{itemize}
  \item A class $I$ of players.
  \item A class $H$ of histories which consists of an initial history $\emptyset$ and $|I|$-tuples of actions denoted by $(a_i)_{i\in I}\in Z$, where $Z$ is a class of all terminals $\times_{i\in I}A_i$.
    $A_i$ denotes the class of actions available to player $i$.
  \item For each player $i\in I$ a preference relation $\succsim_i$ on $H$.
  \end{itemize}
\end{defn}

The triple $G=\langle I, H, (\succsim_i)\rangle$ is called a \textit{strategic game}.

Note that in extensive games with perfect information, each history is a sequence of individual actions taken one at a time in each period, while in strategic games it is an $|I|$-tuple of actions taken all at once.
This is why classes of actions are distinguished by players in strategic games.

A strategy $s_i$ of player $i\in I$ in a strategic game can also be defined as a function that assigns each action in $A_i$ to the initial history.
Since its domain consists only of the initial history, it can be simplified and written as $s=(a_i)_{i\in I}$.
For each strategy profile $s=(a_i)_{i\in I}$ in the strategic game, the outcome $O(s)$ of $s=(a_i)_{i\in I}$ is simply $s$ itself.
And Nash equilibrium is given as follows.

\begin{defn}[Definition 14.1 of \cite{OR}]\label{Nash_st}
  A \textit{Nash equilibrium of a strategic game} $\langle I, H, (\succsim_i)\rangle$ is a strategy profile $s^*$ with the property for each player $i\in I$
  \[
  (a^*_{-i}, a^*_i) \succsim_i   (a^*_{-i}, a_i) \text{ for all } a_i \in A_i.
  \]
\end{defn}

Repeated games of strategic games can be stated essentially in the same way as Definition 137.1 of \cite{OR}.

\begin{defn}[Definition 137.1 of \cite{OR}]
  Let $G=\langle I,H,(\succsim_i)\rangle$ be a strategic game.
  Then, a \textit{$\tau$-repeated game of $G$} is an extensive game with perfect information $\mathbf{G}^\tau=\langle {I}, \mathbf{H}^\tau, (\succsim^\tau_i)\rangle$ where
  \begin{itemize}
  \item $\mathbf{H}^\tau=\bigcup_{t\in \{1,\ldots,\tau\}} H^t$ 
\item $\succsim_i^\tau$ is a preference relation on $Z^\tau$. 
  \end{itemize}
\end{defn}

The strategy of player $i$ in a $\tau$-repeated game $\mathbf{G}^\tau$ of ${G}$ is given as follows.

\begin{defn}
  A \textit{strategy $\mathbf{s}_i$ of a player} $i\in I$ in a $\tau$-repeated game $\mathbf{G}^\tau$ of $G$ with perfect information is a function that assigns an action in ${A}_i$ to each nonterminal whole history $\mathbf{h}=(h_1,\ldots, h_k)\in\mathbf{H}^\tau\setminus{Z}^\tau$.
\end{defn}


The outcome $\mathbf{O}(\mathbf{s})$ of $\mathbf{G}^\tau$ is simply given as the terminal whole history $(h_1,\ldots,h_{\tau})\in{Z}^\tau$ that satisfies $(\mathbf{s}_{1}(h_1,\ldots,h_t),\ldots,\mathbf{s}_{|I|}(h_{1},\ldots,h_t))=h_{t+1}$ for all $t\in\{1,\ldots,\tau-1\}$. 
Subgames are also simplified as below.

\begin{defn}
  The \textit{subgame of the $\tau$-repeated game with perfect information} $\mathbf{G}^\tau=\langle {I},\mathbf{H}^\tau,(\succsim_i^\tau)\rangle$ \textit{that follows the whole history $\mathbf{h}$} is the $\tau-|\mathbf{h}|$-repeated game $\mathbf{G}^\tau\mathbf{(h)}=\langle {I},\mathbf{H}^\tau|_{\mathbf{h}},({\succsim_i^\tau}|_\mathbf{h})\rangle$, where $\mathbf{H}^\tau|_\mathbf{h}$ is the set of whole histories $\mathbf{h}'$ that satisfy $\mathbf{h}^\frown \mathbf{h}'\in \mathbf{H}^\tau$, and ${\succsim_i^\tau}|_\mathbf{h}$ is defined by $\mathbf{h}'{\succsim_i^\tau}|_\mathbf{h} \mathbf{h}''$ if and only if $\mathbf{h}^\frown \mathbf{h}'\succsim_i^\tau \mathbf{h}^\frown \mathbf{h}''$.
\end{defn}

The subgame perfect equilibrium is also defined as in the case of $\mathbf{\Gamma}^\tau$.
%

\begin{ex}[The Prisoner's Dilemma]\label{PD}
  The game is played by two players.
  Both players are suspected of having committed a crime, which they did committed.
  If they both remain silent, they will receive a lighter sentence.
  If only one of them confesses, the one can exempt from the sentence and the accomplice who remains silent gets the heaviest sentence.
  If both of them confess, their sentences are relatively light.
  The resulting preference relations are given as below.
  \[
  \text{CS} \succsim_1 \text{SS}  \succsim_1 \text{CC}\succsim_1 \text{SC}\hspace{1cm}
  \text{SC} \succsim_2 \text{SS} \succsim_2 \text{CC} \succsim_2 \text{CS}
  \]
  where C and S abbreviate ``confess'' and ``silence'' respectively.
  

  The Nash equilibrium of the game is $s_1^*=s_2^*=\text{C}$ (short for Confess), and the outcome CC is realised.
  This outcome is disappointing for both players, as they could have avoided it by remaining silent.
  However, the better outcome, in which both players remain silent, is cleverly prevented from occurring, because neither player would benefit from doing so alone.
  This is why the problem is called a ``dilemma''.

  The nature of the problem never changes when the game is repeated, unless it is finite.
  In fact, the subgame perfect equilibrium and its outcome remain the same no matter how many times the game is repeated, as long as it is played a finite number of times.
  It is also clear from the Hasse diagrams of the preference relations of both players induced from the constituent game, shown in Figure \ref{2_pd_pref}, that confession dominates silence, whatever the action of the accomplice.
  
  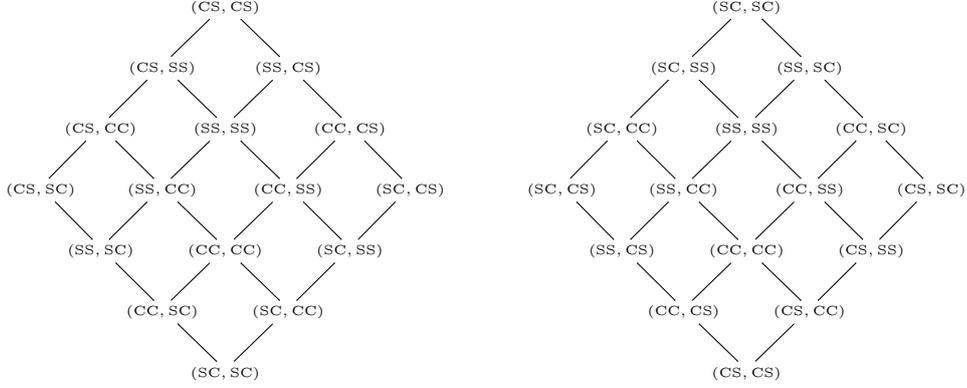
\begin{figure}[htbp]
    \[
    \centerline{
      \xymatrix@R=14pt@C=-8pt@M=2pt{
        &&&
        \text{\tiny$(\text{CS},\text{CS})$}
        \ar@{-}[ld]
        \ar@{-}[rd]
        \\
        &&
        \text{\tiny$(\text{CS},\text{SS})$}
        \ar@{-}[ld]
        \ar@{-}[rd]
        &&
        \text{\tiny$(\text{SS},\text{CS})$}
        \ar@{-}[ld]
        \ar@{-}[rd]
        \\
        &
        \text{\tiny$(\text{CS},\text{CC})$}
        \ar@{-}[ld]
        \ar@{-}[rd]
        &&
        \text{\tiny$(\text{SS},\text{SS})$}
        \ar@{-}[ld]
        \ar@{-}[rd]
        &&
        \text{\tiny$(\text{CC},\text{CS})$}
        \ar@{-}[rd]
        \ar@{-}[ld]
        \\
        \text{\tiny$(\text{CS},\text{SC})$}
        \ar@{-}[rd]
        &&
        \text{\tiny$(\text{SS},\text{CC})$}
        \ar@{-}[rd]
        \ar@{-}[ld]
        &&
        \text{\tiny$(\text{CC},\text{SS})$}
        \ar@{-}[rd]
        \ar@{-}[ld]
        &&
        \text{\tiny$(\text{SC},\text{CS})$}
        \ar@{-}[ld]
        \\
        &
        \text{\tiny$(\text{SS},\text{SC})$}
        \ar@{-}[rd]
        &&
        \text{\tiny$(\text{CC},\text{CC})$}
        \ar@{-}[ld]
        \ar@{-}[rd]
        &&
        \text{\tiny$(\text{SC},\text{SS})$}
        \ar@{-}[ld]
        \\
        &&
        \text{\tiny$(\text{CC},\text{SC})$}
        \ar@{-}[rd]
        &&
        \text{\tiny$(\text{SC},\text{CC})$}
        \ar@{-}[ld]
        \\
        &&&
        \text{\tiny$(\text{SC},\text{SC})$}
        \\
        \save[]+<2cm,0cm>*\txt<14pc>{\scriptsize (a) Preference Relations of player 1}\restore
      }
      \hspace{.4cm}
      \xymatrix@R=14pt@C=-8pt@M=2pt{
        &&&
        \text{\tiny$(\text{SC},\text{SC})$}
        \ar@{-}[ld]
        \ar@{-}[rd]
        \\
        &&
        \text{\tiny$(\text{SC},\text{SS})$}
        \ar@{-}[ld]
        \ar@{-}[rd]
        &&
        \text{\tiny$(\text{SS},\text{SC})$}
        \ar@{-}[ld]
        \ar@{-}[rd]
        \\
        &
        \text{\tiny$(\text{SC},\text{CC})$}
        \ar@{-}[ld]
        \ar@{-}[rd]
        &&
        \text{\tiny$(\text{SS},\text{SS})$}
        \ar@{-}[ld]
        \ar@{-}[rd]
        &&
        \text{\tiny$(\text{CC},\text{SC})$}
        \ar@{-}[rd]
        \ar@{-}[ld]
        \\
        \text{\tiny$(\text{SC},\text{CS})$}
        \ar@{-}[rd]
        &&
        \text{\tiny$(\text{SS},\text{CC})$}
        \ar@{-}[rd]
        \ar@{-}[ld]
        &&
        \text{\tiny$(\text{CC},\text{SS})$}
        \ar@{-}[rd]
        \ar@{-}[ld]
        &&
        \text{\tiny$(\text{CS},\text{SC})$}
        \ar@{-}[ld]
        \\
        &
        \text{\tiny$(\text{SS},\text{CS})$}
        \ar@{-}[rd]
        &&
        \text{\tiny$(\text{CC},\text{CC})$}
        \ar@{-}[ld]
        \ar@{-}[rd]
        &&
        \text{\tiny$(\text{CS},\text{SS})$}
        \ar@{-}[ld]
        \\
        &&
        \text{\tiny$(\text{CC},\text{CS})$}
        \ar@{-}[rd]
        &&
        \text{\tiny$(\text{CS},\text{CC})$}
        \ar@{-}[ld]
        \\
        &&&
        \text{\tiny$(\text{CS},\text{CS})$}
        \\
        \save[]+<2cm,0cm>*\txt<14pc>{\scriptsize (a) Preference Relations of player 2}\restore
    }  }  
    \]
    
    \caption{Hasse diagrams of preference relations of both players.\label{2_pd_pref}}
  \end{figure}

  However, the situation changes drastically when the number of repetitions is increased to a huge number depending on the payoffs yielded from the terminal histories.
  These changes are discussed in detail in Section \ref{Payoff}.
\end{ex}

\begin{ex}[Ultra Long-Term Investment]\label{ulti}
  Consider the situation where an investor has to decide on the following simple investment project.
  It costs a small amount of money in each period, but the return exceeds the total investment if the investment is made over a hugely long period of time.
  The problem is to decide whether to invest in each period.

  This situation can be modelled as a single person's decision problem.
  The constituent game can be written as $I=\{1\}$, $A=\{\text{I},\text{N}\}$, where I and N stand for ``invest'' and ``not invest'' respectively.

    Since revenues only exceed the total costs if huge investments are made, it is clear that no investment should be made if the game is played only once.
  In this case, the preference relation on the class of terminal nodes of the constituent game  is given by $\text{N}\succsim\text{I}$.

  The situation remains the same if the game is repeated only a finite number of times.
  However, it happens that investing whole periods is strictly preferable to all other alternatives when the game is played a huge number of times, since the return on investment exceeds all the costs incurred.
  Note, however, that the result violates weak separability, since $(\text{N})^\frown\text{I}^{\tau-1}\not\succsim^\tau(\text{I})^\frown\text{I}^{\tau-1}$ holds.
\end{ex}

\begin{ex}[Lifestyle Disease]\label{addiction}
  Similar to Example \ref{ulti} is the lifestyle disease problem.
  In this type of problem, a decision is made in each period whether to eat foods or drink beverages that provide short-term benefits but will cause some health problems in the distant future.

  This situation can also be modelled as a single-person decision problem.
  The constituent game can be written as $I=\{1\}$, $A=\{\text{E},\text{A}\}$, where E and A stand for ``eat'' and ``avoid'' respectively.

  Since the health problems that outweigh the short-term benefits  only set in after a very long time, it is clear that the problematic foods will be eaten if the game is played only once, so that the preference relation of the constituent game is given by $\text{E}\succsim\text{A}$.

  However, once the symptoms appear after a huge amount of time has passed, the preference relations can be reversed $(\text{E})^\frown\text{A}^{\tau-1}\not\succsim^\tau(\text{E})^{\frown}\text{A}^{\tau-1}$, 
  so that the weak separability is violated again.
  
\end{ex}

\section{Perspectives on Whole Histories}\label{perspective}
Traditionally, history has been treated from a perspective that provides a fine-grained view of all points in time, and utility over time is evaluated by the weighted sum of these events. 
However, this seems inappropriate when history is made up  of a huge number of events, since most people do not have a complete picture of these events when they occur in a very short period of time or will take place in the distant future. 
For this reason, this paper takes a very different approach to the treatment of history.

With this in mind, the present paper adopts two contrasting perspectives.
Each of these perspectives is represented by a different topology, which serves as a criterion to distinguish what is considered the same or not.
When trying to see the history from the present, through the perspective view, it is easy to distinguish between points in the relatively near future or the end of the history, but difficult when they are in the huge distance from the present or the end. 
On the other hand, when trying to see it from above, from a bird's eye view, each history would appear to be a continuous sequence of events.

For simplicity, let us assume hereafter that $\tau=2^\varepsilon$ for some $\varepsilon\in N\setminus \FN$.

\subsection{Perspective View} 
The first view is given by the indiscernibility equivalence $\widehat{=}$ which is generated by the sequence $(\widehat{R}_k)_{k\in\FN}$ consisting of
\[
\widehat{R}_k=\left\{\langle a,b\rangle\,;\,
\left(a=b\right)\,\vee\,
\left(
a,b\in[2^k,\tau-2^k+1]
\right)
\right\}.
\]
The indiscernibility equivalence $\widehat{=}$ is given by $\bigcap_{k\in\FN} \widehat{R}_k$.

It distinguishes points in time that are only a finite distance away from the present or the end.
All remaining points are equally reduced to one point as a distant future.
This perspective is represented by the continuum $\widehat{\mathscr{C}}=\langle\{1,\ldots,\tau\},\widehat{=}\rangle$, which is called the \textit{perspective view} of $\{1,\ldots,\tau\}$.

To collect all the points that are distinguishable by $\widehat{=}$, let $t_{i}$ denote a position in $\{1,\ldots \tau\}$ where $i\in\FN$ as
\[
t_{i}\ =\
\begin{cases}
  {\tau}/{2} & \text{ if } i=0\\
  \tau-({i-1})/{2} & \text{ if } i\text{ is odd}\\
       {i}/{2} & \text{ otherwise.}
\end{cases}
\]
Then, the class consisting of these points forms a \textit{choice class} of $\widehat{\mathscr{C}}$.
\[
\widehat{\choice{}{}}=\{t_{i}\,;\, i\in\FN\}.
\]
The points in time that are finitely far from the present, or satisfy $\tau-t\in\FN$, are called the \textit{points of the near future}, while those that are finitely far from the end, or satisfy $\tau-t\in\FN$, are called the \textit{points of the near end}.
The points that are hugely far from both the present and the end, or satisfy $t\in\mon{\widehat{=}}{\tau/2}$, are called the \textit{points of the distant future}.

Each pair of the elements contained in $\widehat{\choice{}{}}$ is discernible.
Furthermore, each element $t\in\{1,\ldots,\tau\}$ has only one element in the choice class $\widehat{\choice{}{}}$ that is indiscernible from $t$ itself.
This implies that the class consisting of all monads of points in $\widehat{\choice{}{}}$ covers $\{1,\ldots,\tau\}$ and thus turns out to be a $\sigma$-partition\footnote{
$(S_i)_{i\in\FN}$ is said to be a \textit{$\sigma$-partition} of a class $S$, essentially if $S$ is partitioned into a countable union of subclasses $\cup_{i\in\FN}S_i$.
In this case, since $\mon{}{t_i}\cap\mon{}{t_j}=\emptyset$ for all $i\not= j$ and $\{1,\ldots,\tau\}=\cup_{i\in\FN}\mon{}{t_i}$ are satisfied, $(\mon{}{t_i})_{i\in\FN}$ is in fact a $\sigma$-partition of $\{1,\ldots,\tau\}$.
This condition is essential to construct appropriate measures on $\{1,\ldots,\tau\}$.
See Sakahara and Sato \cite{cogjump} for a more detailed description, .
} of $\{1,\ldots,\tau\}$.
The class is said to be an \textit{appearance of $\{1,\ldots,\tau\}$ by a perspective view},
and is denoted by $\widehat{\mathscr{T}}=\{\mon{\widehat{=}}{t}\,;\, t\in \widehat{\choice{}{}}\}$.

\begin{defn}\label{consistent_perspective}
  A whole history $\textbf{h}^\tau$ is said to be \textit{consistent with the perspective view} iff $h_t=h_{t'}$ for each pair $t,t'$ satisfying $t\ \widehat{=}\ t'$.
  Let $\widehat{\mathbf{\Gamma}}^\tau$ denote the $\tau$-repeated game whose whole histories are restricted to those consistent with the perspective view, and $\widehat{\mathbf{H}}^\tau$ denote the class of these whole histories.
\end{defn}

When calculating the payoffs, it is necessary to find a way to evaluate how many elements are contained in each monad and to approximate these numbers.
However, these values cannot always be determined by some rational numbers, because these classes may be proper.
For example, the size of $\FN$ cannot be determined by any rational number.
To evaluate the size of these classes in a consistent way, Borel approximating functions\footnote{See Sakahara and Sato \cite{kalina_measure} for details} (BAFs for short) play a key role.

BAFs map each class 
to a sequence that approximates the size of the class. 
Since every $\mon{\widehat{=}}{t_i}$ is an equivalence class of $t_i$ with respect to $\widehat{=}$, it can be generated, using a $\pi$-generating sequence $(\widehat{R}_k)_{k\in\FN}$ of $\widehat{=}$, by a sequence $(\widehat{S}_{i,k})$ which satisfies $\widehat{S}_{i,k}=\widehat{R}_k``\{t_{i}\}$ for all $k\geq i$ and, otherwise, $\widehat{S}_{i,k}=\emptyset$.
The sequence $(\widehat{S}_{i,k})$ is said to be a \textit{Borel generating sequence} of $\mon{\widehat{=}}{t_i}$.  

Since the sequence consisting of the size $|\widehat{S}_{i,k}|$ of each element of $(\widehat{S}_{i,k})_{k\in\FN}$ gives a proper approximation of $\mon{}{t_i}$, the sequence $(|\widehat{S}_{i,k}|)_{k\in\FN}$ is said to be a \textit{Borel approximating sequence}.
Then, the Borel approximating function $F_{\widehat{\mathscr{T}}}$ assigns to each $\mon{\widehat{=}}{t_{i}}$ a Borel approximating sequence $(|\widehat{S}_{i,k}|)_{k\in\FN}$ where
\[
|\widehat{S}_{i,k}|\ =\
\begin{cases}
  \tau-2^k \cdot[k\geq 1]\footnotemark & \text{ if } i=0\\
  [2^k\geq i] \cdot[k\geq 1] & \text{ otherwise}
\end{cases}
\]
\footnotetext{$[k\geq 1]$ is an Iverson bracket, meaning that it equals 1 if $k\geq 1$ and 0 otherwise.}
and the cut of each point is given by
\[
\lim_{k\in\FN}{{F_{\widehat{\mathscr{T}}}\bigl(|{\widehat{S}_{i,k}}|\bigr)}}
\ =\  \begin{cases}
  \tau-\FN    & \text{ if } i=0\\
  1 & \text{ otherwise.}
\end{cases}
\]
Finally, a perspective measure $m_{1,F_{\widehat{\mathscr{T}}}}$ of each ${T}\subseteq\widehat{\mathscr{T}}$ on the continuum $\widehat{\mathscr{C}}$ is given by 
\[
  {m}_{1,F_{\widehat{\mathscr{T}}}}({T})\ =\
  \begin{cases}
    \lim_{k\in\FN}\mon{}{\frac{\tau-2^k}{1}}\ =\ 
    \infty & \hspace{-1.5cm}\hfill\text{ if } \mon{}{\tau/2}\in {T}\\
    \lim_{k\in\FN}\mon{}{\frac{|t_i\in T\,;\, 2^k\geq i|}{1}}\ =\ 
    |\{t\,;\,t\in {T}\}| & \text{ otherwise.}
  \end{cases}
\]
For notational convenience, let $\widehat{m}$ denote the measure.

The measure gives equal weight to individual points in time if they are within finite steps of either the present or the end.
On the other hand, those at a huge distance from both ends degenerate into a single point of infinite weight because they cannot be distinguished.
This is a characteristic of the way history looks when viewed from a  perspective view.

As will become clear later, this huge mass will have different effects on equilibrium behavior, depending on how future payoffs are valued.

\subsection{Bird's Eye View} 
The second view is given by another indiscernibility equivalence $\circeq$  on $\{1,\ldots,\tau\}$ defined as $t\circeq t'$ if and only if $\frac{t}{\tau}\doteq\frac{t'}{\tau}$, and $\mon{\circeq}{t}$ denotes a monad of this equivalence class, whose generating sequence is given by 
\[
\circtop{R}_{k}\ =\ \left\{
\langle a,b\rangle \,;\, 
  {\textstyle\left|\frac{a-b}{\tau}\right|<\frac{1}{2^k}}
\right\}.
\]

Contrary to the first view, in this perspective all the points in time appear to be evenly and continuously distributed over the whole area, as if looking down on the history $\{1,\ldots,\tau\}$ from far above.
Thus, the continuum $\circtop{\mathscr{C}}=\langle\{1,\ldots,\tau\},\circeq\rangle$ representing this perspective is called the \textit{bird's eye view} of $\{1,\ldots,\tau\}$.

To collect all the points that are distinguishable by $\circeq$,
let,  $t_{(i,j)}$ denote a position in $\{1,\ldots,\tau\}$ where $i\in\FN$, $j\in j(i)$ in which $j(0)=j(1)=1$ and  $j(i)=2^{i-2}$ for all $i\geq 2$ as 
\[
t_{(i,j)}\ =\
\begin{cases}
  \tau^i  & \text{ if } i<2\\
  \frac{2j+1}{2^{i-1}}\cdot\tau  & \text{ otherwise.}
\end{cases}
\]
The class $\circtop{\choice{}{}}$ composed of all these points is a choice class of $\circtop{\mathscr{C}}$.
\[
\circtop{\choice{}{}}=\{t_{(i,j)}\,;\, \left((i<2)\wedge(j=0)\right)\vee(\exists i\in\FN\setminus 2)(j\in 2^{i-2})\}.
\]
The class consisting of the monads of these points is called an \textit{appearance of $\{1,\ldots,\tau\}$ by a bird's eye view} denoted by $\circtop{\mathscr{S}}=\{\mon{\circeq}{t}\,;\, t\in\circtop{\choice{}{}}\}$.
Each monad contained in the class $\circtop{\mathscr{S}}$ is, in contrast to the perspective view, of the same size and equally distributed over $\{1,\ldots,\tau\}$.

$\circtop{\mathscr{S}}$ is also a $\sigma$-partition of $\{1,\ldots,\tau\}$ since $\mon{\circtop{=}}{{t}}\cap\mon{\circtop{=}}{{t'}}=\emptyset$ for all ${t}\not\doteq {t'}$ and $\{1,\ldots,\tau\}=\cup_{t\in\circtop{\choice{}{}}} \mon{\circtop{=}}{t}$.

\begin{defn}\label{consistent_bird}
  A whole history $\textbf{h}^\tau$ is said to be \textit{consistent with the bird's eye view} iff $h_t=h_{t'}$ for every pair $t,t'$ satisfying $t \circeq t'$.
  Let $\circtop{\mathbf{\Gamma}^\tau}$ denote the $\tau$-repeated game with the whole histories restricted to those consistent with the perspective view, and $\circtop{\mathbf{H}^\tau}$ denote the class of these whole histories.
\end{defn}

Since each monad $\mon{\circtop{=}}{t}$ is an equivalence class of $\circtop{=}$, the monad of each $t_{(i,j)}\in\circtop{\choice{}{}}$ can be generated by a sequence $(S_{(i,j),k})$ where $S_{(i,j),k}=\circtop{R}_k``\bigl\{{t_{(i,j)}}\bigr\}$ for all $k\geq i$ and, otherwise, $S_{(i,j),k}=\emptyset$.
Let $F_{\circtop{\mathscr{S}}}$ be a BAF on $\circtop{\mathscr{S}}$.
The approximating sequence for each element $\mon{\circtop{=}}{t_{(i,j)}}$ of $\circtop{\mathscr{T}}$ is given by $F_{\circtop{\mathscr{S}}}(\mon{\circtop{=}}{t_{(i,j)}})\ =\ (|S_{(i,j),k}|)_{k\in\FN}$ where
\[
|S_{(i,j),k}|\ =\ \frac{\tau}{2^{k-1+[i<2]}}\cdot [k\geq i]. 
\]

The rational cut\footnote{A rational cut $\frac{A}{B}$ represents a ratio of a cut $A$ to a cut $B$.
Rational numbers can only represent the numbers both of whose numerator and denominator are set numbers.
Rational cuts extend both of their numerators and denominators to be Borel classes.
A typical example of that numbers is $\frac{1}{\FN}$, which cannot be represented by rational numbers.
See details for Sakahara and Sato \cite{kalina_measure}.} of $\mon{\circtop{=}}{t_{(i,j)}}$ measured against $\{1,\ldots,\tau\}$ is given by
\[
\lim_{k\in\FN} {\frac{F_{\circtop{\mathscr{S}}}\bigl({\mon{\circtop{=}}{t_{(i,j)}}}\bigr)}{F_{\circtop{\mathscr{S}}}(\{1,\ldots,\tau\})}}
\ =\  \lim_{k\in\FN}{\frac{\frac{\tau}{2^{k-1+[i<2]}}\cdot [k\geq i]}{{\tau}}} 
\ =\ {\frac{1}{\FN}}.
\]
The rational cut of the interval $\{\ell+1,\ldots,m\}$ where $\frac{m-\ell}{\tau}> c$ for some $c\in\FQ$ is also given by 
\begin{eqnarray*}
\lim_{k\in\FN}\frac{F_{\circtop{\mathscr{S}}}\left({\sum_{i\in\FN}\sum_{\ell<\frac{j}{2^{i-1}}\cdot\tau\leq m }\mon{\circtop{=}}{t_{(i,j)}}}\right)}{F_{\circtop{\mathscr{S}}}(\tau)}\\
&\hspace{-3.2cm} =\ \displaystyle\lim_{k\in\FN}{{\frac{g(k)\cdot \tau/{2^{k-1}}}{\tau}}}
\ =
\ \frac{m-\ell}{\tau},
\end{eqnarray*}
where $g(k)=\left|\left\{ j\in 2^{k-1}\,;\, \ell<\tau\cdot{j}/{2^{k-1}}\leq m \right\}\right|$ is an approximate number of monads contained in the interval.

A bird's eye measure on the continuum $\langle\{1,\ldots,\tau\},\circtop{=}\rangle$ is given by 
\[
m_{\tau,F_{\circtop{T}}}(T)\ =\ \mon{}{\frac{m-\ell}{\tau} }
\]
for each $T=\{\ell+1,\ldots,m\}$.
For ease of notation, let $\circtop{m}$ denote the measure.
Note that, unlike the previous case, this is a probability measure.

\section{Payoff Functions of $\mathbf{\Gamma}^\tau$}\label{Payoff}

Preference relations on terminal whole histories are decided by the way each component history is evaluated.
Subgame perfect equilibria are affected by these evaluations.
It was already shown in Proposition \ref{ext} that $s^{*\tau}$ is a subgame perfect equilibrium of the game $\mathbf{\Gamma}^\tau$ under certain conditions.
However, there are a variety of subgame perfect equilibria other than $s^{*\tau}$ depending on how the huge whole history is perceived by the agents and how their payoff functions are constructed.
To see what kind of new equilibria emerge, 
let us focus our attention on three types of criteria.
Note that the dynamic consistency condition is assumed throughout this section.

\subsection{Discounted/Simple Sum}

The profile of preference relations $(\succsim_i^\tau)$ of generalised repeated games is said to follow the \textit{discounted sum} criterion of constituent games if there exists a finite rational $\delta\in(0,1]\cap\FQ$ (the \textit{discount factor}) such that the sequence $(v_t)_{t\in\{1,\ldots,\tau\}}$ of payoffs is preferred to the sequence $(w_t)_{t\in\{1,\ldots,\tau\}}$ if and only if
\[
\int_{\widehat{\mathscr{T}}}\delta^{t-1}{v_t}\ d\widehat{m}(t) \ \geq \ \int_{\widehat{\mathscr{T}}}\delta^{t-1}{w_t}\ d\widehat{m}(t)
\]
or equivalently 
\[
\begin{matrix}
{\displaystyle\lim_{k\in\FN}\mon{}{\sum_{t\leq 2^k}(\underset{\text{near future}}{\underbrace{\delta^{t-1}v_t}})+\underset{\text{distant future}}{\underbrace{(\tau-2^{k+1})\delta^{\tau/2-1}v_{\tau/2}}}+\sum_{t\leq 2^k}(\underset{\text{near end}}{\underbrace{\delta^{\tau-t}v_{\tau-t+1}}})}}\hfill\\
\hspace{0.5cm}\hfill \geq 
\displaystyle{\lim_{k\in\FN}\mon{}{\sum_{t\leq 2^k}(\delta^{t-1}w_t)+(\tau-2^{k+1})\delta^{\tau/2-1}w_{\tau/2}+\sum_{t\leq 2^k}(\delta^{\tau-t}w_{\tau-t+1})}}.
\end{matrix}
\]
When $\delta=1$, these preference relations are specifically said to follow the \textit{simple sum} criterion.

The formula may seem complicated, but the idea is simple.
The sum is made up of three parts: 
The first term adds up the discounted values of present and near-future payoffs, while the last term adds up those at the end and those that are finite steps away from the end.
The middle term represents a discounted value of the distant future payoffs, which are too far beyond the finite horizon but before the end to be distinguished, and so have been reduced to a single value, $v_{\tau/2}$, just as a vanishing point.

The reason why the values $v_{\tau/2}$ are given an extremely high weight is that all the weights carried by those points that are indiscernible from the point $\tau/2$ are concentrated in this single point. 
All the points far beyond the finite horizon and behind the last term are degenerated to this point, and so this single value $v_{\tau/2}$ carries the weight of all of them.
It also implies that $(v_t)$ and $(w_{t})$ are considered to be equal if $v_t =w_t$ for all $t\in\widehat{\choice{}{}}$.
Since the periods outside $\widehat{\choice{}{}}$ cannot be distinguished by the perspective view so that it is assumed that $(v_t)$ and $(w_t)$ satisfy $v_t=v_{t'}$ and $w_t=w_{t'}$ for all $t\ \widehat{=}\ t'$, or they are consistent with the perspective view as in Definition \ref{consistent_perspective}.

The reason why the values are given by their monads is also important.
These limits may be rational cuts, since they do not have rational representations in general.
For example, consider the sequence $(v_t)$ where $v_t=1$ for all $t\in\FN$ and $v_t=0$ otherwise.
The value of the discounted sum of $(v_i)$ is given by $\frac{1-1/\FN}{1-\delta}=\frac{1}{1-\delta}-\frac{1}{\FN}$, which is a rational cut and is approximated by $\lim_{k\in\FN}\mon{}{\frac{1-\delta^k}{1-\delta}}=\frac{1}{1-\delta}$\footnote{
This is the essence of Kalina measure. See Sakahara and Sato \cite{kalina_measure} for details.} which is real.

The immediate example that satisfies this criterion is a real-valued payoff function $\mathscr{U}_d^{\tau}:\widehat{\mathbf{H}}^\tau\rightarrow R^+$ defined as
\begin{eqnarray*}
  \mathscr{U}_d^{\tau}(\mathbf{h})
  \ =\ \lim_{k\in\FN}\mon{}{
    \begin{matrix}
      \sum_{t\leq 2^k} \Bigl(\delta^{t-1}U(\mathbf{h}(t))\Bigr)\hfill\\
      +(\tau-2^{k+1})\delta^{\tau/2-1}U(\mathbf{h}(\tau/2))\hspace{.3cm}\\
      \hspace{1.5cm}
      +\sum_{t\leq 2^k} \Bigl(\delta^{\tau-t}U(\mathbf{h}(\tau-t+1))\Bigr)
    \end{matrix}
  }
\end{eqnarray*}
where $U(h)\in\FQ$ for all $h\in Z$.
This function is said to be a \textit{discounted sum payoff function} and, separately, it is said to be a \textit{simple sum payoff function}, when $\delta=1$, and is denoted $\mathscr{U}_s^\tau$ of the game $\widehat{\mathbf{\Gamma}}^\tau$.

The condition, $U(h)\in\FQ$, imposed on the values of the payoffs at the terminal histories is essential.
Without this constraint, $\mathscr{U}_d^{\tau}$ cannot satisfy huge transitivity.
For example, suppose $U(h)=\frac{1}{\tau}$ and $U(j)=\frac{2}{\tau}$ for some huge $\tau$ and $\delta=1$, and define a whole history $\mathbf{h}_t={h^{\tau-t}}^{\frown} j^{t}$ for each $t\in \{0,\ldots,\tau\}$.
Then, $\mathscr{U}_s^{\tau}(\mathbf{h}_t)\geq\mathscr{U}_s^\tau(\mathbf{h}_{t+1})$ holds for all $t\in\{0,\ldots,\tau\}$.
However, $\mathscr{U}_s^{\tau}(\mathbf{h}_0)\not\geq\mathscr{U}_s^{\tau}(\mathbf{h}_\tau)$ 
since $\mathscr{U}_s^{\tau}(\mathbf{h}_0)=1$ and $\mathscr{U}_s^{\tau}(\mathbf{h}_\tau)=2$. 

\begin{lemma}\label{ssws}
  A preference relation $\succsim^\tau$ following the discounted sum criterion satisfies weak separability and huge transitivity.
\end{lemma}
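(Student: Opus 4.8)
The plan is to reduce both properties to the real-valued payoff representation that underlies the criterion. Since $\succsim^\tau$ follows the discounted sum criterion, there exist a discount factor $\delta\in(0,1]\cap\FQ$ and a utility $U$ with values in $\FQ$ on the terminal histories of $\Gamma$ representing the constituent preference $\succsim_i$, such that for all whole histories $\mathbf{h},\mathbf{h}'$ one has $\mathbf{h}\succsim^\tau\mathbf{h}'$ exactly when $\mathscr{U}_d^\tau(\mathbf{h})\geq\mathscr{U}_d^\tau(\mathbf{h}')$ in $R^+$ — indeed the defining integral of the criterion, evaluated on the payoff sequence of a history, is precisely $\mathscr{U}_d^\tau$ of that history. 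The hypothesis $U(h)\in\FQ$ — exactly the point stressed just before the lemma — is what makes each of the three constituent sums a genuine element of $R^+$, i.e.\ a real number or $\pm\infty$, so that the comparison lives inside the linearly ordered class $R^+$ rather than among the AST cuts, where the order may fail to be transitive. Granting this, the proof needs only two elementary facts: that $\geq$ on $R^+$ is transitive, and that for each index of the approximating sequence the bracketed rational defining $\mathscr{U}_d^\tau(\mathbf{h})$ is a sum over $t$ of the coordinate payoffs $U(\mathbf{h}(t))$ with nonnegative coefficients. The main obstacle is precisely this reduction step: verifying, uniformly in the histories being compared, that under $U(h)\in\FQ$ the limits $\lim_{k\in\FN}\mon{}{\cdot}$ really are elements of $R^+$; once that is settled, the rest is bookkeeping.

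For weak separability I would fix $\mathbf{h}'\in C^{k}$, $\mathbf{h}'''\in C^{\tau-k-1}$ and $j,j'\in C$ with $j\succsim_i j'$, set $\mathbf{a}=\mathbf{h}'^\frown(j)^\frown\mathbf{h}'''$ and $\mathbf{b}=\mathbf{h}'^\frown(j')^\frown\mathbf{h}'''$ (both assumed to lie in $\mathbf{Z}^\tau$), and note that $\mathbf{a}$ and $\mathbf{b}$ coincide at every coordinate except position $k+1$, where $U(\mathbf{a}(k+1))=U(j)\geq U(j')=U(\mathbf{b}(k+1))$ because $U$ represents $\succsim_i$. In the formula for $\mathscr{U}_d^\tau$ the coordinate $k+1$ enters with coefficient $\delta^{k}\geq 0$ when $k+1$ is a point of the near future or of the near end, with the (nonnegative) distant-future coefficient when $k+1=\tau/2$, and does not appear at all when $k+1$ is a distant-future point other than $\tau/2$; every other coordinate contributes identically to $\mathbf{a}$ and to $\mathbf{b}$. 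Hence each term of the approximating sequence for $\mathscr{U}_d^\tau(\mathbf{a})$ is at least the corresponding term for $\mathscr{U}_d^\tau(\mathbf{b})$, and since $a\geq b$ implies $\mon{}{a}\geq\mon{}{b}$ and the limit of monads preserves $\geq$, we obtain $\mathscr{U}_d^\tau(\mathbf{a})\geq\mathscr{U}_d^\tau(\mathbf{b})$, that is $\mathbf{a}\succsim^\tau\mathbf{b}$.

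For huge transitivity, let $\kappa$ be huge and let $(\mathbf{h}_k)_{k=1}^\kappa$ be a chain of $\succsim^\tau$, so that $\mathscr{U}_d^\tau(\mathbf{h}_{k+1})\geq\mathscr{U}_d^\tau(\mathbf{h}_k)$ in $R^+$ for all $k$. Given $\ell\geq m$ I would argue by induction on the set $\{m,\ldots,\ell\}$: the base case $\mathscr{U}_d^\tau(\mathbf{h}_m)\geq\mathscr{U}_d^\tau(\mathbf{h}_m)$ is immediate, and if $\mathscr{U}_d^\tau(\mathbf{h}_k)\geq\mathscr{U}_d^\tau(\mathbf{h}_m)$ then transitivity of $\geq$ on $R^+$ gives $\mathscr{U}_d^\tau(\mathbf{h}_{k+1})\geq\mathscr{U}_d^\tau(\mathbf{h}_k)\geq\mathscr{U}_d^\tau(\mathbf{h}_m)$. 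Hence $\mathscr{U}_d^\tau(\mathbf{h}_\ell)\geq\mathscr{U}_d^\tau(\mathbf{h}_m)$, i.e.\ $\mathbf{h}_\ell\succsim^\tau\mathbf{h}_m$; in fact this shows $\succsim^\tau$ is fully transitive, which is stronger than needed. This also pinpoints where the argument would collapse without $U(h)\in\FQ$: the values of $\mathscr{U}_d^\tau$ would no longer compare inside $R^+$, and the familiar AST situation in which consecutive quantities such as $\frac{t}{\tau}$ are indiscernible while the extreme ones $\frac{0}{\tau}$ and $\frac{\tau}{\tau}$ are discernible is precisely the failure of transitivity that the finiteness of the payoffs rules out.
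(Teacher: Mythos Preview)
Your treatment of weak separability is correct and coincides with the paper's argument.

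The argument for huge transitivity, however, has a genuine gap. You reduce the claim to ``$\geq$ on $R^+$ is transitive, so by induction on $\{m,\ldots,\ell\}$ we are done.'' In AST this inference fails: set-induction is available only for set-theoretically definable formulas, whereas the predicate ``$\mathscr{U}_d^\tau(\mathbf{h}_k)\geq\mathscr{U}_d^\tau(\mathbf{h}_m)$'' is a class-formula (it quantifies over $\FN$ through the monad and the limit). Three-ary transitivity of $\geq$ on $R^+$ therefore does \emph{not} entail huge transitivity. A concrete counterexample at the level of $R$ itself: for huge $\tau$, set $a_k=\mon{}{-k/\tau}\in R$ for $k\in\{1,\ldots,\tau\}$. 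Then $a_{k+1}=a_k$ (so certainly $a_{k+1}\geq a_k$) for every $k$, yet $a_\tau=\mon{}{-1}<\mon{}{0}=a_1$. This is exactly the $\frac{t}{\tau}$ phenomenon the paper flags when introducing huge transitivity, and it shows that even values living honestly in $R^+$ can drift across a huge chain. Your diagnosis that $U(h)\in\FQ$ matters because it places the values in $R^+$ is therefore misplaced: being in $R^+$ is not the obstruction.

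The paper's proof does not attempt any induction. It argues by contradiction and exploits the specific shape of $\mathscr{U}_d^\tau$: assuming a chain with $\mathscr{U}_d^\tau(\mathbf{h}_k)\geq\mathscr{U}_d^\tau(\mathbf{h}_{k+1})$ for all $k$ but $\mathscr{U}_d^\tau(\mathbf{h}_\ell)<\mathscr{U}_d^\tau(\mathbf{h}_m)$, it first observes that any failure forces $m-\ell$ huge, $\delta<1$, and only finitely many strict steps; it then shows that equality $\mathscr{U}_d^\tau(\mathbf{h}_k)=\mathscr{U}_d^\tau(\mathbf{h}_{k+1})$ together with $\delta,U(h)\in\FQ$ pins the near-future coordinates and forces $(\tau-2^{x+1})\delta^{\tau/2-1}\in\mon{}{0}$, so that the only possible source of a discrepancy between $\mathscr{U}_d^\tau(\mathbf{h}_\ell)$ and $\mathscr{U}_d^\tau(\mathbf{h}_m)$ is a term of the form $(\tau-2^{x+1})\delta^{\tau/2-1}\cdot\underline{q}$ with $\underline{q}\in\FQ$, which is again in $\mon{}{0}$ --- contradiction. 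The finiteness of $U$ is used precisely here, to prevent a huge $\underline{q}$ from compensating the infinitesimal coefficient; that is the mechanism you should isolate, rather than appealing to transitivity of $\geq$.
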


\begin{proof}
  Suppose that ${j},{j}'\in{C}$ satisfy ${j}\succsim{j}'$.
  It implies that ${U}({j})\geq {U}({j}')$ and hence $\mathscr{U}_{d}^\tau(\mathbf{h}^\frown({j})^\frown\mathbf{h}')\geq\mathscr{U}_{d}^\tau(\mathbf{h}^\frown({j}')^\frown\mathbf{h}')$.
  So that $\mathbf{h}^\frown({j})^\frown\mathbf{h}'\succsim^\tau\mathbf{h}^\frown({j})'^\frown\mathbf{h}'$ holds.
  
  To prove the second claim, suppose that a chain $(\mathbf{h}_k)_{k\in\{1,\ldots,\kappa\}}$ of $\succsim$ satisfies the relation $\mathbf{h}_\ell\prec^\tau\mathbf{h}_m$ for some $\ell,m\in\{1,\ldots,\kappa\}$ with $\ell<m$.  
  To satisfy $\mathscr{U}_d^\tau(\mathbf{h}_k)\geq\mathscr{U}_d^\tau(\mathbf{h}_{k+1})$ for all $k\in\{\ell,\ldots,m\}$ and $\mathscr{U}_d^\tau(\mathbf{h}_\ell)<\mathscr{U}_d^\tau(\mathbf{h}_{m})$,
  three conditions must be satisfied.
  (1) $m-\ell$ is huge, (2) $\delta<1$, and (3) the number of $k\in\{\ell,\ldots,m-1\}$ satisfying $\mathscr{U}_d^\tau(\mathbf{h}_k)>\mathscr{U}_d^\tau(\mathbf{h}_{k+1})$ is finite.

  Suppose that $\mathscr{U}_d^\tau(\mathbf{h}_k)=\mathscr{U}_d^\tau(\mathbf{h}_{k+1})$ holds for all $k\in\{\ell,\ldots, m-1\}$.
  This implies that $\mathbf{h}_k$ and $\mathbf{h}_{k+1}$ can only have different payoffs after huge periods have elapsed, since $\delta,U(h)\in\FQ$.
  Formally, it must be satisfied that
  \[
  U(\mathbf{h}_{k}(t))=U(\mathbf{h}_{k+1}(t))\qquad\text{ for all } t\in\FN
  \]
  and
  \[
  \lim_{x\in\FN}\mon{}{
    \begin{matrix}
    \left(\tau-2^{x+1}\right)\delta^{\tau/2-1} \left(U(\mathbf{h}_{k}(\tau/2))-U(\mathbf{h}_{k+1}(\tau/2))\right)\hfill\\
    \hspace{.5cm}+\sum_{t\leq 2^x}\delta^{{\tau-t}}\left(U(\mathbf{h}_{k}(\tau-t+1))-U(\mathbf{h}_{k+1}(\tau-t+1))\right)
    \end{matrix}
  }
  \ =\ \mon{}{0}.
  \]
  This implies that $\lim_{x\in\FN}\mon{}{(\tau-2^{x+1})\delta^{\tau/2-1}}\in\mon{}{0}$.
  On the other hand, to satisfy $\mathscr{U}_d^\tau(\mathbf{h}_\ell)<\mathscr{U}_d^\tau(\mathbf{h}_m)$ it is necessary to satisfy
  \[
  \mathscr{U}_d^\tau(\mathbf{h}_m) - \mathscr{U}_d^\tau(\mathbf{h}_\ell)
  \ =\ \lim_{x\in\FN}\mon{}{\left(\tau-2^{x+1}\right)\delta^{\tau/2-1} \underline{q}}
    \ >\ \mon{}{0}
  \]
  where $\underline{q} = U(\mathbf{h}_{m}(\tau/2))-U(\mathbf{h}_{\ell}(\tau/2))$.
  However, since $U(h)\in\FQ$ for all $h\in Z$ and $\lim_{x\in\FN}\mon{}{(\tau-2^{x+1})\delta^{\tau/2-1}}\in\mon{}{0}$, this is impossible.
\end{proof}

Because the discount factor undervalues future payoffs, the gains yielded earlier are preferred to those yielded later, i.e. the terminal whole history $\mathbf{h}^\frown(h,h')^\frown\mathbf{h}'$ 
is preferred to 
$\mathbf{h}^\frown(h',h)^\frown\mathbf{h}'$ if $h\succsim h'$. 
This property is called \textit{the sooner the better} principle.
In fact, the preference relations that follow the discounted sum criterion follow this principle.

\begin{lemma}
    A  preference relation $\succsim^\tau$ that follows the discounted sum criterion satisfies the sooner the better principle.
\end{lemma}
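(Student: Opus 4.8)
The plan is to reduce the whole statement to the one-line fact that, in a discounted sum, a payoff obtained in period $s$ is weighted by $\delta^{s-1}$, so that moving a larger payoff one step earlier (and the smaller one step later) cannot decrease the total; the only genuine work is to check that this survives the passage to the perspective measure $\widehat m$, which is where the three classes of points of $\widehat{\mathscr C}$ have to be handled.

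First I would fix notation. Set $t=|\mathbf h|+1$, so that the whole histories $\mathbf g=\mathbf h^\frown(h,h')^\frown\mathbf h'$ and $\mathbf g'=\mathbf h^\frown(h',h)^\frown\mathbf h'$ agree in every component except that $\mathbf g$ has $h$ in period $t$ and $h'$ in period $t+1$, while $\mathbf g'$ has them interchanged. Let $(v_s)$ and $(w_s)$ be the associated payoff sequences, and put $a=U(h)$, $b=U(h')$, where $U$ is the finite-rational-valued utility representing $\succsim$; since $h\succsim h'$ we have $a\ge b$, with $v_s=w_s$ for $s\notin\{t,t+1\}$, $v_t=w_{t+1}=a$ and $v_{t+1}=w_t=b$. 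By the discounted sum criterion, $\mathbf g\succsim^\tau\mathbf g'$ amounts to
\[
\int_{\widehat{\mathscr T}}\delta^{s-1}v_s\,d\widehat m(s)\ \ge\ \int_{\widehat{\mathscr T}}\delta^{s-1}w_s\,d\widehat m(s),
\]
so it suffices to show that the difference of the two sides is $\succsim\mon{}{0}$.

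The decisive structural remark is that the adjacent periods $t$ and $t+1$ always lie in the same one of the three classes of points of $\widehat{\mathscr C}$: both are points of the near future when $t\in\FN$ (as $\FN$ is closed under successor), both are points of the near end when $\tau-t\in\FN$ (since then $\tau-(t+1)\in\FN$), and otherwise both are points of the distant future. In the distant-future case $t\ \widehat=\ t+1$, so consistency with the perspective view (Definition \ref{consistent_perspective}, which holds on the domain $\widehat{\mathbf H}^\tau$ of the payoff function) forces $a=b$; then $(v_s)$ and $(w_s)$ coincide and $\mathbf g\succsim^\tau\mathbf g'$ holds trivially. In the near-future case only the ``near future'' summand of the limit-of-monads form of the integral is affected, and for every finite $k$ with $2^k\ge t+1$ the $\mathbf g$-summand minus the $\mathbf g'$-summand equals $\delta^{t-1}a+\delta^{t}b-\delta^{t-1}b-\delta^{t}a=\delta^{t-1}(1-\delta)(a-b)$. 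In the near-end case only the ``near end'' summand is affected; tracking the exponents in $\sum_{s\le 2^k}\delta^{\tau-s}v_{\tau-s+1}$, the period-$t$ component carries the factor $\delta^{t-1}$ and the period-$(t+1)$ component carries $\delta^{t}$, so the difference is again $\delta^{t-1}(1-\delta)(a-b)$.

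Finally, since $\delta\in(0,1]$ and $a\ge b$, the quantity $c:=\delta^{t-1}(1-\delta)(a-b)$ is a non-negative finite rational independent of $k$; hence the two integrals have the form $\lim_{k\in\FN}\mon{}{Q_k+c}$ and $\lim_{k\in\FN}\mon{}{Q_k}$ for one common sequence $(Q_k)$, and $\mon{}{Q_k+c}\succsim\mon{}{Q_k}$ in $R^+$ for every $k$ — whether $Q_k$ stays bounded or diverges to $\pm\infty$ — so the limits compare the same way. By the discounted sum criterion this gives $\mathbf h^\frown(h,h')^\frown\mathbf h'\succsim^\tau\mathbf h^\frown(h',h)^\frown\mathbf h'$, i.e.\ the sooner the better principle. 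The step I expect to be the main obstacle is exactly the same-region dichotomy for the pair $t,t+1$ together with the exponent bookkeeping in the near-end summand; the rest is the elementary inequality $\delta^{t-1}(1-\delta)(a-b)\ge 0$.
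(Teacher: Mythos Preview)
Your argument is correct and reaches the same elementary inequality as the paper, but you take a noticeably longer route. The paper's proof is two lines: from $U(j)\ge U(j')$ and $\delta\le 1$ one gets $U(j)-U(j')\ge\delta\bigl(U(j)-U(j')\bigr)$, hence $U(j)+\delta U(j')\ge U(j')+\delta U(j)$, and concludes immediately. It does not split into the three regions of $\widehat{\mathscr C}$, nor does it invoke the consistency constraint in the distant-future case; the swap affects only two consecutive slots, and the paper simply reads off the sign of the resulting local difference without tracking how it enters the limit-of-monads expression.

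Your case analysis is not wrong --- the same-region dichotomy for $t,t+1$ is valid, and your exponent bookkeeping in the near-end summand is correct --- but it is more machinery than the statement needs. One small imprecision: in the near-end case $t$ is huge, so for $\delta<1$ the quantity $c=\delta^{\,t-1}(1-\delta)(a-b)$ is infinitesimal rather than a ``non-negative finite rational'' as you write; the conclusion is unaffected since $c\ge 0$ still yields $\mon{}{Q_k+c}\ge\mon{}{Q_k}$ (indeed with equality in the monad). What your extra care buys is an explicit check that the local inequality survives the perspective-measure integral; what the paper's version buys is brevity.
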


\begin{proof}
  Suppose that ${j},{j}'\in{C}$ satisfy ${j}\succsim{j}'$.
  This implies that ${U}({j})\geq {U}({j}')$ and $\delta{U}({j})\geq \delta{U}({j}')$.
  Thus, ${U}({j})+\delta{U}({j}')\geq {U}({j}')+\delta{U}({j})$ holds since ${U}({j})-{U}({j}')\geq \delta({U}({j})-{U}({j}'))$.
  This implies that $\mathbf{h}^\frown({j},{j}')^\frown\mathbf{h}'\succsim^\tau\mathbf{h}^\frown({j}',{j})^\frown\mathbf{h}'$ holds.
\end{proof}

When the discount factor is 1, the timing of the realisation of each history is no longer an issue. 
This is called the \textit{commutativity} property.
It is easy to see that the preference relations that follow the simple sum criterion always satisfy the commutativity property.

\begin{lemma}
    A preference relation $\succsim^\tau$ that follows the simple sum criterion satisfies commutativity.
\end{lemma}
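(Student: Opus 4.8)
The goal is to show that, under the simple sum criterion, $\mathbf{h}^\frown(h,h')^\frown\mathbf{h}'\sim^\tau\mathbf{h}^\frown(h',h)^\frown\mathbf{h}'$ for all $h,h'\in C$ and all admissible $\mathbf{h},\mathbf{h}'$. The plan is to prove that the two whole histories receive exactly the same simple sum payoff, so that each is weakly preferred to the other. Since the simple sum criterion is the $\delta=1$ instance of the discounted sum criterion, I would start from the three-part representation already written down for $\mathscr{U}_s^\tau$: on a sequence $(v_t)$ consistent with the perspective view the value splits into a near-future part $\sum_{t\in\FN}v_t$ of unit-weight terms, a single distant-future term of weight $\tau-\FN$ carrying the representative value $v_{\tau/2}$, and a near-end part $\sum_{t\in\FN}v_{\tau-t+1}$ of unit-weight terms.

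The key observation is that an adjacent transposition at periods $p,p+1$ can never straddle the boundary between these three regions, because $\FN$ is closed under successor and under predecessor: if $p\in\FN$ then $p+1\in\FN$, and if $\tau-p-1\in\FN$ then $\tau-p\in\FN$. Hence only three cases arise: $p$ and $p+1$ both lie in the near future, both lie in the near end, or both lie in the distant future, i.e. in $\mon{\widehat{=}}{\tau/2}$. In the first two cases the transposition merely exchanges two unit-weight summands in $\FQ$, leaving the corresponding sum unchanged by commutativity of addition; in the third case consistency with the perspective view (Definition~\ref{consistent_perspective}) forces the two coordinates to coincide, so the transposition acts as the identity on the domain $\widehat{\mathbf{H}}^\tau$. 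In every case $\mathscr{U}_s^\tau$ assigns the same value to $\mathbf{h}^\frown(h,h')^\frown\mathbf{h}'$ and $\mathbf{h}^\frown(h',h)^\frown\mathbf{h}'$, which yields the claimed indifference.

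Equivalently, and more transparently, one may reread the proof of the preceding ``sooner the better'' lemma with $\delta=1$: there the decisive inequality was $U(j)-U(j')\geq\delta\bigl(U(j)-U(j')\bigr)$, which for $\delta=1$ becomes an equality, so $U(j)+U(j')=U(j')+U(j)$ and one obtains indifference rather than merely weak preference. The only real subtlety — and the step I expect to require the most care — is the bookkeeping around the collapsed middle term: one must verify that the huge weight $\tau-\FN$ is attached to the same history before and after the swap, which is exactly what the region-closure argument above guarantees. Everything else is routine.
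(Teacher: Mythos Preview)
Your proposal is correct, and your ``equivalently'' paragraph is precisely the paper's own proof: the paper writes only that $U(j)+U(j')=U(j')+U(j)$ for all $j,j'\in C$, hence $\mathbf{h}^\frown(j,j')^\frown\mathbf{h}'\sim^\tau\mathbf{h}^\frown(j',j)^\frown\mathbf{h}'$. Your three-region bookkeeping (near future, distant future, near end) is sound but unnecessary scaffolding here --- the paper is content to read the commutativity of addition straight off the constituent payoffs without unpacking the perspective-view decomposition at all.
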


\begin{proof}
  Since ${U}({j})+{U}({j}')= {U}({j}')+{U}({j})$ holds for all $j,j'\in C$,
  the relation $\mathbf{h}^\frown(j,j')^\frown\mathbf{h}'\sim^\tau\mathbf{h}^\frown(j',j)'^\frown\mathbf{h}'$ holds.
\end{proof}

\begin{ex}[The Prisoner's Dilemma continued from Example \ref{PD}]\label{PD2}
  One might think that a linear preference order would always be obtained under the sooner the better principle, but this is not the case.
  As can be seen from the Hasse diagrams of Player 1's preference relations in Figure \ref{2_pd_discount_pref}, some parts remain unordered.
  
  \begin{figure}[htbp]
    \[
    \centerline{
      \xymatrix@R=14pt@C=10pt@M=2pt{
        && \text{\tiny$(\text{CS},\text{CS})$}
        \ar@{-}[d]
        \\
        && \text{\tiny$(\text{CS},\text{SS})$}
        \ar@{-}[d]
        \ar@{=>}@<2ex>[d]^{\text{\tiny\  as $\delta\rightarrow 1$}}
        \\
        && \text{\tiny$(\text{SS},\text{CS})$}
        \ar@{-}[d]
        \ar@{-}[rrd]
        \\
        && \text{\tiny$(\text{CS},\text{CC})$}
        \ar@{=>}@<2ex>[d]^{\text{\tiny\  as $\delta\rightarrow 1$}}
        \ar@{-}[lld]
        \ar@{-}[d]
        && \text{\tiny$(\text{SS},\text{SS})$}
        \ar@{-}[d]
        \\
        \text{\tiny$(\text{CS},\text{SC})$}
        \ar@{=>}@<-2ex>[d]_{\text{\tiny\  as $\delta\rightarrow 1$}}
        \ar@{-}[d]
        \ar@{-}[rrd]
        && \text{\tiny$(\text{CC},\text{CS})$}
        \ar@{-}[lld]
        \ar@{-}[rrd]
        && \text{\tiny$(\text{SS},\text{CC})$}
        \ar@{=>}@<2ex>[d]^{\text{\tiny\  as $\delta\rightarrow 1$}}
        \ar@{-}[d]
        \ar@{-}[lld]
        \\
        \text{\tiny$(\text{SC},\text{CS})$}
        \ar@{-}[rrd]
        && \text{\tiny$(\text{SS},\text{SC})$}
        \ar@{=>}@<2ex>[d]^{\text{\tiny\  as $\delta\rightarrow 1$}}
        \ar@{-}[d]
        && \text{\tiny$(\text{CC},\text{SS})$}
        \ar@{-}[d]
        \\
        && \text{\tiny$(\text{SC},\text{SS})$}
        \ar@{-}[d]
        && \text{\tiny$(\text{CC},\text{CC})$}
        \ar@{-}[lld]
        \\
        && \text{\tiny$(\text{CC},\text{SC})$}
        \ar@{=>}@<2ex>[d]^{\text{\tiny\  as $\delta\rightarrow 1$}}
        \ar@{-}[d]
        \\
        && \text{\tiny$(\text{SC},\text{CC})$}
        \ar@{-}[d]
        \\
        && \text{\tiny$(\text{SC},\text{SC})$}
        \\
        \save[]+<2cm,0cm>*\txt<14pc>{\scriptsize (a) Preference relations under discounted sum criterion}\restore
      }
      \hspace{.4cm}
      \xymatrix@R=14pt@C=-8pt@M=2pt{
        && \text{\tiny$(\text{CS},\text{CS})$}
        \ar@{-}[d]
        \\
        && \text{\tiny$(\text{CS},\text{CS})$, \tiny$(\text{SS},\text{CS})$}
        \ar@{-}[d]
        \ar@{-}[rrd]
        \\
        && \text{\tiny$(\text{CS},\text{CC})$, \tiny$(\text{CC},\text{CS})$}
        \ar@{-}[lld]
        \ar@{-}[rrd]
        && \text{\tiny$(\text{SS},\text{SS})$}
        \ar@{-}[d]
        \\
        \text{\tiny$(\text{CS},\text{SC})$, \tiny$(\text{SC},\text{CS})$}
        \ar@{-}[rrd]
        && 
        && \text{\tiny$(\text{SS},\text{CC})$, \tiny$(\text{CC},\text{SS})$}
        \ar@{-}[d]
        \ar@{-}[lld]
        \\
        && \text{\tiny$(\text{SS},\text{SC})$, \tiny$(\text{SC},\text{SS})$}
        \ar@{-}[d]
        && \text{\tiny$(\text{CC},\text{CC})$}
        \ar@{-}[lld]
        \\
        && \text{\tiny$(\text{CC},\text{SC})$, \tiny$(\text{SC},\text{CC})$}
        \ar@{-}[d]
        \\
        && \text{\tiny$(\text{SC},\text{SC})$}
        \\
        \save[]+<2cm,0cm>*\txt<14pc>{\scriptsize (b) Preference relation under simple sum criterion}\restore
    }  }  
    \]
    
    \caption{Hasse diagrams of preference relations under two criteria. \label{2_pd_discount_pref}}
  \end{figure}
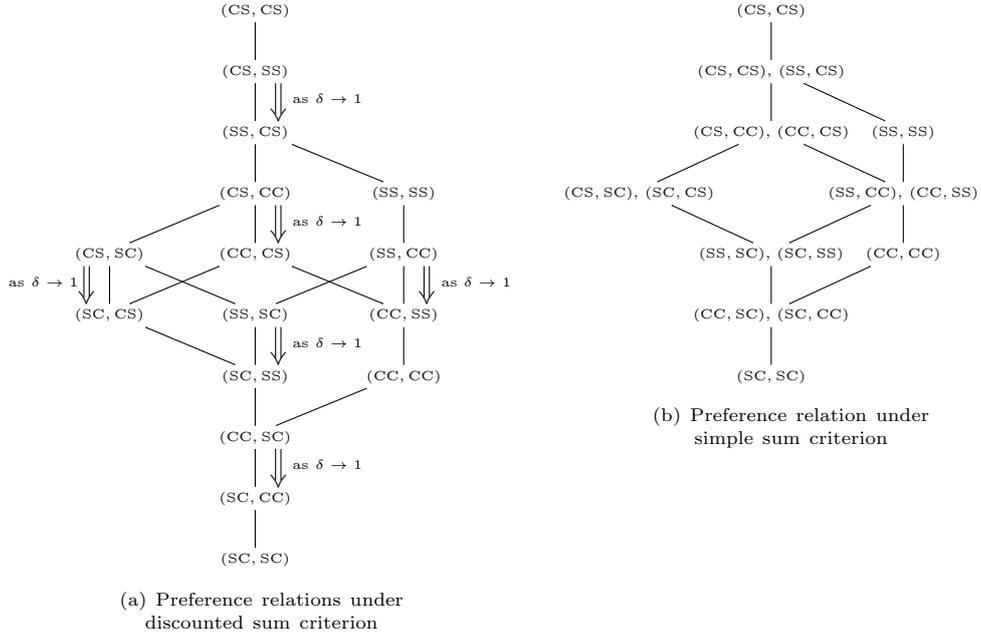

  In fact, the order between $(\text{CS, SC})$ and $(\text{CC, CS})$ varies depending on the level at which each payoff and discount factor is set.
  If they are given as $U_1(\text{CS})=0$,  $U_1(\text{SC})=-5$,  $U_1(\text{CC})=-3$,  $U_1(\text{SS})=-1$ and $\delta=1/5$, the preference relation of the game $\mathbf{\Gamma}^2$ is given as $(\text{CS, SC}) \succsim_1 (\text{CC, CS})$ since $\mathscr{U}_{d\, 1}^2((\text{CS, SC}))=-1$ and $\mathscr{U}_{d\, 1}^2((\text{CC, CS}))=-3$, while the relation is reversed to $(\text{CC, CS}) \succsim_1 (\text{CS, SC})$ when  $U_1(\text{SC})=-25$, since $\mathscr{U}_{d\,1}^2((\text{CS, SC}))=-5$ and $\mathscr{U}_{d\,1}^2((\text{CC, CS}))=-3$.

  The order also changes if the discount factor is set to $\delta=1$.
  In fact, in this case, the payoff also changes to $\mathscr{U}_s^2((\text{CS, SC}))=-5$, and $(\text{CC, CS}) \succsim_1 (\text{CS, SC})$ follows, as shown in Figure \ref{2_pd_discount_pref} (b).
\end{ex}

When the discount factor is 1 and every payoff in every period is positive and finite rational, the payoffs over all periods reach huge values.
This makes the payoffs yielded for the whole histories of huge-repeated games indiscernible from each other and creates space for a new kind of subgame perfect equilibria, characteristic of the simple sum criterion, to emerge.
The next proposition confirms the existence of such equilibria.

\begin{prop}\label{simple_sum}
  Suppose that there exists a connected terminal history $h\in C$ where the payoff of $h$ is positive and finite rational for all players consisting only of core players, and the payoff function is given by a simple sum.
  Then, the strategy profile $\mathbf{s}^s$ that plays $a=s^{*}_{{P}(\mathbf{h}(|\mathbf{h}|))}(\mathbf{h}(|\mathbf{h}|))$ for any whole history $\mathbf{h}$ that satisfies $\mathbf{h}(|\mathbf{h}|)\not\subset{h}$, or $\tau-|\mathbf{h}|\in \FN$ and contains only finite $h$, and takes action $a$ that satisfies $\mathbf{h}(|\mathbf{h}|)^\frown (a)\subseteq h$  otherwise,
  is a subgame perfect equilibrium.
\end{prop}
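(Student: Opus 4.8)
The plan is to verify the defining inequality of subgame perfection directly. The plan is: fix a player $i$, a non-terminal or connected-terminal whole history $\mathbf{h}$ with $\mathbf{P}^\tau(\mathbf{h})=i$, and an arbitrary strategy $\mathbf{s}_i$; since the payoff function is the simple sum $\mathscr{U}^\tau_{s}$ — which by Lemma \ref{ssws} is weakly separable and hugely transitive, and dynamic consistency is assumed throughout the section — it suffices to compare $\mathscr{U}^\tau_{s,i}\bigl(\mathbf{h}^\frown\mathbf{O}_{\mathbf{h}}(\mathbf{s}^s|_{\mathbf{h}})\bigr)$ with $\mathscr{U}^\tau_{s,i}\bigl(\mathbf{h}^\frown\mathbf{O}_{\mathbf{h}}(\mathbf{s}_i|_{\mathbf{h}},\mathbf{s}^s_{-i}|_{\mathbf{h}})\bigr)$ in $R^+$. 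Before that I would check that $\mathbf{s}^s$ is a legitimate strategy of $\widehat{\mathbf{\Gamma}}^\tau$: it prescribes the same behaviour (completing the current stage towards $h$, or, off that path, $s^*$) at all positions inside the distant-future monad $\mon{\widehat{=}}{\tau/2}$, and uses possibly distinct actions only at the finitely many near-future and near-end positions, so every history it generates is consistent with the perspective view; moreover at positions strictly interior to $\mon{\widehat{=}}{\tau/2}$ consistency leaves only one admissible action, so such positions impose no extra requirement. From the root (indeed from any subgame still facing the whole distant block) the $\infty$-argument below is immediate, so the real content is subgame perfection at the late subgames.

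The argument then splits on whether the distant-future super-stage $\tau/2$ has already been played in $\mathbf{h}$. First suppose it has not, i.e.\ $|\mathbf{h}|<\tau/2$. Along $\mathbf{O}_{\mathbf{h}}(\mathbf{s}^s|_{\mathbf{h}})$ the play reaches $\mon{\widehat{=}}{\tau/2}$ at a point where the current stage-history is empty and $\tau-|\cdot|\notin\FN$, so $\mathbf{s}^s$ builds $h$ throughout that monad; hence the distant-future term of the simple sum is $(\tau-2^{k+1})\,U_i(h)$, and since $U_i(h)\in\FQ$ and $U_i(h)>0$ by hypothesis this quantity is huge for every $k\in\FN$. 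Therefore $\mathscr{U}^\tau_{s,i}\bigl(\mathbf{h}^\frown\mathbf{O}_{\mathbf{h}}(\mathbf{s}^s|_{\mathbf{h}})\bigr)=\lim_{k\in\FN}\mon{}{\cdots}=\infty$, the greatest element of $R^+$, so no continuation — in particular none produced by a deviation $\mathbf{s}_i$ — is strictly preferred, and the inequality holds.

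Now suppose $|\mathbf{h}|\ge\tau/2$, so that $g:=\mathbf{h}(\tau/2)$ is fixed; by consistency of $\mathbf{h}$ the whole monad $\mon{\widehat{=}}{\tau/2}$ is pinned to $g$, and the term $(\tau-2^{k+1})U_i(g)$ occurs identically in $\mathscr{U}^\tau_{s,i}$ of every continuation of $\mathbf{h}$. If $U_i(g)\neq 0$ this term dominates and both sides of the inequality equal $\infty$ (when $U_i(g)>0$) or both equal $-\infty$ (when $U_i(g)<0$), so it holds as an equality. If $U_i(g)=0$ — which forces $g\neq h$, since $U_i(h)>0$ — the distant-future term vanishes, the value of any continuation of $\mathbf{h}$ is $\mon{}{c+\sigma}$ where $c$ is the fixed contribution of the already-played and of the forced stages and $\sigma$ is the plain sum of $i$'s stage payoffs over the remaining near-end stages, and $\mathbf{s}^s$ prescribes $s^*$ at every remaining decision node — the clause ``$\tau-|\cdot|\in\FN$ and contains only finite $h$'' being in force because the huge block $\mon{\widehat{=}}{\tau/2}$, being equal to $g\neq h$, contributes no copy of $h$. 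Thus on the remaining tail $\mathbf{s}^s$ coincides with the repetition of $s^*$, which by Proposition \ref{ext} (weak separability and huge transitivity from Lemma \ref{ssws}, dynamic consistency assumed) is a subgame perfect equilibrium there, so $i$ again has no profitable deviation. Collecting the cases, $\mathbf{s}^s$ is a subgame perfect equilibrium.

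The main obstacle is the sub-case $|\mathbf{h}|\ge\tau/2$ with $U_i(g)=0$: there the ``$\infty$ is the top of $R^+$'' shortcut is unavailable, and one must show — tracking which monad of the perspective view each stage index falls into, and using that $\mon{\widehat{=}}{\tau/2}$ collapses to a single forced choice so that only the finitely-indexed near-end stages remain genuine decisions — that $\mathbf{s}^s$ really does default to repeated $s^*$ there and that repeated $s^*$ is optimal on that tail; this is the only place the dynamic-consistency hypothesis and Proposition \ref{ext} do real work. A lesser point needing care is the bookkeeping (finitely many configurations) around histories $\mathbf{h}$ that end in mid-stage and around positions interior to the distant-future block, where perspective-view consistency overrides the nominal ``build $h$'' instruction but, contributing $U_i(g)$ with weight $\tau-2^{k+1}$, leaves the comparison of values unchanged.
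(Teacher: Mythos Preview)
Your proof shares the paper's two key mechanisms --- the ``$\infty$ is maximal in $R^+$'' argument where the simple-sum value is infinite, and the appeal to Proposition~\ref{ext} together with Lemma~\ref{ssws} and dynamic consistency where $\mathbf{s}^s$ defaults to repeating $s^*$ --- but the case decomposition is genuinely different. The paper splits along the clauses that define $\mathbf{s}^s$: first on-$h$-path histories $\mathbf{h}(|\mathbf{h}|)\subseteq h$, subdivided into ``$\mathbf{h}$ far from the end'' (where following $\mathbf{s}^s$ yields $\mon{}{y_i\cdot(\tau-|\mathbf{h}|)}=\infty$) and ``$\mathbf{h}$ already contains hugely many copies of $h$'' (where every continuation has value $\infty$); then off-$h$-path histories $\mathbf{h}(|\mathbf{h}|)\not\subset h$, where it invokes Proposition~\ref{ext} in one sentence. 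Your split is instead driven by the structure of the simple-sum formula itself --- whether the dominant middle term $(\tau-2^{k+1})\,U_i(\mathbf{h}(\tau/2))$ is already fixed by $\mathbf{h}$, and if so by its sign --- which makes the perspective-view measure do visible work and isolates the $U_i(g)=0$ sub-case (forcing $g\neq h$ and hence the $s^*$-default) that the paper does not single out. Your route also treats the $U_i(g)<0$ possibility (both sides $-\infty$) that the paper's phrasing sidesteps. Both arguments are informal at the same edge points --- mid-stage histories, positions interior to the distant-future monad, and whether play from an off-$h$-path near-future stage actually reaches the huge block rather than terminating --- and you flag these explicitly as obstacles; the paper's version is terser but no more complete on those points.
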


\begin{proof}
  Let the payoff of the player $i$ at an arbitrarily chosen whole history $\mathbf{h}$ of  $\mathbf{\Gamma}^\tau$, in which $i=\mathbf{P}^\tau(\mathbf{h})$, be given by $U_i(h)=y_i>0$.

  Provided that $\mathbf{h}$ satisfies $\mathbf{h}(|\mathbf{h}|)\subseteq h$, or the history of the constituent game $\Gamma$ of the $|\mathbf{h}|$-th period is consistent with $h$.
  First, suppose $\mathbf{h}$ is a whole history far from the end,
  then following $\mathbf{s}^s_{i}|_{\mathbf{h}}$ 
  yields $\mon{}{y_i\cdot (\tau-|\mathbf{h}|)}$, which is infinite.
  Second, if $\mathbf{h}$ includes huge $h$, then every action yields a huge payoff. 
  This implies that $\mathbf{s}^s$ achieves the infinite real payoff, so that all players have no incentive to deviate.
  Otherwise, if the whole history $\mathbf{h}$ satisfies $\mathbf{h}(|\mathbf{h}|)\not\subset h$, then the strategy $s^{*\tau}_{i}|_{\mathbf{h}}$ is guaranteed to be a subgame perfect equilibrium of $\Gamma(\mathbf{h})$ by Proposition \ref{ext} and Lemma \ref{ssws}, and hence $\mathbf{s}^s$ is a subgame perfect equilibrium of $\mathbf{\Gamma}^\tau$.   
\end{proof}

The statement may seem to be complicated.
But the intuitive meaning is simple.
It means that if there is an opportunity to make huge gains, you should keep playing, regardless of how your opponents are playing.

The result points to the existence of new equilibria that have not been found before, but which seem intuitively obvious.
Let us examine these with three simple examples.

\begin{ex}[Centipede Games continued from Example \ref{dc}]\label{dc_ss}
Figure \ref{con_cp_ne} shows a numerical example of constituent games of centipede games, whose subgame perfect equilibrium $s^*$ is uniquely given by $s^*_1(\emptyset)=\mathrm{D}$ and $s^*_2(\mathrm{R})=\mathrm{d}$.

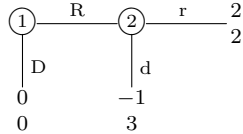
\begin{figure}[htbp]
  \[
  \centerline{
    \xymatrix@R=14pt@C=30pt@M=-.10pt{
      \raise0.2ex\hbox{\textcircled{\scriptsize{1}}}
      \ar@{-}[r]^{\mathrm{R}}
      \ar@{-}[d]^{\mathrm{D}}
      &
      \raise0.2ex\hbox{\textcircled{\scriptsize{2}}}
      \ar@{-}[r]^{\mathrm{r}}
      \ar@{-}[d]^{\mathrm{d}}
      &
      \,\text{\footnotesize$\begin{matrix}2\\2
        \end{matrix}$}
      \\ 
      \text{\footnotesize$\begin{matrix}0\\0
        \end{matrix}$}
      & 
      \text{\footnotesize$\begin{matrix}-1\\3
        \end{matrix}$}
      & 
    }}
  \]
  \caption{A tree of the constituent game of a centipede game.}\label{con_cp_ne}
\end{figure}

The game satisfies the dynamic consistency condition if its payoffs are given by the simple sum criterion.
To see if this condition is indeed met, let us examine the tree of the game repeated up to 5 periods with simple sum payoffs, as shown in Figure \ref{5_cp}.
\begin{figure}[htbp]
  \[
  \centerline{
    \xymatrix@R=14pt@C=20pt@M=-.10pt{
      \text{\scriptsize{1}}
      &&
      \text{\scriptsize{2}}
      &&
      \text{\scriptsize{3}}
      &&
      \text{\scriptsize{4}}
      &&
      \text{\scriptsize{5}}
      &&
      \\
      \raise0.2ex\hbox{\textcircled{\scriptsize{1}}}\ar@{-}[r]^{\mathrm{R}}\ar@{-}[d]^{\mathrm{D}}
      & 
      \raise0.2ex\hbox{\textcircled{\scriptsize{2}}}\ar@{-}[r]^{\mathrm{r}}\ar@{-}[d]^{\mathrm{d}}
      & 
      \raise0.2ex\hbox{\textcircled{\scriptsize{1}}}\ar@{-}[r]^{\mathrm{R}}\ar@{-}[d]^{\mathrm{D}}
      & 
      \raise0.2ex\hbox{\textcircled{\scriptsize{2}}}\ar@{-}[r]^{\mathrm{r}}\ar@{-}[d]^{\mathrm{d}}
      & 
      \raise0.2ex\hbox{\textcircled{\scriptsize{1}}}\ar@{-}[r]^{\mathrm{R}}\ar@{-}[d]^{\mathrm{D}}
      &
      \raise0.2ex\hbox{\textcircled{\scriptsize{2}}}\ar@{-}[r]^{\mathrm{r}}\ar@{-}[d]^{\mathrm{d}}
      &  
      \raise0.2ex\hbox{\textcircled{\scriptsize{1}}}\ar@{-}[r]^{\mathrm{R}}\ar@{-}[d]^{\mathrm{D}}
      & 
      \raise0.2ex\hbox{\textcircled{\scriptsize{2}}}\ar@{-}[r]^{\mathrm{r}}\ar@{-}[d]^{\mathrm{d}}
      & 
      \raise0.2ex\hbox{\textcircled{\scriptsize{1}}}\ar@{-}[r]^{\mathrm{R}}\ar@{-}[d]^{\mathrm{D}}
      &
      \raise0.2ex\hbox{\textcircled{\scriptsize{2}}}\ar@{-}[r]^{\mathrm{r}}\ar@{-}[d]^{\mathrm{d}}
      &
      \,\text{\footnotesize$\begin{matrix}10\\10
        \end{matrix}$}
      \\ 
      \text{\footnotesize$\begin{matrix}0\\0
        \end{matrix}$}
      & 
      \text{\footnotesize$\begin{matrix}-1\\3
        \end{matrix}$}
      &
      \text{\footnotesize$\begin{matrix}2\\2
        \end{matrix}$}
      & 
      \text{\footnotesize$\begin{matrix}1\\5
        \end{matrix}$}
      & 
      \text{\footnotesize$\begin{matrix}4\\4
        \end{matrix}$}
      & 
      \text{\footnotesize$\begin{matrix}3\\7
        \end{matrix}$}
      &
      \text{\footnotesize$\begin{matrix}6\\6
        \end{matrix}$}
      & 
      \text{\footnotesize$\begin{matrix}5\\9
        \end{matrix}$}
      & 
      \text{\footnotesize$\begin{matrix}8\\8
        \end{matrix}$}
      & 
      \text{\footnotesize$\begin{matrix}7\\11
        \end{matrix}$}
  }}
  \]
  \caption{A tree of the 5-repeated centipede game with simply summed payoffs.\label{5_cp}}
\end{figure}

It is clear that the game satisfies dynamic consistency, $(h)^\frown(O(s^*))\sim^5_i(h)$ for all $i\in I$ and $h\in C$, since $\mathscr{U}^5_i\left((\mathrm{Rr,D})\right)=\mathscr{U}^5_i\left((\mathrm{Rr})\right)$ and hence $\mathrm{(Rr,D)}\sim^5_i\mathrm{(Rr)}$ holds.
This implies that the strategy $s^{*5}$, which repeats the strategy $s^*$ 5 times, is a subgame perfect equilibrium, as already confirmed in Proposition \ref{ext}.
Indeed the unique subgame perfect equilibrium is given by ${s}^{*5}_1(\mathbf{h})=\mathrm{D}$ for all $\mathbf{P}^5(\mathbf{h})=1$ and ${s}^{*5}_2(\mathbf{h})=\mathrm{d}$ for all $\mathbf{P}^5(\mathbf{h})=2$.

As noted above, the result is confusing, because both players benefit from cooperating to continue the game into the final period. 
However, the subgame perfect equilibrium of the game excludes this kind of cooperation at all.

The situation changes when the game is extended to huge periods.
Extending the game beyond the finite horizon to huge $\tau$ periods leads to new subgame perfect equilibria to emerge as indicated in Proposition \ref{simple_sum}.
Figure \ref{T_cp} shows the game extended to $\tau$ periods in the perspective view where $\tau\in N\setminus\FN$.
\begin{figure}[htbp]
  \[
  \centerline{
    \xymatrix@=17pt@M=-.10pt{
      \text{\scriptsize{1}}
      &&
      \text{\scriptsize{2}}
      &&&&
      \text{\scriptsize{$\tau$}}
      &
      \\
      \raise0.2ex\hbox{\scriptsize\textcircled{\scriptsize{1}}}
      \ar@{-}[r]^{\text{\scriptsize R}}
      \ar@{-}[d]^{\text{\scriptsize D}}
      & 
      \raise0.2ex\hbox{\scriptsize\textcircled{\scriptsize{2}}}
      \ar@{-}[r]^{\text{\scriptsize r}}
      \ar@{-}[d]^{\text{\scriptsize d}}
      & 
      \raise0.2ex\hbox{\scriptsize\textcircled{\scriptsize{1}}}
      \ar@{-}[r]^{\text{\scriptsize R}}
      \ar@{-}[d]^{\text{\scriptsize D}}
      & 
      \raise0.2ex\hbox{\scriptsize\textcircled{\scriptsize{2}}}
      \ar@{-}[r]^(.6){\text{\scriptsize r}}
      \ar@{-}[d]^{\text{\scriptsize d}}
      && 
      \cdots
      &
      \raise0.2ex\hbox{\scriptsize\textcircled{\scriptsize{1}}}
      \ar@{-}[r]^{\text{\scriptsize R}}
      \ar@{-}[d]^{\text{\scriptsize D}}
      &
      \raise0.2ex\hbox{\scriptsize\textcircled{\scriptsize{2}}}
      \ar@{-}[r]^(.22){\text{\scriptsize r}}
      \ar@{-}[d]^{\text{\scriptsize d}}
      &
      \text{\footnotesize$\begin{matrix}
          \mon{}{2\tau} \\
          \mon{}{2\tau} 
        \end{matrix}$}
      \\ 
      \text{\footnotesize$\begin{matrix}
          0\\
          0 
        \end{matrix}$}
      & 
      \text{\footnotesize$\begin{matrix}
          -1\\
          3
        \end{matrix}$}
      & 
      \text{\footnotesize$\begin{matrix}
          2\\
          2 
        \end{matrix}$}
      & 
      \text{\footnotesize$\begin{matrix}
          1\\
          5 
        \end{matrix}$}
      &&&
      \text{\footnotesize$\begin{matrix}
          \mon{}{2\tau-2}\\
          \mon{}{2\tau-2} 
        \end{matrix}$}
      & 
      \text{\footnotesize$\begin{matrix}
          \mon{}{2\tau-3}\\
          \mon{}{2\tau+1} 
        \end{matrix}$}
  }}
  \]
  \caption{A tree of the $\tau$-repeated centipede game.\label{T_cp}}
\end{figure}

The payoffs yielded at $\tau$-period are given by
\[
\begin{matrix}
  \mathscr{U}_{s\,1}(\mathbf{Rr}^{\tau-1\frown}(\text{D}))
  \ =\ \mon{}{2\tau-2}\ =\ \infty\\[1ex]
  \mathscr{U}_{s\,2}(\mathbf{Rr}^{\tau-1\frown}(\text{D}))
  \ =\ \mon{}{2\tau-2}\ =\ \infty\\[2ex]
  \mathscr{U}_{s\,1}(\mathbf{Rr}^{\tau-1\frown}(\text{Rd})
  \ =\ \mon{}{2\tau-3}\ =\ \infty\\[1ex]
  \mathscr{U}_{s\,2}(\mathbf{Rr}^{\tau-1\frown}(\text{Rd})
  \ =\ \mon{}{2\tau+1}\ =\ \infty\\[2ex]
  \mathscr{U}_{s\,1}(\mathbf{Rr}^{\tau})\hfill
  \ =\ \mon{}{2\tau}\ =\ \infty\\[1ex]
  \mathscr{U}_{s\,2}(\mathbf{Rr}^{\tau})\hfill
  \ =\ \mon{}{2\tau}\ =\ \infty
\end{matrix}
\]

Since each payoff is infinite, for any nonterminal whole history $\mathbf{h}$ where $|\mathbf{h}|=\tau$, both actions R and D for player 1 or r and d for player 2 at the last game are indifferent.
So that $\mathbf{s}_1^s(\mathbf{h})=\mathrm{R}$ for all $\mathbf{P}^\tau(\mathbf{h})=1$ and $\mathbf{s}_2^s(\mathbf{h})=\mathrm{r}$ for all $\mathbf{P}^\tau(\mathbf{h})=2$ for any $\mathbf{h}\in \mathbf{H}^\tau\setminus\mathbf{Z}^\tau$ also emerges as a subgame perfect equilibrium.

This equilibrium may correspond to our natural understanding of the game.
The reason why this profile is the equilibrium is simple.
Since the payoffs from these strategy profiles ${\mathbf{s}^s}$ remain huge, neither player has an incentive to deviate, so that the profile ensures its stability.
\end{ex}

\begin{ex}[Ultra Long-Term Investment continued from Example \ref{ulti}]\label{ulti_ss}
  A numerical example that illustrates interesting aspects of the decision problem under the simple sum criterion is provided by an ultra long-term investment.

  A player is repeatedly confronted with the same decision problems whether to continue investing, I, or not to invest, N.
  If investments are made in every single period, then enormous gains, $\tau$, are yielded.
  On the other hand, if no investment is made even once, the payoff is less than 0 because all the investments already made are all lost.
  
  The player's payoff function  is given by
  \[
  \begin{matrix}
    \mathscr{U}_s({I}^\tau) & = & \mon{}{\tau}\text{\hspace{7.2cm}}\\
    \mathscr{U}_s(\mathbf{h})
    & = & -\lim_{k\in\FN}\mon{}{
      \begin{matrix}
        \sum_{t\in 2^k}[\mathbf{h}(t)=\text{I}]\text{\hspace{3cm}}\\
        +(\tau-2^{k+1})[\mathbf{h}(\tau/2)=\text{I}]\\
        \text{\hspace{2cm}}+\sum_{t\in 2^k}[\mathbf{h}(t)=\text{I}]
      \end{matrix}
        }
  \end{matrix}
  \]
  where $\mathbf{h}$ contains some histories in which no investment is made.

  One of the subgame perfect equilibria characteristic of the perspective view is given as Invest for the whole history $\mathbf{h}(t)=\text{I}$ for $t\in\{1,\ldots,\tau\}$ if $\mathbf{h}(t-k)=\text{I}$ for all $k\in\{1,\ldots,t-1\}$, and not otherwise.
  As a result, investment takes place on the equilibrium path.
\end{ex}

\begin{ex}[Lifestyle Disease continued from Example \ref{addiction}]\label{addiction_ss}
  Another example that illustrates the peculiarities of the decision problem is given by lifestyle disease.

  A player is faced with the problem of whether to eat a food that will cause some health problems in the distant future, E (eat), or avoid it, A (avoid).
  Eating the food provides some pleasure, which is assumed to be equal to 1, but it also causes a health problem that appears very late, which is equal to $-2$, after more than finitely many periods have elapsed, i.e. more than $k$-periods have elapsed for all $k\in \FN$.

  The player's payoff function is given by
  \[
  \begin{matrix}
    \mathscr{U}_s(\mathbf{h})
    & = & \lim_{k\in\FN}\mon{}{
      \begin{matrix}
        -\sum_{t\in 2^k}[\mathbf{h}(t)=\text{U}]\text{\hspace{3cm}}\\
        +(\tau-2^{k+1}-2(\tau-\alpha))[\mathbf{h}(\tau/2)=\text{U}]\\
        \text{\hspace{3cm}}+\sum_{t\in 2^k}[\mathbf{h}(t)=\text{U}]
      \end{matrix}
    }
  \end{matrix},
  \]
  where $\mathbf{h}$ is an arbitrary given whole history.

  One of the subgame perfect equilibria characteristic of the perspective view is given as Avoid until the whole history $\mathbf{h}$ satisfies $|\mathbf{h}|<\tau-k$ for all $k\in\FN$, and Eat otherwise.
  The player will only eat if there are finite periods left, since the player will be dead at period $\tau$ and there is no risk of suffering severe health problems.

This result may seem obvious. However, this is not the case when the con- ventional decision-making framework is employed, as will become clear later on.
\end{ex}

So far, the case where the payoff at each terminal history is set to a positive finite rational has been considered.
Once this condition is violated, the existence of these equilibria is not guaranteed.
Let us examine one case, the Prisoner's Dilemma problem, to see how it fails.

\begin{ex}[The Prisoner's Dilemma continued from Example \ref{PD}]
  Consider the Prisoner's Dilemma game, where the payoff is given by Table \ref{one_ex}.
  \begin{table}[htbp]
    \begin{center}
      {\renewcommand\arraystretch{1.5}
        \begin{tabular}{r|c|c|}
          \multicolumn{1}{c}{} 
          & \multicolumn{1}{c}{\footnotesize Silence} 
          & \multicolumn{1}{c}{\footnotesize Confess} \\
          \cline{2-3}
          \footnotesize Silence & 3,3 & 0,4\\
          \cline{2-3}
          \footnotesize Confess & 4,0 & 1,1\\
          \cline{2-3}
        \end{tabular}
      }
    \end{center}
    \caption{One example of the Prisoner's Dilemma Game}\label{one_ex}
  \end{table}
  
  As can be seen from the matrix, the pair of strategies $s_1=s_2=$ S satisfies the condition of Proposition \ref{simple_sum}, since S yields positive payoffs for both players and S is connected terminal.
  This implies that the profile of strategies $\mathbf{s}^s_i({\mathbf{h}})=\text{C}$ for all $\mathbf{h}$ satisfying $\tau-|\mathbf{h}|\in\FN$ and containing only finite S, and $\mathbf{s}^s_i({\mathbf{h}})=\text{S}$ otherwise 
  for all $t\in\{1,\ldots,\tau\}$ and all $i\in\{1,2\}$ is a subgame perfect equilibrium of this game.
  \[
  \mathbf{s}^s({\mathbf{h}})\ =\
  \begin{cases}
    \text{C} & \text{ if } (\tau-|\mathbf{h}|\in\FN)\wedge(|\{t\in|\mathbf{h}|\,;\, \mathbf{h}(t)=\text{S}\}|\in\FN)\\
    \text{S} & \text{ otherwise}.
  \end{cases}
  \]
  
  The equilibrium path $\mathbf{s}^s$ is given as $\text{S}^\tau$.

  However, the situation changes when a payoff matrix is replaced by the following:
  \begin{table}[htbp]
    \begin{center}
      {\renewcommand\arraystretch{1.5}
        \begin{tabular}{r|c|c|}
          \multicolumn{1}{c}{} 
          & \multicolumn{1}{c}{\footnotesize Silence} 
          & \multicolumn{1}{c}{\footnotesize Confess} \\
          \cline{2-3}
          \footnotesize Silence & $-1$, $-1$ & $-4$, 0\\
          \cline{2-3}
          \footnotesize Confess & 0, $-4$ & $-3$, $-3$\\
          \cline{2-3}
        \end{tabular}
      }
    \end{center}
    \caption{Another example of the Prisoner's Dilemma Game}\label{another}
  \end{table}

  Note that all the elements of the matrix in Table \ref{another} have been reduced by 4 compared to those in Table \ref{one_ex}.
  As expected, the differences in payoffs, and thus the preferred actions, remain unchanged.
%
  
  However, the pair of strategies $s_1=s_2=$ S does not satisfy the condition of Proposition \ref{simple_sum}, since S yields only negative payoffs for both players.
  This implies that the profile of strategies $(\mathbf{s}^s_i)$ constructed above is no longer a subgame perfect equilibrium.
  In fact, both players can increase their payoffs by changing their strategy from $\mathbf{s}^s_i$ to $s^{*\tau}_i$, since $\mathscr{U}_{s\,i}^\tau(\mathbf{s}^s)=\mon{}{-\tau}=-\infty$ while $\mathscr{U}_{s\, i}^\tau({s}^{*\tau}_i,\mathbf{s}^{s}_{-i})=0$.
\end{ex}

In contrast, under the strict discounted sum criterion, $\mathbf{s}^s$ cannot be a subgame perfect equilibrium.
This is because, as the proof of Lemma \ref{ssws} shows, this criterion ignores all the payoffs yielded after huge periods of time have elapsed.
Instead, the strategy that allows any paths to be realised after huge periods emerges as the new equilibrium.

\begin{prop}\label{discount}
  If $\delta < 1$, $\mathbf{s}^s$ is no longer guaranteed to be a subgame perfect equilibrium unless $\mathbf{s}^s=s^{*\tau}$.
  Instead, the strategy profile $\mathbf{s}^d$, which assigns any action $a\in A(\mathbf{h}(|\mathbf{h}|))$ if $\mathbf{h}$ is not a near-future history, i.e. $|\mathbf{h}|\notin\FN$, and plays $a=s^*_{P(\mathbf{h}(|\mathbf{h}|))}({\mathbf{h}(|\mathbf{h}|)})$ otherwise, is a subgame perfect equilibrium.
\end{prop}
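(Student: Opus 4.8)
The plan is to prove the two assertions separately, both resting on a single observation already visible in the proof of Lemma~\ref{ssws}: since $\delta<1$, $\delta$ raised to a huge power lies in $\mon{}{0}$, so that the distant-future weight $(\tau-2^{x+1})\delta^{\tau/2-1}$ and every near-end weight $\delta^{\tau-t}$ with $t\in\FN$ are infinitesimal. Hence, for any two whole histories $\mathbf{g},\mathbf{g}'$ that agree on every component of finite index, $\mathscr{U}_{d\,i}^\tau(\mathbf{g})-\mathscr{U}_{d\,i}^\tau(\mathbf{g}')\in\mon{}{0}$, i.e. $\mathbf{g}\sim_i^\tau\mathbf{g}'$. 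In words: the discounted sum criterion makes everything beyond the finite horizon payoff-irrelevant, collapsing each subgame of $\mathbf{\Gamma}^\tau$ onto its finitely many near-future periods. I would state this reduction first and then use it repeatedly.

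For the second assertion I would argue as follows. By Lemma~\ref{ssws}, $\succsim_i^\tau$ is weakly separable and hugely transitive, and dynamic consistency is assumed throughout this section, so Proposition~\ref{ext} already gives that $s^{*\tau}$ is a subgame perfect equilibrium of $\mathbf{\Gamma}^\tau$. Now $\mathbf{s}^d$ and $s^{*\tau}$ prescribe exactly the same action at every near-future whole history ($|\mathbf{h}|\in\FN$) and differ at most at whole histories of huge length. Therefore, for any whole history $\mathbf{h}$ with $i=\mathbf{P}^\tau(\mathbf{h})$ and any strategy $\mathbf{s}_i$, the outcomes $\mathbf{O}_{\mathbf{h}}(\mathbf{s}_i|_{\mathbf{h}},\mathbf{s}^d_{-i}|_{\mathbf{h}})$ and $\mathbf{O}_{\mathbf{h}}(\mathbf{s}_i|_{\mathbf{h}},s^{*\tau}_{-i}|_{\mathbf{h}})$ agree on every finite-index component (the two profiles induce identical play while the running whole history has finite length), and likewise $\mathbf{O}_{\mathbf{h}}(\mathbf{s}^d|_{\mathbf{h}})$ and $\mathbf{O}_{\mathbf{h}}(s^{*\tau}|_{\mathbf{h}})$ agree on every finite-index component; by the reduction these pairs are ${\succsim_i^\tau}|_{\mathbf{h}}$-indifferent. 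Chaining these indifferences with the equilibrium inequality $\mathbf{O}_{\mathbf{h}}(s^{*\tau}|_{\mathbf{h}})\ {\succsim_i^\tau}|_{\mathbf{h}}\ \mathbf{O}_{\mathbf{h}}(\mathbf{s}_i|_{\mathbf{h}},s^{*\tau}_{-i}|_{\mathbf{h}})$ for $s^{*\tau}$ and using that $\succsim_i^\tau$ is a total preorder under the discounted sum criterion yields $\mathbf{O}_{\mathbf{h}}(\mathbf{s}^d|_{\mathbf{h}})\ {\succsim_i^\tau}|_{\mathbf{h}}\ \mathbf{O}_{\mathbf{h}}(\mathbf{s}_i|_{\mathbf{h}},\mathbf{s}^d_{-i}|_{\mathbf{h}})$, so $\mathbf{s}^d$ is a subgame perfect equilibrium. (When $|\mathbf{h}|\notin\FN$ this degenerates to the remark that all continuations of $\mathbf{h}$ are ${\succsim_i^\tau}|_{\mathbf{h}}$-indifferent, so the arbitrary action $\mathbf{s}^d$ assigns there is a best response.)

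For the first assertion, if $\mathbf{s}^s=s^{*\tau}$ then it is a subgame perfect equilibrium by the same appeal to Lemma~\ref{ssws} and Proposition~\ref{ext}. If $\mathbf{s}^s\ne s^{*\tau}$, the path in $\Gamma$ toward the connected terminal $h$ that $\mathbf{s}^s$ steers to must diverge from the path induced by $s^*$; I would single out the \emph{last} node $\nu$ on the path to $h$ at which the toward-$h$ action differs from the $s^*$-action, so that from $\nu$ onward the two agree, whence following $s^*$ after playing the toward-$h$ action at $\nu$ still yields $h$. With $j=P(\nu)$, the subgame perfect equilibrium property of $s^*$ in $\Gamma$ then gives $g:=O_\nu(s^*|_\nu)\ \succsim_j\ h$, which for suitable stage payoffs is strict, $U_j(g)>U_j(h)$. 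Taking a near-future whole history $\mathbf{h}$ whose last component is $\nu$ and with $\tau-|\mathbf{h}|$ huge, $\mathbf{s}^s$ prescribes the toward-$h$ action there; by the reduction, following $\mathbf{s}^s$ yields $j$ the value $U_j(h)/(1-\delta)$ up to an element of $\mon{}{0}$ and the common near-future prefix, while $j$'s one-shot deviation to the $s^*$-action yields $U_j(g)+\delta U_j(h)/(1-\delta)$ if $g\in C$ (after which $\mathbf{s}^s_{-j}$ steers back to $h$ forever) and $U_j(g)$ if $g\notin C$ (the repeated game ends) --- in either case strictly more, outright when $g\in C$ and for $\delta$ small enough when $g\notin C$. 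Hence $\mathbf{s}^s$ need not be a subgame perfect equilibrium; the centipede of Example~\ref{dc_ss} with $\delta<1/3$ is the concrete witness ($\nu$ the node after $\mathrm{R}$, $j=2$, $g=\mathrm{Rd}\notin C$, $U_2(\mathrm{Rd})=3>2=U_2(\mathrm{Rr})$, so Player~2 gains by switching the $\mathrm{r}$ of $\mathbf{s}^s$ to $\mathrm{d}$). This is exactly the sense in which the guarantee of Proposition~\ref{simple_sum} is ``no longer guaranteed'' for $\delta<1$ unless $\mathbf{s}^s=s^{*\tau}$.

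The step I expect to cost the most care is the one just sketched for the first assertion: the deviating player faces $\mathbf{s}^s_{-j}$ rather than $s^*_{-j}$, so a deviation's continuation is not literally an $s^*$-outcome, and the useful node is the \emph{last} divergence node --- not the first, which in general offers no profitable deviation at all (in the centipede, Player~1 at the root strictly prefers the toward-$h$ action). Establishing that some choice of stage payoffs compatible with the standing hypotheses --- $s^*$ being the stage equilibrium, the payoffs at $h$ being all positive, and dynamic consistency --- does make the relevant stage preference strict is the crux; since the proposition only claims the equilibrium property is no longer \emph{guaranteed}, leaning on the centipede as an explicit witness suffices. A minor recurring bookkeeping point, harmless everywhere above, is that a deviation or the prescribed play may hit a non-connected terminal and end the $\tau$-repeated game early, shortening the whole history; the discounted sum of a shortened whole history is still governed by its finitely many near-future components, so the reduction and all comparisons survive unchanged.
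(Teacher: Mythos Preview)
Your proof is correct and rests on the same core observation as the paper's: with $\delta<1$ the distant-future weight $(\tau-2^{k+1})\delta^{\tau/2-1}$ and the near-end weights $\delta^{\tau-t}$ are all in $\mon{}{0}$, so only near-future components matter. The paper's proof is far terser --- it simply states this infinitesimality and concludes in one stroke that actions at non-near-future histories are irrelevant (giving the second assertion) while near-future actions are not (giving the first) --- whereas you route the second assertion through Proposition~\ref{ext} via a chain of indifferences with $s^{*\tau}$, and for the first assertion you supply the explicit last-divergence-node deviation and the centipede witness that the paper leaves implicit; both additions are sound and fill genuine gaps in the paper's three-sentence argument.
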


\begin{proof}
  Since $\tau\cdot \delta^{\tau/2-1}$ is indiscernible from 0, the discounted sum of the payoff given by $\mathbf{s}^d$ in a distant future, i.e. $(\tau-2^{k+1})\delta^{\tau/2-1} U_i(\mathbf{O}(\mathbf{s}^d|_{\mathbf{h}})(\tau/2))$ is indiscernible from $0$ for all $k\in\FN$ for all  player $i\in I$.
  This implies that it does not matter what action is taken at $\mathbf{h}$.
  It also implies that it matters at any whole history of near future, i.e. $|\mathbf{h}|\in\FN$, so that $\mathbf{s}^d$ is guaranteed to be  a subgame perfect equilibrium and  $\mathbf{s}^s$ is not.
\end{proof}

\begin{ex}[Centipede Games continued from Example \ref{dc_ss}]
  The subgame perfect equilibrium examined in Example \ref{dc_ss}, in which both players play R or r to continue the game to the end, disappears when the payoff function is given by strict discounting.
  This is because discounting reduces the total payoffs to finite values, making any deviation profitable.
  In fact, deviating from R to D for player 1 and from r to d for player 2 always yields positive gains in every period $k\in\FN$.
  The resulting subgame perfect equilibrium is given by player 1 playing D and 2 playing d in every period.
  The trust that seemed to exist under the equilibrium $\mathbf{s}^s$ has disappeared.
\end{ex}

\begin{ex}[Ultra Long-Term Investment continued from Example \ref{ulti_ss}]
  The subgame perfect equilibrium examined in Example \ref{ulti_ss}, in which an ultra-long-term investment is made, also disappears. 
  This is because discounting reduces future gains to zero, while near future costs remain intact.
  The resulting subgame perfect equilibrium is given by always avoiding investment.
  The long-term perspective that seemed to exist under $\mathbf{s}^s$ has also disappeared.
\end{ex}

\begin{ex}[Lifestyle Disease continued from Example \ref{addiction_ss}]
  The same effect is also observed in the lifestyle disease problem.
  The subgame perfect equilibrium examined in Example \ref{addiction_ss}, in which people always avoid eating the food unless the remaining time is finite, disappears again. 
  This is partly because discounting reduces the future cost of the distant future health problem to zero, while leaving the pleasures available in the near future intact.
  The resulting subgame perfect equilibrium is given by always eating the food. 
  In essence, those who discount and nullify future costs risk their long-term health.
\end{ex}

\subsection{Overtaking}

The profile of preference relations $(\succsim_i^\tau)$ of generalised repeated games is said to follow the \textit{overtaking} criterion of constituent games if the sequence $(v_t)_{t\in\{1,\ldots,\tau\}}$ of payoffs is preferred to the sequence $(w_t)_{t\in\{1,\ldots,\tau\}}$ if and only if
\[
\int_{\widehat{\mathscr{T}}}(v_t-w_t)d\widehat{m}(t) \ \geq \ 0
\]
or, equivalently,
\[
{\lim_{k\in\FN}\mon{}{\sum_{t\leq 2^k}(v_t-w_t)+(\tau-2^k)(v_\tau-w_\tau)+\sum_{t\leq 2^k}(v_{\tau-t+1}-w_{\tau-t+1})}}\geq 0.
\]

It may seem that the preference relations obtained by this criterion are identical to those obtained by the simple sum criterion.
However, they are different.
For example, suppose that $U_i(h)=1$ and $U_i(j)=2$ for the constituent game $\Gamma$ and $\mathbf{h}=(h)_{t\in\{1,\ldots,\tau\}}$ and $\mathbf{j}=(j)_{t\in\{1,\ldots,\tau\}}$.
Then, the payoffs given by the simple sum payoff function are $\mathscr{U}_{s\, i}^\tau(\mathbf{h})=\mon{}{\tau}=\infty=\mon{}{2\tau}=\mathscr{U}_{s\,i}^\tau(\mathbf{j})$, so that $\mathbf{h}\sim_i^\tau\mathbf{j}$ holds.
However, the relation induced by the overtaking criterion is $\mathbf{h}\not\succsim_i^\tau\mathbf{j}$, since $\int_{\widehat{\mathscr{T}}}(v_t-w_t)d\widehat{m} =\mon{}{\tau-2\tau}=-\infty< 0$.
Consequently, the difference allows the preference relations induced by the overtaking criterion to satisfy strict separation.

\begin{lemma}\label{otws}
  A  preference relation $(\succsim^\tau)$ following the overtaking criterion satisfies strict separability and huge transitivity.
\end{lemma}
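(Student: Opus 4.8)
The plan is to follow the template of the proof of Lemma~\ref{ssws}, exploiting two features of the overtaking criterion: the functional $\int_{\widehat{\mathscr{T}}}(v_t-w_t)\,d\widehat{m}(t)$ sees only the \emph{difference} sequence $(v_t-w_t)_t$, and it assigns weight $1$ to every point of the near future and of the near end but the huge mass $\widehat{m}(\mon{}{\tau/2})=\infty$ to the single distant-future representative. I will use that $\succsim_i$ on the constituent game $\Gamma$ is induced by a payoff function $U_i\colon Z\to\FQ$ and that the set $Z$ of terminal histories of $\Gamma$ is finite, so the payoff set $\{U_i(h)\,;\,h\in Z\}$ is a finite set of finite rationals: there are $n\in\FN$, $M\in\FQ$ with $U_i(h)\in\tfrac1n\mathbb{Z}$ and $|U_i(h)|\le M$ for all $h\in Z$. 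In particular every difference of payoffs that the criterion ever evaluates is $\tfrac1n\mathbb{Z}$-valued, so for the functional to be defined on a pair its near-future and near-end difference sums over the blocks $(2^x,2^{x+1}]$ must vanish from some finite $x$ on, i.e.\ those partial sums stabilise; I will use this repeatedly.

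\emph{Strict separability.} Let $j,j'\in C$ with $j\succ_i j'$, i.e.\ $q:=U_i(j)-U_i(j')$ a strictly positive element of $\tfrac1n\mathbb{Z}$. The two whole histories $\mathbf{h}'^\frown(j)^\frown\mathbf{h}'''$ and $\mathbf{h}'^\frown(j')^\frown\mathbf{h}'''$ have payoff sequences agreeing in every period except $|\mathbf{h}'|+1$, where the difference equals $q$; feeding this one-point difference into the overtaking functional kills every term but the one carrying that period. If the perspective view sees that period as a near point the value is $\mon{}{q}>\mon{}{0}$; if it sees it as the distant-future representative the value is $\lim_{x\in\FN}\mon{}{(\tau-2^x)q}=\infty>0$ (and within $\widehat{\mathbf{\Gamma}}^\tau$ these exhaust the possibilities, since a single distant-future coordinate cannot be varied in isolation). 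Either way $\mathbf{h}'^\frown(j)^\frown\mathbf{h}'''\succ_i^\tau\mathbf{h}'^\frown(j')^\frown\mathbf{h}'''$, and when $j\sim_i j'$ the two sequences coincide, giving $\sim_i^\tau$. This is precisely where overtaking strictly refines the simple sum (cf.\ the example just before the lemma): the simple sum would drown the strictly positive distant contribution in an indistinguishable infinity, whereas here its sign survives.

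\emph{Huge transitivity.} Let $(\mathbf{h}_k)_{k=1}^\kappa$ be a $\succsim_i^\tau$-chain with $\kappa$ huge, and write $v^k_t:=U_i(\mathbf{h}_k(t))$. First I show $\bigl(v^k_{\tau/2}\bigr)_k$ is non-decreasing: in the pre-monad expression defining $\int_{\widehat{\mathscr{T}}}(v^{k+1}_t-v^k_t)\,d\widehat{m}(t)$ the at most $2\cdot2^x$ near-point terms are each bounded by $2M$, hence bounded in total by $4M2^x$, which is finite for every $x\in\FN$ and so negligible beside the distant term $(\tau-2^x)(v^{k+1}_{\tau/2}-v^k_{\tau/2})$; thus $\mathbf{h}_{k+1}\succsim_i^\tau\mathbf{h}_k$ is impossible unless $v^{k+1}_{\tau/2}\ge v^k_{\tau/2}$, and therefore $v^\ell_{\tau/2}\ge v^m_{\tau/2}$ whenever $\ell\ge m$. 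If this inequality is strict, the distant term alone gives $\int_{\widehat{\mathscr{T}}}(v^\ell_t-v^m_t)\,d\widehat{m}(t)=\infty\ge0$, so $\mathbf{h}_\ell\succsim_i^\tau\mathbf{h}_m$ and we are done. Otherwise $v^k_{\tau/2}$ is constant on $\{m,\dots,\ell\}$, every link on this stretch contributes to the functional only through its near-future and near-end coordinates, and the task collapses to telescoping the ``finite parts'': with $a^{(k)}_x,c^{(k)}_x$ the near-future and near-end partial sums of the $k$-th difference, each link gives $\lim_x\mon{}{a^{(k)}_x+c^{(k)}_x}\ge0$, and the goal is $\lim_x\mon{}{\sum_{k=m}^{\ell-1}\bigl(a^{(k)}_x+c^{(k)}_x\bigr)}\ge0$.

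The telescoping of the finite parts is the main obstacle, because monads do not add over a huge index set: hugely many individually harmless negative defects could in principle pile up into $-\infty$. I would dispatch it using the finiteness of $Z$ twice. First, by the remark in the first paragraph each link's near-future and near-end partial sums stabilise, to some $e_k\in\tfrac1n\mathbb{Z}$, so the link hypothesis just says $e_k\ge0$ (or, in the degenerate sub-case where those partial sums diverge, the link already contributes $+\infty$). Secondly, for the pair $(m,\ell)$ the drift $v^\ell_t-v^m_t$ at each fixed coordinate is a difference of two members of the finite payoff set, hence bounded by $2M$; since the functional of $(m,\ell)$ is likewise defined, its partial sums stabilise to some $e\in\tfrac1n\mathbb{Z}$, and because the telescoping identity $\sum_{k=m}^{\ell-1}(a^{(k)}_x+c^{(k)}_x)=\sum_{t\le2^x}(v^\ell_t-v^m_t)+\sum_{t\le2^x}(v^\ell_{\tau-t+1}-v^m_{\tau-t+1})$ holds exactly at every finite stage $x$, one gets $e=\sum_{k=m}^{\ell-1}e_k\ge0$ once $x$ lies past all the stabilisation thresholds. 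Showing that these thresholds may be taken uniform — which is what reconciles the coordinate-wise and the link-wise accounting and bounds the accumulation — is the one genuinely laborious point; granting it, $\int_{\widehat{\mathscr{T}}}(v^\ell_t-v^m_t)\,d\widehat{m}(t)=\mon{}{e}\ge0$, i.e.\ $\mathbf{h}_\ell\succsim_i^\tau\mathbf{h}_m$, and huge transitivity follows.
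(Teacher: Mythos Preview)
Your treatment of strict separability is essentially the paper's own argument, only more careful: the paper simply writes $\int_{\widehat{\mathscr{T}}}(U(\mathbf{j}(t))-U(\mathbf{j}'(t)))\,d\widehat{m}(t)=U(j)-U(j')>0$ without distinguishing the near versus distant-future location of the altered coordinate, whereas you do. No issue there.

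For huge transitivity your route diverges from the paper's, and the divergence matters. The paper does \emph{not} telescope along the chain at all. Instead it reduces to the case where every link is indifferent and then uses the discreteness of the payoff values: since $U_i(h)\in\FQ$ for all $h\in Z$, the overtaking integral $\int_{\widehat{\mathscr{T}}}(U(\mathbf{h}_{k+1}(t))-U(\mathbf{h}_k(t)))\,d\widehat{m}(t)$ can equal $\mon{}{0}$ only if the underlying choice-class sums $\sum_{t\in\widehat{\mathrm{Ch}}}U(\mathbf{h}_{k+1}(t))$ and $\sum_{t\in\widehat{\mathrm{Ch}}}U(\mathbf{h}_k(t))$ are \emph{exactly} equal (not merely in the same monad). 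Exact equality is transitive over any index set, huge or not, so $\mathbf{h}_\ell$ and $\mathbf{h}_m$ have equal choice-class sums and hence $\mathbf{h}_\ell\sim_i^\tau\mathbf{h}_m$. The whole argument fits in three lines precisely because it never forms a sum over the chain index $k$.

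Your approach instead telescopes the near-part contributions over $k\in\{m,\dots,\ell-1\}$ and then needs a uniform-in-$k$ stabilisation threshold to conclude. You correctly flag this as the ``one genuinely laborious point'' and leave it open; but it is not merely laborious, it is the whole difficulty. With $\ell-m$ huge, the per-link thresholds $x_k\in\FN$ need have no finite supremum, and without that your identity $e=\sum_k e_k$ does not follow from the stage-$x$ telescoping. The paper's discreteness shortcut is exactly the device that makes this summation unnecessary: rather than controlling hugely many link contributions, it passes to an invariant (the exact choice-class sum) that is preserved along each link and therefore along the whole chain. If you want to repair your argument, the cleanest fix is to adopt that idea: argue first that $\int(v^{k+1}-v^k)\,d\widehat{m}\ge 0$ together with $U_i(Z)\subset\FQ$ forces the exact choice-class sum to be non-decreasing in $k$, and then compare $\mathbf{h}_\ell$ and $\mathbf{h}_m$ directly.
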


\begin{proof}
  Suppose that ${j},{j}'\in{C}$ satisfy ${j}\succ{j}'$.
  It implies that ${U}({j})> {U}({j}')$.
  For each pair of whole histories $\mathbf{h}, \mathbf{h}'$ which satisfy $\mathbf{j}=\mathbf{h}^\frown (j)^\frown \mathbf{h}'\in \mathbf{Z}^\tau$ and $\mathbf{j}'=\mathbf{h}^\frown (j')^\frown \mathbf{h}'\in \mathbf{Z}^\tau$, the inequation $\int_{\widehat{\mathscr{T}}}(U(\mathbf{j}(t))-U(\mathbf{j}'(t)))d\widehat{m}(t)=U(j)-U(j')>0$ holds. 
  So that $\mathbf{h}^\frown({j})^\frown\mathbf{h}'\succ^\tau\mathbf{h}^\frown({j})'^\frown\mathbf{h}'$ holds.
  
  To prove the second, it suffices to show that there is no chain $(\mathbf{h}_k)_{k\in\{1,\ldots,\kappa\}}$ of $\succsim_i^\tau$ such that $\int_{\widehat{\mathscr{T}}}(U(\mathbf{h}_{k+1}(t))-U(\mathbf{h}_{k}(t)))d\widehat{m}(t) =0$ for all $k\in\{1,\ldots,\kappa-1\}$ and $\int_{\widehat{\mathscr{T}}}(U(\mathbf{h}_\ell(t))-U(\mathbf{h}_m(t)))d\widehat{m}(t) <0$ for some $\ell>m$.
  Since $U_i(h)\in\FQ$ for all $h\in Z$, the difference between two payoffs becomes 0 only if $\sum_{t\in\widehat{\choice{}{}}}{U}(\mathbf{h}_{k+1}(t))=\sum_{t\in\widehat{\choice{}{}}}{U}(\mathbf{h}_{k}(t))$. 
  This implies that $\mathbf{h}_\ell\succsim_i^\tau\mathbf{h}_m$ holds for all $\ell>m$.
\end{proof}

All of the new subgame perfect equilibria that emerged under the simple sum criterion are realised because of the hugeness of the payoffs that are summed up.
Therefore, it is not surprising that these equilibria disappear when the simple sum criterion is discarded and replaced by the overtaking criterion.
As will be demonstrated below, there is no room for these equilibria to exist under the overtaking criterion.

\begin{prop}\label{nonext}
 Suppose a constituent game $\Gamma$ has a unique subgame perfect equilibrium $s^*$ and a profile of preference relations $(\succsim_i^\tau)$ of a $\tau$-repeated game $\mathbf{\Gamma}^\tau$ following the overtaking criterion satisfies dynamic consistency, then the strategy profile ${s}^{*\tau}$ is the only subgame perfect equilibrium of $\mathbf{\Gamma}^\tau$.
\end{prop}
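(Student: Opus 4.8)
The plan is to split the claim into two parts. That $s^{*\tau}$ \emph{is} a subgame perfect equilibrium of $\mathbf{\Gamma}^\tau$ is already essentially done: by Lemma~\ref{otws} the overtaking criterion makes each $\succsim_i^\tau$ strictly separable (hence weakly separable) and hugely transitive, so together with the assumed dynamic consistency every hypothesis of Proposition~\ref{ext} is satisfied and $s^{*\tau}$ is a subgame perfect equilibrium. The substance of the proposition is the converse: \emph{no} other strategy profile qualifies.

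For uniqueness I would argue by a backward induction on the number $r=\tau-s$ of constituent games lying strictly after the period $s$ currently in focus, the induction running over $r\in N$ and hence reaching the huge value $\tau-1$; this is legitimate because the property being inducted on --- ``in every subgame perfect equilibrium $\mathbf{s}$, after every connected terminal whole history of length $\tau-r-1$ the constituent games of periods $\tau-r,\ldots,\tau$ are all played according to $s^*$'' --- is set-theoretically definable, so the least-number principle applies. The base case $r=0$ is where strict separability acts directly: fixing $\mathbf{h}\in C^{\tau-1}$, the whole histories $\mathbf{h}^\frown(h')$ and $\mathbf{h}^\frown(h'')$ differ only in their last slot, a point of the near end carrying positive finite weight under $\widehat{m}$, so $\mathbf{h}^\frown(h')\succsim_i^\tau\mathbf{h}^\frown(h'')$ holds exactly when $h'\succsim_i h''$; the last-period subgame $\mathbf{\Gamma}^\tau(\mathbf{h})$ is thus strategically a faithful copy of $\Gamma$, and since $\Gamma$ has $s^*$ as its unique subgame perfect equilibrium, period $\tau$ must be played as $s^*$.

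For the inductive step, fix $\mathbf{h}\in C^{\tau-r-2}$ and examine the constituent game of period $s=\tau-r-1$. If it reaches a connected terminal $h_s\in C$, the induction hypothesis forces the continuation to be $s^*$ in every remaining period, so the eventual terminal whole history is $\mathbf{h}^\frown(h_s)$ followed by the appropriate number of copies of $O(s^*)$; iterating the dynamic consistency indifference $(h_c)^\frown(O(s^*))\sim_i^\tau(h_c)$ and chaining these with weak separability and huge transitivity, exactly as in the proof of Proposition~\ref{ext}, collapses that padded history back to $\mathbf{h}^\frown(h_s)$ up to $\sim_i^\tau$. If instead it reaches an unconnected terminal $h_s\in Z\setminus C$, the game $\mathbf{\Gamma}^\tau$ simply halts with whole history $\mathbf{h}^\frown(h_s)$. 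In either branch the period-$s$ players therefore rank the terminals of $\Gamma$ by comparing $\mathbf{h}^\frown(h_s)$ with $\mathbf{h}^\frown(h_s')$, and strict separability together with totality of $\succsim_i$ makes this comparison coincide with $\succsim_i$ itself; hence the restriction of the equilibrium to period $s$ is a subgame perfect equilibrium of a copy of $\Gamma$ and must be $s^*$, which is the inductive conclusion. Pushing $r$ to $\tau-1$ (length $0$, i.e. the root) shows that in every subgame every constituent game is played as $s^*$, i.e. the unique subgame perfect equilibrium is $s^{*\tau}$.

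The step I expect to be the main obstacle is exactly this inductive step in the regime where $C$ is a proper subclass of $Z$: the ``game continues'' and ``game halts'' branches must be reduced to the same form, and it is dynamic consistency --- giving the indifference between a connected terminal and that terminal followed by the equilibrium path --- combined with huge transitivity to chain a huge number of such indifferences, that makes the continuation payoff irrelevant and turns period $s$ into a faithful copy of $\Gamma$. A secondary point, more bureaucratic but worth stating, is to set up the induction so that the subclass of $N$ on which the property could fail is definable, so that the least-number principle is actually available.
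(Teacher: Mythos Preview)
Your proposal is correct and follows essentially the same route as the paper: both arguments first invoke Lemma~\ref{otws} together with Proposition~\ref{ext} to establish that $s^{*\tau}$ is a subgame perfect equilibrium, and then run a backward induction over the periods, using strict separability to force any subgame perfect equilibrium to agree with $s^*$ in the last period and, once agreement from period $k+1$ onward is assumed, again in period $k$. Your write-up is in fact more careful than the paper's on two points the paper leaves implicit: the case distinction between continuations through $C$ versus halts in $Z\setminus C$ (where you correctly bring in dynamic consistency and huge transitivity to collapse the tail of copies of $O(s^*)$), and the definability of the induction predicate needed to push the induction through huge values of $r$.
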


\begin{proof}
  Since weak separability and huge transitivity is satisfied by Lemma \ref{otws}, each ${s}^{*\tau}$ is a subgame perfect equilibrium of $\mathbf{\Gamma}^\tau$ by Proposition \ref{ext}.

  Let $\mathbf{s}^*$ denote a subgame perfect equilibrium of $\mathbf{\Gamma}^\tau$.
  Since the subgame perfect equilibrium $s^*$ of $\Gamma$ is unique, $\mathbf{s}^*$ must satisfy $\mathbf{s}_{i}^*|_{\mathbf{h}_{\tau}}((h))=s_{i}^*|_{\mathbf{h}_{\tau}(\tau)}(h)$ for all $\mathbf{h}_{\tau}\in {C}^{\tau-1}\times{H}$ and $h\in {H}|_{\mathbf{h}_{\tau}(\tau)}$ where $i=\mathbf{P}^\tau|_{\mathbf{h}_\tau}((h))$.
  
  Suppose $\mathbf{s}^*$ coincides with $s^{*\tau}$ in the subgames after the $k+1$-th period, i.e. $\mathbf{s}^*$ satisfies $\mathbf{s}^*_{i}|_{\mathbf{h}_{k+1}}(\mathbf{j})=s^{*\tau}_{i}|_{\mathbf{h}_{k+1}}(\mathbf{j})$ for all whole histories $\mathbf{h}_{k+1}\in C^k\times H$ 
  and $\mathbf{j}\in \mathbf{H}^{\tau}|_{\mathbf{h}_{k+1}}$ for some $k\in\{1,\ldots,\tau-1\}$ where $i=\mathbf{P}^\tau|_{\mathbf{h}_{k+1}}(\mathbf{j})$.
  Then, the payoff yielded by the strategy $\mathbf{s}^*$ in the subgame $\mathbf{\Gamma}|_{\mathbf{h}_k}$, where $\mathbf{h}_{k}\in{C}^{k-1}\times{H}$, can only have different actions in the subgame of the $k$-th period.
  Since $s^*$ is unique and $(\succsim_i^\tau)$ is strictly separable by Lemma \ref{otws}, the following relation holds for all $s_i$, where $i=P(\mathbf{h}_k(k))$ and $h=O_{\mathbf{h}_k(k)}(s_{i}|_{\mathbf{h}_k(k)},s^*_{-i}|_{\mathbf{h}_k(k)})$, 
  \[
  \mathbf{O}_{\mathbf{h}_k}(s^{*\tau}|_{\mathbf{h}_k})
  \ \succ_{i}^\tau\hspace{-1mm}|_{\mathbf{h}_k}\ 
  (h)^\frown \mathbf{O}_{\mathbf{h}_k^\frown({h})}(s^{*\tau}|_{\mathbf{h}_k^\frown({h})}).
  \]
  It implies that $\mathbf{s}^*$ coincides with $s^{*\tau}$, since $k$ is arbitrarily chosen from $\{1,\ldots,\tau-1\}$.
\end{proof}

\subsection{Limit of Means}

The preference relations $(\succsim_i^\tau)$ of generalised repeated games are said to follow the \textit{limit of means} criterion of constituent games if the sequence $(v_{t})_{t\in\{1,\ldots,\tau\}}$ of payoffs is preferred to the sequence $(w_t)_{t\in\{1,\ldots,\tau\}}$ if and only if
\[
\int_{\circtop{\mathscr{T}}}v_td\circtop{m} \ \geq \ \int_{\circtop{\mathscr{T}}}w_td\circtop{m}
\]
or, equivalently,
\begin{eqnarray*}
{\lim_{k\in\FN}\mon{}{\sum_{0\leq i\leq \min(k,1)}\frac{1}{2^{k}}v_{t_{(i,0)}}+\sum_{2\leq i\leq k}\sum_{2\leq j\leq 2^{i-1}}\frac{1}{2^{k-1}}v_{t_{(i,j-2)}}}}\\
&\hspace{-9.6cm} \geq 
\displaystyle{{\lim_{k\in\FN}\mon{}{\sum_{0\leq i\leq \min(k,1)}\frac{1}{2^{k}}w_{t_{(i,0)}}+\sum_{2\leq i\leq k}\sum_{2\leq j\leq 2^{i-1}}\frac{1}{2^{k-1}}w_{t_{(i,j-2)}}}}}.
\end{eqnarray*}

Note that these values are computed only for the elements of $\circtop{\choice{}{}}$, the choice class of $\{1,\ldots,\tau\}$ by the bird's eye view.
This implies that $(v_t)$ and $(w_{t})$ are considered to be equal if $v_t =w_t$ for all $t\in\circtop{\choice{}{}}$ since the periods outside $\circtop{\choice{}{}}$ cannot be distinguished by the bird's eye view.

The direct example that satisfies this criterion is a real-valued payoff function $\mathscr{U}_d^\tau:\mathbf{H}^\tau\rightarrow R$ defined as
\[
\mathscr{U}_\ell^\tau(\mathbf{h})
\ =\  \lim_{k\in\FN}\mon{}{
  \begin{matrix}
    \sum_{0\leq i\leq \min(k,1)}\frac{1}{2^{k}}U(\mathbf{h}(t_{(i,0)}))\hspace{3cm}\\
    \hfill+\sum_{2\leq i\leq k}\sum_{2\leq j\leq 2^{i-1}}\frac{1}{2^{k-1}}U(\mathbf{h}(t_{(i,j-2)}))
  \end{matrix}},
  \]
where $U(h)\in\FQ$ for all $h\in Z$.
This function is called the \textit{limit of means payoff function} of the game $\mathbf{\Gamma}^\tau$.

Under this criterion, weak separability is also supported by the preference relations corresponding to these functions.
However, unlike the discounted/simple sum criterion, huge transitivity is not satisfied.

\begin{lemma}\label{lmws}
  A preference relation $\succsim^\tau$ that follows the limit of means criterion satisfies weak separability but does not huge transitivity.
\end{lemma}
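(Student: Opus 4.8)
The plan is to prove the two assertions by entirely different means: weak separability follows from a one-line monotonicity remark, while the failure of huge transitivity is witnessed by an explicit ``sliding'' chain, much in the spirit of the $\tfrac{t}{\tau}\doteq\tfrac{t+1}{\tau}$ example noted right after the definition of huge transitivity.

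For weak separability I would observe that, in the formula defining $\mathscr{U}_\ell^\tau$, the history $\mathbf{h}$ enters only through its values $U(\mathbf{h}(t_{(i,j)}))$ at the choice-class points $t_{(i,j)}\in\circtop{\choice{}{}}$, and each such value carries a non-negative coefficient. Hence, given $j,j'\in C$ with $j\succsim j'$, so $U(j)\geq U(j')$, the whole histories $\mathbf{h}'^\frown(j)^\frown\mathbf{h}'''$ and $\mathbf{h}'^\frown(j')^\frown\mathbf{h}'''$ differ only in the period $k+1$, and replacing $j'$ by $j$ there either leaves $\mathscr{U}_\ell^\tau$ unchanged (if $k+1\notin\circtop{\choice{}{}}$) or does not decrease it (if $k+1\in\circtop{\choice{}{}}$); in both cases $\mathbf{h}'^\frown(j)^\frown\mathbf{h}'''\succsim^\tau\mathbf{h}'^\frown(j')^\frown\mathbf{h}'''$, so weak separability holds.

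For the negative part, the decisive — and only delicate — observation is that under the limit of means any change of $\mathbf{h}$ at a single period is an indifference. Indeed, if $t^*\notin\circtop{\choice{}{}}$ this is immediate from the formula; and if $t^*=t_{(i,j)}$, then the $n$-th partial sum in the definition of $\mathscr{U}_\ell^\tau$ changes by $\bigl(U(\mathbf{h}'(t^*))-U(\mathbf{h}(t^*))\bigr)/2^{n-1}$ (or by the same quantity with $2^{n-1}$ replaced by $2^{n}$, for the two exceptional points $t_{(0,0)}$ and $t_{(1,0)}$), so the value of $\mathscr{U}_\ell^\tau$ changes by $\lim_{n\in\FN}\mon{}{\bigl(U(\mathbf{h}'(t^*))-U(\mathbf{h}(t^*))\bigr)/2^{n-1}}=\mon{}{0}$, since $U$ is $\FQ$-valued. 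This is the same ``compactness'' effect that blocks strict separability, and it is exactly what destroys huge transitivity.

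Concretely, I would fix a constituent game possessing two terminal histories $h,j$ that may occur in every period with $U(h)<U(j)$ (such pairs exist, e.g. in the Prisoner's Dilemma of Example \ref{PD}, taking $h$ to be mutual confession and $j$ mutual silence), and put $\mathbf{h}_k={h^k}^\frown j^{\tau-k}$ for $k\in\{0,1,\ldots,\tau\}$, a family of $\kappa=\tau+1$, hence huge, whole histories. Any two consecutive members differ in exactly one period, so by the observation above $\mathbf{h}_{k+1}\sim^\tau\mathbf{h}_k$, and in particular $(\mathbf{h}_k)_{k=0}^{\tau}$ is a chain of $\succsim^\tau$. However, $\mathscr{U}_\ell^\tau$ evaluated at a constant whole history $c^\tau$ equals $U(c)$ (the bird's-eye measure $\circtop{m}$ is a probability measure), so $\mathscr{U}_\ell^\tau(\mathbf{h}_0)=U(j)>U(h)=\mathscr{U}_\ell^\tau(\mathbf{h}_\tau)$, whence $\mathbf{h}_\tau\not\succsim^\tau\mathbf{h}_0$ although $\tau\geq 0$. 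This contradicts huge transitivity, so $\succsim^\tau$ does not satisfy it. The sole step requiring real care is the single-period-indifference claim of the previous paragraph; everything else is bookkeeping.
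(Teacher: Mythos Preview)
Your proof is correct. For weak separability you and the paper argue the same thing: a change at a single period leaves $\mathscr{U}_\ell^\tau$ unchanged (because the single choice-point contribution carries weight $1/2^{n-1}$ or $1/2^n$, which vanishes in the limit), so weak separability holds trivially.

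For the failure of huge transitivity the two constructions differ. The paper builds a chain $(\mathbf{h}_k)_{k\in\{1,\ldots,\mu\}}$, $\mu\in\varepsilon\setminus\FN$, where $\mathbf{h}_k$ places $j$ on the union of all monads $\mon{\circtop{=}}{t_{(i,\cdot)}}$ with $i\le k$ and $h$ elsewhere; consecutive terms differ on one additional \emph{level} of monads, and one checks $\mathscr{U}_\ell^\tau(\mathbf{h}_1)=1$ while $\mathscr{U}_\ell^\tau(\mathbf{h}_\mu)=2$. Your sliding chain $\mathbf{h}_k=h^{k}{}^\frown j^{\tau-k}$ changes one \emph{period} per step and reaches the same contradiction via $\mathscr{U}_\ell^\tau(j^\tau)=U(j)>U(h)=\mathscr{U}_\ell^\tau(h^\tau)$. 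Your route is the more elementary one and is exactly the $\tfrac{t}{\tau}\doteq\tfrac{t+1}{\tau}$ phenomenon the paper flags after defining huge transitivity; the paper's route has the incidental virtue that each $\mathbf{h}_k$ is constant on monads, i.e.\ consistent with the bird's-eye view in the sense of Definition~\ref{consistent_bird}, whereas your $\mathbf{h}_k$ generally is not. Since the lemma and the payoff function $\mathscr{U}_\ell^\tau$ are stated on all of $\mathbf{H}^\tau$, this does not affect the validity of your argument.
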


\begin{proof}
  For simplicity, it is assumed that there exists $\varepsilon\in \tau\setminus\FN$ such that $\tau=2^\varepsilon$.
  Since $U(h)\in\FQ$ is finite,
  the value of $\mathscr{U}_\ell^\tau(\mathbf{h})$ and hence $(\succsim^\tau)$ is not affected if only one component of the whole history changes.
  This implies that weak separability is trivially preserved.

  Second, let $(\mathbf{h}_k)_{k\in\{1,\ldots,\mu\}}$ for some $\mu\in\varepsilon\setminus\FN$ be a sequence of whole histories, consisting of two component histories $h,j\in H$ whose payoffs are given by $U(h)=1$ and $U(j)=2$, with
  \[
  \mathbf{h}_k \ =\
  \left\{
  \langle\ell,h\rangle\,;\,
  \begin{matrix}
    \left(\forall i\leq k\right)(\forall j\in 2^{i})\\
    \left(\ell\notin\mon{\circtop{=}}{t_{(i,j)}}\right)
  \end{matrix}
  \right\}
  \cup
  \left\{
  \langle\ell,j\rangle\,;\,
  \begin{matrix}
    \left(\exists i\leq k\right)(\exists j\in 2^{i})\\
    \left(\ell\in\mon{\circtop{=}}{t_{(i,j)}}\right)
  \end{matrix}
  \right\},
  \]
  where $t_{(i,j)}$ for $i\in\FN$ and $j\in j(i)$ in which $j(0)=j(1)=1$ and  $j(i)= 2^{i-2}$ is defined as
  \[
  t_{(i,j)}=
  \begin{cases}
    \tau^i & \text{ if } i<2\\
    \frac{2j+1}{2^{i-1}}\cdot\tau & \text{ otherwise}.
  \end{cases}
  \]
  Then, 
  $\mathscr{U}_{\ell}^\tau(\mathbf{h}_k)= \mathscr{U}_{\ell}^\tau(\mathbf{h}_{k+1})$ holds for all $k\in\{1,\ldots,\mu-1\}$.
  However, $\mathscr{U}_{\ell}^\tau(\mathbf{h}_1)=1$ but $ \mathscr{U}_{\ell}^\tau(\mathbf{h}_\mu)=2$.
\end{proof}

Since preference relations following this criterion generally do not satisfy huge transitivity, the validity of backward induction cannot be guaranteed. 
This implies that the assumption of Proposition \ref{ext} is not satisfied.

However, a similar claim can be made instead.
It states that repeating what is realised as a Nash equilibrium in the constituent game $\Gamma$ turns out to be a subgame perfect equilibrium in $\mathbf{\Gamma}^\tau$.

\begin{prop}\label{lm_Nash}
  Let $\mathbf{\Gamma}^\tau$ be a $\tau$-repeated game with $C=Z$ and $(\succsim_i^\tau)$ follow the limit of means criterion.
  The strategy profile ${s}^{N\tau}$ that plays a Nash equilibrium strategy profile $s^N$ of the constituent game $\Gamma$ $\tau$ times is a subgame perfect equilibrium.
\end{prop}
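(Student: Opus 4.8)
The plan is to verify the subgame-perfection inequality of Definition~\ref{SPNE} directly, exploiting the fact that under the limit of means criterion a player's payoff depends only on the values realised at the representative points $t_{(i,j)}\in\circtop{\choice{}{}}$, each of which carries vanishing weight $1/2^{k-1}$ (or $1/2^{k}$) in the defining limit of $\mathscr{U}_{\ell\,i}^\tau$. Fix a whole history $\mathbf{h}=(h_1,\dots,h_t)$ at which a player $i=\mathbf{P}^\tau(\mathbf{h})$ is called to move, and let $\mathbf{s}_i$ be an arbitrary strategy of player $i$ in the subgame $\mathbf{\Gamma}^\tau(\mathbf{h})$; the two cases $h_t\in H\setminus Z$ (a partial play of period $t$) and $h_t\in C$ (the start of period $t{+}1$, where the mover is $P(\emptyset)$) are handled by the same argument, so I treat the former. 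Since $C=Z$, every terminal whole history has length exactly $\tau$: in the subgame, period $t$ is first completed from the node $h_t$ and then the full constituent game $\Gamma$ is played in each of the periods $t{+}1,\dots,\tau$.

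First I would describe the two continuations to be compared. Under $s^{N\tau}|_{\mathbf{h}}$ the node $h_t$ is completed by $s^{N}|_{h_t}$, yielding a completed period-$t$ history $h_t^{*}$, and each later period realises $O(s^N)$. Under $(\mathbf{s}_i,s^{N\tau}_{-i}|_{\mathbf{h}})$ the node $h_t$ is completed by $(s^N_{-i},\mathbf{s}_i)$, yielding some $h_t'\in Z$; and in each later period $t'>t$ all players other than $i$ play the fixed profile $s^N_{-i}$, while $\mathbf{s}_i$, once this deterministic play has been substituted, induces along the path actually reached a strategy $\tilde s_i^{t'}$ of $\Gamma$, so period $t'$ realises $O(s^N_{-i},\tilde s_i^{t'})$. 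Because $s^N$ is a Nash equilibrium of $\Gamma$, $U_i(O(s^N))\geq U_i(O(s^N_{-i},\tilde s_i^{t'}))$ for every $t'>t$.

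Next I would read off the sign of the payoff difference from the defining limit. Comparing $\mathbf{h}^{\frown}$(equilibrium continuation) with $\mathbf{h}^{\frown}$(deviation continuation) term by term over the representative points $t_{(a,b)}$: those with $t_{(a,b)}<t$ lie in the common prefix $(h_1,\dots,h_{t-1})$ and cancel; those with $t_{(a,b)}>t$ contribute $U_i(O(s^N))-U_i(O(s^N_{-i},\tilde s_i^{t_{(a,b)}}))\geq 0$; and if $t$ itself happens to be a representative point there is one further term $U_i(h_t^{*})-U_i(h_t')$ of unrestricted sign, multiplied by the weight $1/2^{k-1}$ (or $1/2^{k}$). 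Since $H$ is finite and $U$ is $\FQ$-valued, this last term is bounded in absolute value by a sequence tending to $\mon{}{0}$ as $k$ ranges over $\FN$; hence for each $k$ the bracketed quantity is at least a null sequence, so the limit is $\geq\mon{}{0}$. That is, $\mathbf{O}_{\mathbf{h}}(s^{N\tau}|_{\mathbf{h}})\ {\succsim_i^\tau}|_{\mathbf{h}}\ \mathbf{O}_{\mathbf{h}}(\mathbf{s}_i|_{\mathbf{h}},s^{N\tau}_{-i}|_{\mathbf{h}})$, which is exactly the required inequality; as $\mathbf{h}$ and $\mathbf{s}_i$ were arbitrary, $s^{N\tau}$ is a subgame perfect equilibrium.

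The main obstacle is that, unlike in Proposition~\ref{ext}, huge transitivity fails here (Lemma~\ref{lmws}), so backward induction is unavailable and the argument must rest wholly on the geometry of $\circtop{m}$. Concretely, the delicate step is the rigorous treatment of the single partially-played period $t$: one must justify that its contribution to the Kalina-style limit is genuinely negligible and that ``a limit of quantities each bounded below by a null sequence is $\geq\mon{}{0}$'' is legitimate in this setting, for which I would appeal to the properties of Borel approximating functions and Kalina measures established in Sakahara and Sato~\cite{kalina_measure}. A secondary point worth stating carefully is that, after substituting the deterministic profile $s^N_{-i}$, the repeated-game strategy $\mathbf{s}_i$ does determine a well-defined constituent-game strategy $\tilde s_i^{t'}$ in each later period, so that the Nash inequality for $\Gamma$ may be applied period by period.
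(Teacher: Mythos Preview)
Your proof is correct and follows essentially the same route as the paper: compare the continuation payoffs period by period, invoke the Nash property $U_i(O(s^N))\geq U_i(O(s^N_{-i},\tilde s_i^{t'}))$ for each later period, and absorb the single partially played period into a term of weight $1/2^{k-1}$ that vanishes in the limit. The paper's version is terser---it restricts at once to $|\mathbf{h}|=\ell\in\circtop{\choice{}{}}$ and to the deviation that best-responds in every constituent game, then reads off $\mon{}{\frac{\tau-\ell}{\tau}}\cdot U_i(O(s_i,s_{-i}^{N}))$ directly---whereas you handle an arbitrary deviation and an arbitrary history; but the underlying mechanism is identical.
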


\begin{proof}
%
  Provided that a whole history $\mathbf{h}$ satisfies $|\mathbf{h}|=\ell$ for some $\ell\in\circtop{\choice{}{}}$.
  Then, the payoff yielded by deviating from $s^N_i$ to the strategy ${s}_i$ which yields player $i=\mathbf{P}^\tau(\mathbf{h})$ the most at each constituent game  is given by
  \begin{eqnarray*}
  \lim_{k\in\FN}\mon{}{
    \begin{matrix}
      \frac{1}{2^{k-1}}\cdot U_i(\mathbf{h}(\ell)^\frown O_{\mathbf{h}(\ell)}({s}_{i}|_{\mathbf{h}(\ell)},{s}_{-i}^{N}|_{\mathbf{h}(\ell)}))\hspace{2cm}\\
      \hfill\ +\ \frac{g(k)-1}{2^{k-1}}\cdot U_i(O({s}_{i},s_{-i}^{N}))
    \end{matrix}
    }\\
  &\hspace{-9.2cm} = \ \mon{}{\frac{\tau-\ell}{\tau}\cdot U_i(O(s_i,s_{-i}^{N}))}
   = \ \mon{}{\frac{\tau-\ell}{\tau}}\cdot U_i(O(s_i,s_{-i}^{N}))
  \end{eqnarray*}
  where $g(k)=\left|\left\{j\in 2^{k-1}; \ell <\tau\cdot j/{2^{k-1}}\leq \tau\right\}\right|$.
  Since $s^{N}$ is a Nash equilibrium of the game $\Gamma$, $U_i(s_{i}^N)\geq U_i(s_i,s_{-i}^N)$ holds for all $i\in I$.
  This implies that $s^{N\tau}$ is a subgame perfect equilibrium.
\end{proof}

The assumptions of Proposition \ref{lm_Nash} can be relaxed to some extent.
The following proposition illustrates this fact.

\begin{prop}\label{limit_means}
  Let $\mathbf{s}^*$ be a subgame perfect equilibrium of $\mathbf{\Gamma}^\tau$ in which $C=Z$ holds and $(\precsim_i^\tau)$ follow the limit of means criterion.
  Then, any strategy $\mathbf{s}^\ell$ that obeys $\mathbf{s}^*$ but switches to any strategy $s$ only if $t$ is in a finite subclass of $\circtop{\textrm{Ch}}$ is a subgame perfect equilibrium.
\end{prop}

\begin{proof}
  Let $X$ denote a finite subclass of $\circtop{\text{Ch}}$ in which $\mathbf{s}^\ell$ plays $s$ other than $s^*$. 
  Then, a limit means of the payoff yielded by $\mathbf{s}^\ell$ is equal to that of $\mathbf{s}^{*}$, since
  \begin{eqnarray*}
  \textstyle
  \lim_{k\in\FN}\mon{}{\frac{\sum_{t\in \circtop{\choice{}{}}}(U_i(\mathbf{O}(\mathbf{s}^*(t))) - {U}_i(\mathbf{O}^\tau(\mathbf{s}^\ell)(t)))}{2^{k-1}}}\\
  & \hspace{-4cm}\ =\ \lim_{k\in\FN}\mon{}{\frac{|X|\cdot(U_i(O(s^*)) - U_i(O(s)))}{2^{k-1}}}\ =\ 0
  \end{eqnarray*}
  holds.
\end{proof}

\begin{ex}[Chain Store Game continued from Example \ref{cs}]\label{cs_lm}
  By looking at the game from the bird's eye view, new equilibria can be found.
  An example is given as follows:
  \begin{enumerate}[1)]
  \item All local stores do not establish second stores, or play ``out'',
  \item The chain store reacts aggressively, or plays ``A'', for all $t\in\{1,\ldots,\tau\}$.
  \end{enumerate}
  Since (out, A) is a Nash equilibrium of the constituent game $\Gamma$, the above set of strategies adds up to a new kind of subgame perfect equilibrium of $\mathbf{\Gamma}^\tau$ by Proposition \ref{lm_Nash}.

  It is also intuitively obvious.
  Since any deviation from the set of strategies gives the chain store only an imperceptible amount of payoff, there is no  incentive to deviate.
  Moreover, all local stores only reduce their payoffs by changing their strategies, because by opening second stores they face an aggressive response from the chain store.
  These arguments confirm that the set of strategies is a subgame perfect equilibrium.
\end{ex}

Similar arguments to Proposition \ref{lm_Nash} hold for the mixed extension $\Delta(G)$ of $G$, by replacing mixed strategies with corresponding strategies of the $\tau$-repeated game of $G$.
To illustrate this in more detail, let us extend the framework to allow players to take nondeterministic actions.

Given that $G=\langle I, H, (\succsim_i)\rangle$ is a strategic game, $\Delta(A_i)$ denotes the class of lotteries on $A_i$ and each probability distribution $\sigma_i:\Delta(A_i)\rightarrow R^{|A_i|}$ is called a \textit{mixed strategy} of player $i$.
Then, the \textit{mixed extension of the game $G$} is defined as follows.

\begin{defn}
  The \textit{mixed extension} of the strategic game $G=\langle I,(A_i),(\succsim_i)\rangle$ is the strategic game $\Delta(G)=\langle I,(\Delta(A_i)),(\Delta(\succsim_i)))\rangle$  where $\Delta(A_i)$ is the class of lotteries on $A_i$ and $\Delta(\succsim_i)$ is a preference relation on $\times_{i\in I}\Delta(A_i)$ which is assumed to satisfy the assumptions of von Neumann and Morgenstern.
\end{defn}

Mixed strategy Nash equilibrium is also given as follows.

\begin{defn}[Definition 32.3 of \cite{OR}]
  A mixed strategy Nash equilibrium of a strategic game $G$ is a Nash equilibrium of its mixed extension $\Delta(G)$.
\end{defn}

Next, let us recreate a kind of mixed extension within a framework of generalised repeated games.
Let $(a_{i,1},\ldots,a_{i,|A_i|})$ be an arbitrarily ordered sequence of elements of $A_i$,
and $m$ denote the least common multiple of all denominators of $\sigma_i(a_{i,j})$ for all $a_{i,j}\in A_i$ and $i\in I$.

A whole history $\mathbf{h}^{m^{|I|}}=(h_t)_{t\in m^{|I|}}$ is called a \textit{mixed unit} of $(\sigma_i)_{i\in I}$ of a mixed extension $\Delta(G)$ of $G$ if and only if it satisfies
\[
\mathbf{h}^{m^{|I|}}(t)(i) \ =\ a_{i,j}
\quad\Leftrightarrow\quad
0 < \frac{t}{m^{{i}}} -{z(t,i)}-{\sum_{k< j}\sigma_i(a_{i,k})}\leq \sigma_i(a_{i,j})
\]
where $z(t,i)=\lfloor\frac{{t-1}}{m^i}\rfloor$ indicates how many times the set of actions that make up $\sigma_i$ has been repeated by player $i$ up to period $t$.

A mixed unit can be interpreted as a representation of a mixed strategy by a generalised extended game,
since the ratio of actions taken by each player in the mixed unit matches the ratio of actions taken by each player in the mixed strategy.
To preserve the independence between the probabilities of the actions taken by each players, they must be repeated $m^{|I|-i}$ times with  $m^{i}$ periods. 

As already mentioned in Section \ref{perspective}, the actions taken at the point in time contained in the same monad are indiscernible and are therefore considered to be the same.
However, the actions that make up each mixed unit are different but are contained in the single monad.
This is because they are thought to be randomly chosen.
This is the key interpretation of this construction.
By putting a unit consisting of several actions into a single monad, the same kind of state as that of the mixed strategy is considered to be recreated.

\begin{prop}
  Let $\Delta(G)$ be a mixed extension of a strategic form game $G$ and $\sigma^*=(\sigma_i^*)_{i\in I}$ be a mixed strategy Nash equilibrium.
  Then, there exists a corresponding strategy $\mathbf{s}^\Delta$ of $\mathbf{G}^\tau$ that satisfies $\frac{\left|\{t\,;\,\mathbf{O}(\mathbf{s}^\Delta)(t)(i)=a_i\}\right|}{\tau} \doteq\sigma_i^*(a_i)$ for all $i\in I$ and is a subgame perfect equilibrium of $\mathbf{G}^\tau$.
\end{prop}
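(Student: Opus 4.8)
The plan is to let $\mathbf{s}^\Delta$ be the history-independent strategy profile of $\mathbf{G}^\tau$ that plays the mixed unit $\mathbf{h}^{m^{|I|}}$ of $\sigma^*=(\sigma_i^*)_{i\in I}$ periodically: at a whole history $\mathbf{h}$, player $i$ takes the action the mixed-unit rule assigns to the period index $(|\mathbf{h}|\bmod m^{|I|})+1$, so that $\mathbf{O}(\mathbf{s}^\Delta)$ is the mixed unit repeated and truncated to length $\tau$. A strategic game has $C=Z$, and the limit-of-means criterion is unaffected by a change on a set of periods of $\circtop m$-measure $\doteq 0$; hence the standing hypotheses of the limit-of-means subsection, $C=Z$ and dynamic consistency, are unproblematic here, and the task splits into (i) the frequency identity $\frac{|\{t\,;\,\mathbf{O}(\mathbf{s}^\Delta)(t)(i)=a_i\}|}{\tau}\doteq\sigma_i^*(a_i)$ and (ii) subgame perfection of $\mathbf{s}^\Delta$.

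For (i), one mixed unit makes player $i$ play each $a_{i,j}$ on exactly $m^{|I|}\sigma_i^*(a_{i,j})$ of its $m^{|I|}$ periods: within each of the $m^{|I|-i}$ cycles of length $m^i$ the number of positions $s$ with $\sum_{k<j}\sigma_i^*(a_{i,k})<s/m^i\le\sum_{k\le j}\sigma_i^*(a_{i,k})$ is $m^i\sigma_i^*(a_{i,j})$, an integer because $m$ clears all denominators. Writing $\tau=q\,m^{|I|}+r$ with $r<m^{|I|}$ finite, the count of periods with $\mathbf{O}(\mathbf{s}^\Delta)(t)(i)=a_i$ is within $m^{|I|}$ of $q\,m^{|I|}\sigma_i^*(a_i)$; dividing by $\tau$ and using $q\,m^{|I|}/\tau\doteq 1$ and $m^{|I|}/\tau\doteq 0$ (as $\tau$ is huge) gives the claim. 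The same block count, applied to the nested moduli $m\mid m^2\mid\cdots\mid m^{|I|}$ governing the several players, shows that over any run of $m^{|I|}$ consecutive periods each joint profile $a$ occurs on exactly $m^{|I|}\prod_{k}\sigma_k^*(a_k)$ periods — i.e. the mixed unit realises the product distribution, which is what makes it a faithful surrogate of $\sigma^*$.

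For (ii) I would exploit the correspondence between ``mixed-unit-periodic'' strategies of $\mathbf{G}^\tau$ and strategies of the $\tau$-repeated game of $\Delta(G)$: by the product-distribution property each bird's-eye monad of $\mathbf{O}(\mathbf{s}^\Delta)$ displays, through $\circeq$, the single mixed profile $\sigma^*$, so under the limit-of-means criterion and the von Neumann and Morgenstern representation underlying $\Delta(\succsim_i)$ the payoff of $\mathbf{s}^\Delta$ in any subgame equals $U_i(\sigma^*)=\sum_{a}\bigl(\prod_k\sigma_k^*(a_k)\bigr)U_i(a)$, constant across monads. Fix a player $i$, a nonterminal whole history $\mathbf{h}$, and an arbitrary deviation $\mathbf{s}_i$ in $\mathbf{G}^\tau(\mathbf{h})$; the common prefix $\mathbf{h}$ cancels and in every subsequent monad the opponents keep displaying $\sigma_{-i}^*$, so player $i$'s block inside that monad displays some mixed action $\hat\sigma_i$ and the per-monad payoff to $i$ is $U_i(\hat\sigma_i,\sigma_{-i}^*)\le\max_{\sigma_i}U_i(\sigma_i,\sigma_{-i}^*)=U_i(\sigma_i^*,\sigma_{-i}^*)=U_i(\sigma^*)$, the middle equality because $\sigma_i^*$ is a best response to $\sigma_{-i}^*$ in the Nash equilibrium $\sigma^*$ of $\Delta(G)$. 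Integrating over $\circtop{\mathscr{T}}$ against $\circtop m$ — and noting that the near-end subgames, in which histories differing only in the last finitely many periods are $\sim_i^\tau$-indifferent, impose no constraint, exactly as in Proposition \ref{limit_means} — yields $\mathbf{O}_{\mathbf{h}}(\mathbf{s}^\Delta|_{\mathbf{h}})\ {\succsim_i^\tau}|_{\mathbf{h}}\ \mathbf{O}_{\mathbf{h}}(\mathbf{s}^\Delta_{-i}|_{\mathbf{h}},\mathbf{s}_i)$ for all $i,\mathbf{h},\mathbf{s}_i$, which is subgame perfection.

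The main obstacle is precisely the step reading an arbitrary deviation $\mathbf{s}_i$ as a mixed action $\hat\sigma_i$ per monad. In the fully fine-grained game a deviator sees the period index, and the mixed-unit pattern played by $-i$ is deterministic, so a point-by-point best response would in principle give player $i$ the generally strictly larger value $\sum_{a_{-i}}\bigl(\prod_{k'\neq i}\sigma_{k'}^*(a_{k'})\bigr)\max_{a_i}U_i(a_i,a_{-i})$ and destroy the equilibrium. Ruling this out is exactly the work done by the bird's eye view: the mixed unit has finite length $m^{|I|}$ and therefore lies within a single $\circeq$-monad, so by the very definition of the indiscernibility equivalence its internal order is not a feature a bird's-eye agent can condition on — it is the formal stand-in for a randomisation, as in the interpretive discussion following the definition of the mixed unit. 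Making this rigorous — say by declaring admissible only those deviator strategies whose restriction to each $\circeq$-monad is a well-defined empirical action, or equivalently by carrying the whole argument out inside the $\tau$-repeated game of $\Delta(G)$, pulling strategies back along $\sigma\mapsto$ ``play the mixed unit of $\sigma$'', and invoking the mixed-extension analogue of Proposition \ref{lm_Nash} — is the substantive content; once it is fixed, the remainder is the frequency counting of (i) and the limit-of-means averaging already used for Propositions \ref{lm_Nash} and \ref{limit_means}.
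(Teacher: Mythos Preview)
Your proposal is correct and follows essentially the same route as the paper: define $\mathbf{s}^\Delta$ as the periodic play of the mixed unit $\mathbf{h}^{m^{|I|}}$, check the frequency identity by block counting, and deduce subgame perfection from the Nash property of $\sigma^*$ via (the mixed-extension analogue of) Proposition~\ref{lm_Nash}. The paper's proof is terser --- it simply asserts the frequency identity ``by definition'' and invokes Proposition~\ref{lm_Nash} directly --- whereas you spell out the counting and, more importantly, isolate the deterministic-pattern exploitation issue that the paper handles only in the interpretive paragraph preceding the proposition rather than inside the proof; your two suggested formalisations (restricting deviators to monad-consistent strategies, or working in the $\tau$-repeated game of $\Delta(G)$) are exactly the content the paper leaves implicit.
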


\begin{proof}
  Define a strategy $\mathbf{s}^\Delta$ that assigns $\mathbf{s}^\Delta_i(\mathbf{j})=\mathbf{h}^{m^{|I|}}(t)(i)$ where $t=|\mathbf{j}|-\lfloor \frac{|\mathbf{j}|-1}{m^i}\rfloor\cdot m^i$.
  By definition of $\mathbf{h}^{m^{|I|}}$, the equation $\frac{\left|\{t\,;\,\mathbf{O}(\mathbf{s}^\Delta)(t)(i)=a_i\}\right|}{\tau} \doteq\sigma_i^*(a_i)$ is satisfied for all $i\in I$.
  It implies that the payoff yielded from both strategies coincide.
  Since $\sigma^*$ is a Nash equilibrium of $\Delta(G)$, $\mathbf{s}^\Delta$ is a subgame perfect equilibrium of $\mathbf{G}^\tau$ by Proposition \ref{lm_Nash}.
\end{proof}

The strategy $\mathbf{s}^\Delta$ is said to be a \textit{mixed strategy} of $\mathbf{G}^\tau$.

\begin{ex}[Bach or Stravinsky (Example 15.3 of Osborne and Rubinstein\cite{OR})]
  Two people who want to go to a concert have a hard time choosing the program.
  They both want to go out together.
  But they have different musical tastes.
  One prefers traditional music and would like to go to a Bach concert if possible, while the other, who prefers contemporary music, would like to go to a Stravinsky program.

  Player B, who prefers Bach, gains 2 if they go to a Bach concert together, while Player S, who prefers Stravinsky, gains 1.
  Conversely, Player B gains 1 and Player S gains 2 if they go to the Stravinsky concert together.
  They gain 0 if they fail to coordinate to go out together.
  
  The pure strategy Nash equilibria are given as (Bach, Bach) and (Stravinsky, Stravinsky).
  The mixed strategy Nash equilibrium is given as $\Bigl(\left(\frac{2}{3},\frac{1}{3}\right),\ \left(\frac{1}{3},\frac{2}{3}\right)\Bigr)$.
  This equilibrium is reached by the mixed unit ({\footnotesize BB,BB,SB,BS,BS,SS,BS,BS,SS}) and the mixed subgame perfect equilibrium $\mathbf{s}^\Delta$ is given by
  \[
  \mathbf{s}^\Delta_{\text{\footnotesize B}}(\mathbf{h})\ =\
  \begin{cases}
    \text{\footnotesize B} & \text{if } |\mathbf{h}|>\lfloor\frac{|\mathbf{h}|}{3}\rfloor\cdot 3\\
    \text{\footnotesize S} & \text{otherwise}    
  \end{cases}
  \qquad
  \mathbf{s}^\Delta_{\text{\footnotesize S}}(\mathbf{h})\ =\
  \begin{cases}
    \text{\footnotesize B} & \text{if } |\mathbf{h}|>\lfloor\frac{|\mathbf{h}|+5}{9}\rfloor\cdot 9\\
    \text{\footnotesize S} & \text{otherwise}. 
  \end{cases}
  \]
  The player who prefers Bach repeats the action in cycle 3, choosing Bach twice and then Stravinsky once.
  In contrast, the Stravinsky-loving player chooses Bach 3 times and then Stravinsky 6 times in cycle $9$.
  The number of cycle of the Stravinsky-loving player is set to be the square of the Bach-loving player's. 
  This is intended to eliminate dependency between their actions.
  It is also assumed that players cannot alter the length of the cycles themselves since their actions are chosen randomly and they are unaware of the cycles.

  Note that this result consists only of pure strategies.
  The result can be seen as virtually identical to that originally obtained by the mixed Nash equilibrium strategies.
  It can be said that the limit of means criterion expresses a view that sees one-shot strategic games as something that is repeated an enormous number of times, day after day.
\end{ex}

Returning to Proposition \ref{lm_Nash}, this is based on the assumption that all terminal histories are connected. 
The next proposition alters this assumption and provides a new type of equilibrium in such situations.
%

\begin{prop}\label{lm_SPNE}
  Let ${\Gamma}$ be a constituent game with at least two core players whose actions can terminate the game $\mathbf{\Gamma}^\tau$ in any period $t\in \circtop{\choice{}{}}$, or $Z\setminus C\ne\emptyset$.
  It is assumed that $C$ is a singleton and that all core players receive a positive payoff from the history.
  Then, there exists a collection of strategies $\mathbf{s}^{\mathbf{h}}$ which are subgame perfect equilibria and realise an arbitrarily chosen terminal whole history $\mathbf{h}\in\mathbf{Z}^\tau$. 
\end{prop}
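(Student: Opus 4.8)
The plan is to build $\mathbf{s}^{\mathbf{h}}$ as a trigger profile — follow the unique play that produces $\mathbf{h}$, and upon \emph{any} deviation switch permanently to a minimax punishment of the deviator — exploiting two features of the limit of means together with the hypotheses. Write $h_c$ for the unique element of $C$ and $y_i=U_i(h_c)>0$ for each core player $i$. First I would record the two facts that do all the work. \emph{(a) Normal form of terminal whole histories:} since $C=\{h_c\}$, every legal $\mathbf{h}\in\mathbf{Z}^\tau$ consists of $h_c$ repeated some $t^*-1$ times followed by a single terminal $z\in Z$ (with $t^*\le\tau$ the length of $\mathbf{h}$), a non-connected terminal being possible only in the last coordinate. \emph{(b) Tail-insensitivity of the limit of means:} exactly as in the proof of Lemma \ref{lmws}, changing the whole history on a subfamily of $\circtop{\choice{}{}}$ of finite size does not affect $\mathscr{U}_\ell^\tau$; in particular the boundedly many periods of one constituent game, and all the null coordinates after a termination, contribute $\doteq 0$ to the average. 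Combining (a) and (b), the limit-of-means value of such an $\mathbf{h}$ for a core player $i$ is $\mon{}{\tfrac{t^*-1}{\tau}}\cdot y_i$, which is $\ge 0$ precisely because $y_i>0$.

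Next I would set up the punishment. By hypothesis, in every period at least two core players can end $\mathbf{\Gamma}^\tau$; let $\mathbf{s}^{\mathrm{pun}}$ be the profile in which, in each period, every core player who can terminate the current constituent game does so. No matter what anyone else does, the game then ends within the current (finite-length) constituent game, so its outcome has only boundedly many non-null coordinates and hence, by (b), limit-of-means value $\doteq 0$ for every player. Since this is true in \emph{every} subgame reached under $\mathbf{s}^{\mathrm{pun}}$, every mover is always indifferent among all her actions, so $\mathbf{s}^{\mathrm{pun}}$ is a subgame perfect equilibrium in each subgame it reaches — and it is the availability of a \emph{second} terminating core player that keeps this credible against a putative punisher's own deviation.

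Now let $\mathbf{s}^{\mathbf{h}}$ prescribe, along the path to $\mathbf{h}$, the actions realising $\mathbf{h}$ coordinate by coordinate, and revert to $\mathbf{s}^{\mathrm{pun}}$ as soon as the realised history leaves that path; then $\mathbf{O}(\mathbf{s}^{\mathbf{h}})=\mathbf{h}$ by construction. For subgame perfection, take a mover $i$ at an on-path whole history of period $t\le t^*$. If $i$ is an outside player she moves only at period $t$, and by (b) any unilateral change leaves her value fixed up to $\doteq$, so she is best-responding. If $i$ is a core player, her on-path continuation value is that of $\mathbf{h}$ read from period $t$, namely $\mon{}{\tfrac{t^*-1}{\tau}}\cdot y_i$ by (a)--(b), while any deviation triggers $\mathbf{s}^{\mathrm{pun}}$ and ends the game within boundedly many further periods, leaving her continuation value $\mon{}{\tfrac{t-1}{\tau}}\cdot y_i$; since $t\le t^*$ and $y_i>0$, the deviation is not profitable. (Refusing to terminate when $\mathbf{h}$ so prescribes, terminating ``too early'', or terminating at the wrong node are all instances of this one comparison.) Hence each $\mathbf{s}^{\mathbf{h}}$ is a subgame perfect equilibrium with outcome $\mathbf{h}$, and letting $\mathbf{h}$ range over $\mathbf{Z}^\tau$ yields the asserted collection.

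The step I expect to be the main obstacle is the bookkeeping at \emph{macroscopic} (choice-class) deviation periods together with the self-enforcing character of $\mathbf{s}^{\mathrm{pun}}$: at such a $t$ one cannot dismiss the deviation as ``a single period'', so one must genuinely check that the tail of $\mathbf{h}$ still has the normal form (a) — hence limit-of-means contribution $\mon{}{\tfrac{t^*-1}{\tau}}\,y_i$ — and that $\mathbf{s}^{\mathrm{pun}}$ really does cap every player at $\doteq 0$ from any node, for which ``at least two core players can terminate in any period'' is used essentially (one terminator cannot be used to punish himself). The only other delicate point is verifying that ``the game ends within the current constituent game'' yields a block of periods of finite size, so that its $\circtop{m}$-measure is $\doteq 0$; this is immediate since each constituent game has a fixed finite length.
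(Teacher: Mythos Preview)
Your proposal is correct and follows essentially the same trigger-strategy approach as the paper: play along $\mathbf{h}$ on-path and revert to a terminating punishment off-path, then argue that any unilateral deviation can only shorten the realised run of $h_c$ and hence (since $y_i>0$) cannot raise the limit-of-means payoff. The only difference is cosmetic: you punish with the explicit ``everybody terminates'' profile $\mathbf{s}^{\mathrm{pun}}$ and carefully verify its self-enforcing character via the two-terminator hypothesis, whereas the paper punishes with the constituent-game equilibrium $s^*$ (which in the motivating centipede example coincides with immediate termination) and leaves the off-path credibility check implicit --- so your write-up is, if anything, more complete on that point.
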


\begin{proof}
  To realise $\mathbf{h}$ as an equilibrium path of a subgame perfect equilibrium, construct a strategy $\mathbf{s}^{\mathbf{h}}$ as follows:
  \begin{enumerate}[1)]
  \item each player chooses an action that constitutes $\mathbf{h}$ at the subgame $\mathbf{j}\subseteq \mathbf{h}$ until the period $|\mathbf{h}|$,
  \item every player who can terminate the game terminates at the subgame $\mathbf{j}\supset\mathbf{h}$, while the rest of the players do whatever they want, 
  \item each player $i$ follows $s_i^*$ at the subgame $\mathbf{j}$ which is neither a predecessor, $\mathbf{j}\not\subseteq\mathbf{h}$, nor a successor $\mathbf{j}\not\supseteq\mathbf{h}$, of $\mathbf{h}$.
  \end{enumerate}
  At each subgame extending from the history $\mathbf{h}$, $\mathbf{s}^{\mathbf{h}}$ establishes Nash equilibrium, since all deviations on the equilibrium path by a single player end up in changing the terminal whole histories to the one with at most 1 monad length longer, so that gains by deviation remain indiscernible amount.
  It is also true at each subgame preceeding the history $\mathbf{h}$,
  since any deviation on the equilibrium path leads only to shortenen the path and reduces its payoff.
\end{proof}

\begin{ex}[Centipede Game continued from Example \ref{dc}]\label{dc_lm}
  There are also new equilibria in the centipede games when viewed from the bird's eye view.
  An example is shown below:
  \begin{enumerate}[1)]
  \item Both players play ``R'' or ``r'' until the period $\gamma<\tau$ has come,
  \item and play ``D'' or ``d'' after the period $\gamma$ passed.
  \end{enumerate}
  The set of strategies is found to be a subgame perfect equilibrium of $\mathbf{\Gamma}^\tau$ directly by Theorem \ref{lm_SPNE}.

  It is also intuitively obvious.
  Since any deviation from the set of strategies yields only an indiscernible amount of gain for both players, there is no incentive to deviate.
  This implies that the set of strategies is a subgame perfect equilibrium.
\end{ex}

\section{Concluding Remarks}
The present paper proposes a framework that allows different forms of extended games, which have a repetitive structure in common, to be organised in a unified way, and thereby, specifies the conditions that the newly arising equilibria in these games must satisfy when they are extended beyond the finite horizon.
These new equilibria exist due to the nature of the number system constructed by AST which allows huge or extremely small numbers to be represented in appropriate forms.

One of the most striking features of these equilibria, resulting from the structure of infinity provided by AST, can be seen in the centipede games, exclusively with the payoffs given by the simple sum criterion.
Players in the centipede games behave cooperatively under this criterion, because at some point they have won infinite payoffs, and thus have lost interest in increasing their payoffs further.
These huge payoffs are obtained simply by neglecting the indiscernible amount relative to the total, and the simple sum criterion allows them to behave in this way.

These equilibria are so simple and natural that even players with little knowledge of game theory are observed to play these strategies.\footnote{These aspects are in consistent with many experimental data too. See, for example, a review by Krockow, Coleman and Pulford \cite{review} of the experimental works of centipede games.} 
However, both the discounted sum and the overtaking criteria destroy this sense of satiation, and the new equilibria realised by the simple sum criterion are all eliminated. 
While the discounted sum criterion ignores the distant future, the overtaking criterion cannot afford to overlook even tiny losses.
It is precisely this short-sightedness and greed that prevents the cooperative behavior achieved by the simple sum criterion from materialising under these criteria.\footnote{
These negative results 
remind us of the text by Balzac, the author who portrayed every possible aspect of human behavior, especially those related to lending and borrowing money, in which he summed up the greed of misers:
``Les avares ne croient point \`{a} une vie \`{a} venir, le pr\'{e}sent est tout pour eux'' [Misers have no belief in a future life; the present is their all in all] (Balzac \cite{eugenie}, p.275 [p.128]).
}

The function of the limit of means criterion is also of great interest.
In this criterion, the width of each pair of successive periods is considered to be indiscernibly short, so that the situation is presented as if the whole game were played continuously.
It allows us to model the behaivor of chain stores which ignore each individual action that causes tiny losses, but see the problem as a whole continuum and manage to make enormous profits in the end.

These equilibria are enabled because each core player cannot decide what to do in each period. 
They can only change their behavior after the certain periods have passed which are considered to be discernible from the period they are in by the indiscernibility equivalence $\circtop{=}$.
By introducing topologies in which a pair of indiscernibly close numbers is considered identical, the limit of means criterion allows players to see the whole problem through the bird's-eye view and to make decisions that are not influenced by tiny losses in indiscernibly short periods.
As can easily be seen, this property is also useful when dealing with dynamic problems that change in continuous time.

The present paper also attempts to view the strategic games as continuously repeated games at the same time.
The result shows that the mixed strategy equilibrium can be viewed as a subgame perfect equilibrium of the corresponding continuously repeated games, and can be said to provide a more understandable interpretation of mixed strategy.

By modifying the number system according to AST, the framework presented here allows us to greatly expand the range of social phenomena that can be explained by game theory.
Not only those caused by misers, but also those caused by everyday people, especially in a more natural and intuitive way.


\end{document}